\newcommand{\E}{\mathbb{E}}
\newcommand{\f}{\mathbf{f}}
\newcommand{\G}{\mathbf{G}}
\newcommand{\p}{\mathbf{p}}
\newcommand{\Q}{\mathbf{Q}}
\renewcommand{\r}{\mathbf{r}}
\newcommand{\R}{\mathbf{R}}
\newcommand{\s}{\mathbf{s}}
\renewcommand{\S}{\mathbf{S}}
\renewcommand{\P}{\mathbf{P}}
\renewcommand{\tilde}{\widetilde}
\newtheorem{theorem}{Theorem}[section]
\newtheorem{conjecture}[theorem]{Conjecture}
\newtheorem{definition}[theorem]{Definition}
\newtheorem{lemma}[theorem]{Lemma}
\newtheorem{proposition}[theorem]{Proposition}
\newtheorem{example}[theorem]{Example}
\newtheorem{corollary}[theorem]{Corollary}
\newtheorem{assumption}[theorem]{Assumption}
\newcommand{\ignore}[1]{}
\newcommand{\kibitz}[2]{\ifnum\Comments=1\textcolor{#1}{#2}\fi}
\newcommand{\sherry}[1]{\kibitz{red}{\noindent[Sherry: #1]}}
\newcommand{\yc}[1]{\kibitz{magenta}{\noindent[YC: #1]}}
\newcommand{\SR}[1]{{\kibitz{red}{[Sherry: #1]}}}
\newcommand{\fang}[1]{{\kibitz{blue}{\noindent[Fang: #1]}}}
\begin{document}

\title{The Limits of Multi-task Peer Prediction}

\author{
Shuran Zheng \\
Harvard University\\
{\normalsize shuran\_zheng@seas.harvard.edu}\\
\and
Fang-Yi Yu \\
Harvard University\\
{\normalsize fangyiyu@seas.harvard.edu}
\and
Yiling Chen \\
Harvard University\\
{\normalsize yiling@seas.harvard.edu}\\
}
\date{}


\maketitle
\begin{abstract}

Recent advances in multi-task peer prediction have greatly expanded our knowledge about the power of multi-task peer prediction mechanisms. Various mechanisms have been proposed in different settings to elicit different types of information. But we still lack understanding about when desirable mechanisms will exist for a multi-task peer prediction problem. In this work, we study the elicitability of multi-task peer prediction problems. 
We consider a designer who has certain knowledge about the underlying information structure and wants to elicit certain information from a group of participants. Our goal is to infer the possibility of having a desirable mechanism based on the primitives of the problem.

Our contribution is twofold. First, we provide a characterization of the elicitable multi-task peer prediction problems, assuming that the designer only uses \emph{scoring mechanisms}. Scoring mechanisms are the mechanisms that reward participants' reports for different tasks separately. The characterization uses a geometric approach based on the \emph{power diagram} characterization~\citep{lambert2009eliciting,frongillo2017geometric} in the single-task setting. For general mechanisms, we also give a necessary condition for a multi-task problem to be elicitable. 

Second, we consider the case when the designer aims to elicit some properties that are linear in the participant's posterior about the state of the world. We first show that in some cases, the designer basically can only elicit the posterior itself. We then look into the case when the designer aims to elicit the participants' posteriors. We give a necessary condition for the posterior to be elicitable. This condition implies that the mechanisms proposed by \citet{kong2018water} are already the best we can hope for in their setting, in the sense that their mechanisms can solve any problem instance that can possibly be elicitable.
\end{abstract}

\section{Introduction}
\sherry{opening paragraph that highlights the importance of peer prediction?}

\yc{Added the first paragraph and changed the second.}

\yc{I changed underlying distribution to underlying information structure. Not sure which one is better. In either case, it's unclear which distribution we are talking about.}

Peer prediction refers to a collection of incentive mechanisms
~\citep{MRZ05,prelec2004bayesian,jurca2008incentives,radanovic2013robust,radanovic2014incentives,witkowski2012peer,dasgupta2013crowdsourced,shnayder2016informed,schoenebeck2020learning,kong2019information,kong2018water,liu2020surrogate} 
that have been designed for the challenging setting where truthful information elicitation about some tasks is desired but the designer has no access to the ground truth (i.e. event outcomes) for incentive alignment. This setting is fundamental to many information elicitation applications such as peer grading, surveys, product reviews, and forecasting for long-term events. 

Recent advances in peer prediction have progressed from single-task peer prediction~\citep{MRZ05,prelec2004bayesian,jurca2008incentives,radanovic2013robust,radanovic2014incentives,witkowski2012peer}, where an agent's reward on a task is solely determined by how his report on the task relates to the reports made by peer agents on the same task, to multi-task peer prediction~\citep{dasgupta2013crowdsourced,shnayder2016informed,schoenebeck2020learning,kong2019information,kong2018water,liu2020surrogate}, where reports made by peer agents on other tasks can also be used in determining the agent's reward on the task. Multi-task peer prediction mechanisms can often achieve stronger incentive guarantees or require fewer assumptions on the underlying information structure than single-task peer prediction mechanisms, thanks to the additional cross-task information. While the quest for better peer prediction mechanisms is bound to continue, we attempt to understand the limits for designing multi-task peer prediction mechanisms in this work: When is it possible to design a desirable multi-task peer prediction mechanism?


To answer this question, the first thing we may need to understand is: what are the factors that will influence the elicitability of a problem?
Our first observation is that the existing multi-task mechanisms (as well as single-task mechanisms) all rely on certain knowledge about the underlying information structure or various assumptions about it. 
For example, \citet{shnayder2016informed} required the designer to partially know the correlation between the participants' signals; \citet{kong2018water} and \citet{liu2020surrogate} assumed that the participants' signals are mutually independent conditioned on the unknown ground truth. 
In addition to the various assumptions about the underlying distribution, we have also seen mechanisms that utilize the structure of the reported information. For example, \citet{kong2018water} proposed a mechanism that rewards the \emph{point-wise mutual information} between the participants' reports, which can only be computed when the reports are the participants' posteriors about the state. 
\sherry{Do you agree with this statement? A lot of works still consider minimal setting?}\yc{Don't understand why these papers are non-minimal settings. At least the second paper only elicits predictions and no additional information.} \sherry{Is a mechanism that elicits prediction a minimal mechanism?}\yc{A minimal mechanism elicits the information that it intends to elicit. I think it can be predictions or signals. Basically minimal mechanisms do not elicit other information.}\yc{Hmm, I see. The Frongilo and Witkowiski's paper indeed defines minimal mechanisms as mechanisms that only elicit signals. But I thought that was only because the peer prediction problem was original defined with the goal of eliciting signals. I couldn't find any reference to minimal mechanisms now. It's a term spiritually borrowed from mechanism design. I wonder whether we want to avoid using the term minimal in this paper.} \sherry{Yes, I think I've removed the term minimal when it may lead to ambiguity.}

 Our problem becomes more clear. Suppose there is a designer who has certain knowledge about the underlying information structure and she wants to elicit certain information from a group of participants, can we infer the possibility of having a desirable mechanism based on the primitives of the problem before trying to search for mechanisms? The answer to this question may also shed light on the design of new mechanisms. For certain information we want to elicit, what do we have to know about the underlying information structure? Based on our knowledge about the information structure, what kind of information can we possibly elicit?

This problem has been studied in single-task peer prediction if we consider 
mechanisms that only ask the participants for their signals. \citet{frongillo2017geometric} used a geometric perspective to prove that single-task peer prediction mechanisms that achieve strict truthfulness are equivalent to \emph{power diagrams}. Their result gives a necessary and sufficient condition for a designer with certain knowledge about the participants' posterior beliefs to be able to design a strictly truthful mechanism: the designer should be able to divide a participant' possible posteriors after seeing different signal realizations into different regions, and moreover, these regions must take a particular shape, that of a \emph{power diagram}. \yc{This paragraph still uses the term minimal.}\sherry{addressed}

But for multi-task peer prediction, little is known about the exact condition for the existence of strictly truthful mechanisms. 
As we will show by an example (at the beginning of Section~\ref{sec:mult}), it is possible for the designer to exploit the similarity between the tasks and elicit information that is not elicitable in the single-task setting, assuming that the designer has the same knowledge about the distribution. 
A natural thought one may have is to view a multi-task problem as a single-task problem in which a participant's report is a combination of reports for multiple tasks. The problem of directly converting a multi-task problem into a single task problem is that a report will have exponentially many possible values and the condition given by \citet{frongillo2017geometric} will involve power diagrams in  dimension that grows exponentially in the number of tasks, which may not lead to meaningful results. Even for a constant number of tasks, directly applying their method does not give an easily interpretable characterization.


In this work, we study the elicitability in multi-task peer prediction. For the definition of elicitability, we consider the most basic incentive guarantee that truth-telling is a strict Bayesian Nash equilibrium (BNE)\fang{add BNE}. We say that a peer prediction problem is elicitable if there exists a mechanism that guarantees that truth-telling is a strict BNE for any possible underlying information structure. We consider a peer prediction problem to have two primitives. The first one is the designer's knowledge or assumption about the underlying information structure. The knowledge may have different forms. In this work, we model the designer's knowledge by a set of possible underlying information structures. The designer knows that the underlying information structure must lie in this set, but she does not know which one is the true one. The second primitive is the information that the designer asks each participant to report. In this work, we consider the most general information which can be a function of the information structure and the participant's signal. \yc{I see why there is confusion with minimal mechanisms. If the principal intends to get the signal of an agent and asks for agents to only report her signal, the mechanism is minimal. If in addition to reporting the signal, the agent is asked to report other information (e.g. in BTS and robust BTS), the mechanism is not minimal. If the principal is interested in getting the mean prediction of an event (or some other function of the distribution), and asks agents to report just that, it's still a minimal mechanism.} \sherry{But why can't the designer just include all the information that is required into the information that she is interested? In BTS, why not just define a signal to be both the signal and the prediction? Then every mechanism will be a non-minimal mechanism?}\yc{Well, if we define signal and prediction together as signal, then that puts a huge constraint on what signal distribution is possible. Also, the point about minimal mechanisms is that whether the mechanisms only ask for what the mechanism designer cares about and no more.} 

Our contribution is twofold. First, we give a characterization of the elicitable multi-task peer prediction problems, assuming that the designer only uses \emph{scoring mechanisms}. Scoring mechanisms are the mechanisms that reward participants' reports for different tasks separately. To our knowledge, all the existing mechanisms that achieve strict truthfulness are scoring mechanisms.\footnote{The only mechanism that we know does not belong to scoring mechanisms is the Determinant-based Mutual Information mechanism proposed by~\citet{kong2020dominantly}. The mechanism is not strictly truthful because it cannot distinguish permutation strategies from truth-telling.} 
We show that a multi-task problem is elicitable if and only if the following two conditions are satisfied: (1) the designer should be able to separate a participant's possible posteriors after seeing different signal realizations using a power diagram, \emph{for any given marginal distribution of other participants' truthful reports}; (2) for different marginal distributions of other participants' truthful reports, the parameters of the power diagrams should be an affine function of 
the marginal distribution of other participants' truthful reports.\fang{add tensor power here}\SR{I feel that tensor power is too difficult to digest. Maybe we don't have to be that formal here. We didn't indicate whether this is the distribution for one task or many tasks anyway.}\fang{then we can add informally, and a pointer to the real statement}
For general mechanisms, we give a necessary condition for a multi-task problem to be elicitable. The necessary condition basically says that, given a joint distribution of the participants' reports without naming a participant $i$'s report, the designer should at least be able to label participant $i$'s report based on the distribution.

Second, we consider the case when the designer aims to elicit some properties that are linear in the participant's posterior about the state of the world. We first show that in some cases, the designer basically can only elicit the posterior itself. More specifically, we apply our characterization to the case when there are two participants with signals independent conditioning on the state. If the designer only uses scoring mechanisms, then the only elicitable linear properties of the posterior are the ones that are equivalent to the posterior, assuming that the designer is uncertain about the underlying distribution. We then look into the case when the designer elicits the participants' posteriors. We give a necessary condition for the posterior to be elicitable. This condition implies that the mechanisms proposed by \citet{kong2018water} are already the best we can hope for in their setting, in the sense that their mechanisms can work for any problem instance that can possibly be elicitable.

\subsection{Related work}
The elicitability of peer prediction problems has not received a lot of attention.
For the single-task peer prediction, \citet{frongillo2017geometric} use a geometric approach to study necessary and sufficient conditions for the existence of strictly truthful peer prediction mechanisms.  However, their characterizations are for single-task mechanisms that only collect agents' signals, but many single-task mechanisms elicit information other than or in addition to agents' signals or require relatively strong assumptions on the underlying information structure.  
\citet{zhang2014elicitability} also consider the existence of strictly truthful mechanisms.  They show stochastic relevance is a necessary condition even for general mechanisms.

Although the problem of elicitability has not been extensively investigated in the peer prediction literature. There is a vast literature on \emph{property elicitation}. We are not able to review all of the works in this area but point the readers to  \citep{frongillo2013eliciting} and the references therein. In property elicitation, the designer asks an agent to report a property of a probability distribution. The designer is able to observe a sample drawn from the distribution and then decide the payment based on the report and the sample. To a certain extent, a peer prediction problem can be viewed as a property elicitation problem in which the other participants' reports is a sample. But the problem is that it may not always be possible to represent the information that the designer wants to elicit as a property of the distribution of other participants' reports.

Finally, we review the existing literature on peer prediction, in both multi-task setting and single-task setting. Also see~\citep{faltings2017game} for a survey of additional results.
\paragraph{Multi-task setting}
The multi-task peer prediction problem was first independently introduced and studied by \citet{dasgupta2013crowdsourced} and \citet{Witkowski2013LearningTP}.  Agents are assigned a batch of a priori similar tasks which require each agents' private information to be a binary signal. Later works extend the setting to multiple-choice questions and design mechanisms that achieve various truthfulness guarantees (dominant truthful, informed truthful)~\citep{schoenebeck2020learning,kong2020dominantly, kong2019information,shnayder2016informed,dasgupta2013crowdsourced}. But none of them is strictly truthful on general distributions, because agents can always relabel their signals.   \citet{liu2020surrogate} design an approximated dominant truthful mechanism (also approximated strictly truthful) that uses surrogate loss functions as tools to correct for the mistakes in agents' reports. \citet{kong2018water} study the related goal for forecast elicitation.  All the above mechanisms~\citep{dasgupta2013crowdsourced,shnayder2016informed,schoenebeck2020learning,kong2019information,kong2018water,liu2020surrogate} are scoring mechanisms (Definition~\ref{def:scoring}) except the DMI mechanism by ~\citet{kong2020dominantly} and the VMI mechanisms by \citet{kong2021counting}.\fang{(notice) another yuqing's work}

\paragraph{Single-task setting}  \citet{MRZ05} introduced the original peer prediction mechanism, which is the first mechanism that has truth-telling as a strict Bayesian Nash equilibrium and does not need verification. However, their mechanism requires the full knowledge of the common prior. \citet{prelec2004bayesian} relaxes the full knowledge assumption and designs the first detail-free peer prediction mechanism---Bayesian truth serum (BTS). BTS requires that all agents' signals are symmetric and conditional independent given a latent state. Several other works study the same single-task setting as BTS and devise mechanisms that work on more general underlying information structures~\citep{jurca2008incentives,radanovic2013robust,radanovic2014incentives,witkowski2012peer,kong2016equilibrium, schoenebeck2020two}. 



\section{Problem Description}
Consider a designer who wants to elicit certain information about the state of the world $\omega\in \Omega$ from a group of participants. 
There are $n$ participants who receive private signals $s_1, \dots, s_n$ respectively with $s_1 \in \mathcal{S}_1, \dots, s_n \in \mathcal{S}_n$. We use $S_i$ to denote the random variable for participant $i$'s signal. The state of the world and the signals follow an unknown underlying distribution $\mu(\omega, s_1, \dots, s_n)$. The designer does not know the true underlying distribution $\mu$, but she may have some information about the structure of the distribution, which allows her to restrict $\mu$ to a set  $M \subseteq \Delta(\Omega \times \mathcal{S}_1 \times \cdots \times \mathcal{S}_n)$. 
We assume that the underlying distribution $\mu(\omega, s_1, \dots, s_n)$ is common knowledge for all the participants. But participant $i$ only observes the realization of his own signal $s_i$ and thus his posterior belief about the state and others' signals will be $\mu(\omega, \s_{-i}|s_i)$, where $\s_{-i}$ denote the signals of the participants other than $i$.

The designer's goal is to elicit certain information from the participants. Participant $i$ will be asked to report a function of his own signal $\r_i = f_{i,\mu}(s_i)$. In this work, we consider functions that are real vectors $\r_i \in \mathbb{R}^L$.  Note that 
the report function can possibly depend on $\mu$. For example, an extensively studied report function is the prediction of the state $\r_i = \mu(\omega|s_i)$; and the well known Bayesian Truth Serum~\citep{prelec2004bayesian} asks for the prediction of other people's signals $\r_i = \mu(\s_{-i}|s_i)$. \yc{What's the space for $r_i$? Can it be a vector? Are we restricting ourselves to minimal peer prediction? BTS is not minimal.} Throughout the work, we use $\r_i = f_{i,\mu}(s_i)$ to represent participant $i$'s \emph{truthful} report. We denote by $\R_i = f_{i,\mu}(S_i)$ the random variable for participant $i$'s truthful report and denote by $\mathcal{R}_i$ the range of the report function $f_{i,\mu}(s_i)$.

In multi-task peer prediction, the designer elicits information for $T>1$ i.i.d. tasks. More specifically, we have $$(\omega^{(1)}, S_1^{(1)}, \dots, S_n^{(1)}), \dots, (\omega^{(T)}, S_1^{(T)}, \dots, S_n^{(T)})\overset{i.i.d}\sim \mu(\omega, s_1, \dots, s_n)$$ where $(\omega^{(t)}, S_1^{(t)}, \dots, S_n^{(t)})$ indicates the state and the signals for task $t$.  The designer elicits the same information from a participant for all $T$ tasks, i.e., $\r_i^{(t)} = f_{i,\mu}(s_i^{(t)})$ for the same function $f_{i,\mu}(\cdot)$ across all the tasks. We denote by $\omega^{(1:T)}$, $\s_i^{(1:T)}$ and $\r_i^{(1:T)}$ the vector of the states, the vector of participant $i$'s signal realizations and the vector of participant $i$'s truthful reports for all $T$ tasks. 

The participants will get paid after reporting the information.
The payment is decided based on the reports across all $T$ tasks. 
\begin{definition}[Multi-task peer prediction mechanism]
A multi-task peer prediction mechanism asks the participants to report their private information $\r_i^{(1:T)}$ for all $T$ tasks. Then the payment to a participant $i$ is decided based on all the reports, denoted by $p_i(\tilde{\r}_1^{(1:T)}, \dots, \tilde{\r}_n^{(1:T)})$ when participant  $i$'s actual report is $\tilde{\r}_i^{(1:T)}$. 
\end{definition}

\fang{alternative:
\begin{definition}[Multi-task peer prediction mechanism]
A multi-task peer prediction mechanism asks the participants to report their private information for all $T$ tasks. Then the payment to a participant $i$ is decided based on all the reports, denoted by $p_i(\tilde{\r}_1^{(1:T)}, \dots, \tilde{\r}_n^{(1:T)})$ when each participant  $i$'s report is $\tilde{\r}_i^{(1:T)}$. 
\end{definition}}
\SR{My concern is that it may not be clear what is the information that the designer asks for.}\fang{that's fine.  I just seldom use the word 'assume' in definitions.}\SR{addressed}

In general, the payment rule for multiple tasks $\p(\tilde{\r}_1^{(1:T)}, \dots, \tilde{\r}_n^{(1:T)})$ can be very complicated. But in practice, we would prefer mechanisms that have succinct payment rules. In this work, we will consider a class of mechanisms that we call the \emph{scoring mechanisms}. 
\begin{definition}[Scoring mechanisms] \label{def:scoring}
	A \emph{scoring mechanism} assigns a payment to each of participant $i$'s report $\tilde{\r}_i^{(t)}$ by comparing it with other participants' report $\tilde{\r}_{-i}$. Formally, a scoring mechanism uses a payment rule that can be represented as follows:
	\begin{align}
		p_i(\tilde{\r}_i^{(1:T)},\ \tilde{\r}_{-i}^{(1:T)}) = \sum_{t=1}^T p_i^{(t)}(\tilde{\r}_i^{(t)}, \tilde{\r}_{-i}^{(1:T)}). 
	\end{align}
\end{definition}
The key feature of scoring mechanisms is that the payment is decided separately for each of participant's reports for different tasks. To our knowledge, all the existing mechanisms that achieve strict truthfulness belong to scoring mechanisms.\fang{except DMI}\SR{We have a footnote in page 3.}

The participants' goal is to maximize their own expected payoff.
 In multiple task peer prediction literature, it is always assumed that an agent's reporting strategy for task $t$ only depends on his signal for that task $s_i^{(t)}$ but not the signals for other tasks $s_i^{(-t)}$. 
So we define a randomized strategy of agent $i$ as follows.
\begin{definition}\label{def:strategy}
	A strategy of agent $i$ for a single task is a mapping $\sigma_i:\mathcal{S}_i \to \Delta(\mathcal{R}_i)$ that maps his observed signal $s_i$ for that task into a distribution of reports, so that when agent $i$ adopts strategy $\sigma_i$, he randomly draw a report $\tilde{\r}_i$ according to $\sigma_i(s_i)$ when the observed signal is $s_i$.
\end{definition}
We denote agent $i$'s strategy for task $t$ by $\sigma_i^{(t)}$.\fang{and assume the mapping $\sigma_i^{(t)}$ are mutually independent for all $i$ and $t$}\fang{not sure why do we use environment to define strategy for signal tasks instead of multi-tasks?}
Then we say a multiple-task mechanism is strictly truthful if truthfully reporting $\r_i^{(t)}$ for all tasks is a strict BNE.
\begin{definition}[Strict truthfulness] \label{def:truthful_multi}
A payment rule $\p(\tilde{\r}^{(1:T)})$ is strictly truthful for a distribution $\mu(\omega, s_1, \dots, s_n)$ if, assuming that the participants know $\mu(\omega, s_1, \dots, s_n)$,
 truthfully reporting $\r_i^{(1:T)}$  is a strict BNE, i.e., for any non-truthful strategy $\sigma_i^{(1:T)}$ with $\sigma_i^{(t)}(s_i^{(t)}) \neq f_{i,\mu}(s_i^{(t)})$ for some $s_i^{(t)}$,
$$
\E_\mu [p_i(\R_i^{(1:T)}, \R_{-i}^{(1:T)})] > \E_\mu [p_i(\sigma_i^{(1:T)}(\S_i^{(1:T)}), \R_{-i}^{(1:T)})].
$$
Here we abuse the notation that $\sigma_i^{(1:T)}(\S_i^{(1:T)})$ represents the vector $(\sigma_i^{(1)}(S_i^{(1)}), \dots, \sigma_i^{(T)}(S_i^{(T)}))$.
\end{definition}

Another commonly used assumption is that the participants use consistent strategies across all tasks. This assumption is usually justified by that the designer can randomly shuffle the tasks so that the participants cannot distinguish the tasks (e.g. see~\citep{shnayder2016informed}). 
\begin{assumption}
The participants use consistent strategies over all the tasks, i.e., $ \sigma_i^{(1)} = \sigma_i^{(2)} = \dots = \sigma_i^{(T)}$ for any participant $i$. 
\end{assumption}
Then we define strict truthfulness under consistent strategies as follows.
\begin{definition}[Strict truthfulness under consistent strategies] \label{def:truthful_multi_consist}
A payment rule $\p(\tilde{\r}^{(1:T)})$ is strictly truthful for a distribution $\mu(\omega, s_1, \dots, s_n)$ if, assuming that the participants know the underlying distribution $\mu(\omega, s_1, \dots, s_n)$,
 truthfully reporting $\r_i^{(1:T)}$  is a strict BNE, i.e., for all  $i$, strategy $\sigma_i$ with $\sigma_i(s_i) \neq f_{i,\mu}(s_i)$ for some $s_i$,
$$
\E_\mu [p_i(\R_i^{(1:T)}, \R_{-i}^{(1:T)})] > \E_\mu [p_i(\sigma_i(\S_i^{(1:T)}), \R_{-i}^{(1:T)})].
$$
Here we abuse the notation that $\sigma_i(\S_i^{(1:T)})$ represents the vector $(\sigma_i(S_i^{(1)}), \dots, \sigma_i(S_i^{(T)}))$.
\end{definition}

The definition of truthfulness is with respect to the single true underlying distribution $\mu$. For a designer who does not know $\mu$ but only knows that $\mu\in M$, we say that $\r$ is elicitable if there exists a mechanism that is strictly truthful for any $\mu \in M$.
\begin{definition}[Elicitability]
\label{def:elicitable}
A multi-task peer prediction problem $\langle \f, M\rangle$ with $\mathbf{f} = \{f_{i,\mu} \}_{(i,\mu)}$
is elicitable if there exists a payment rule $\p(\tilde{\r}^{(1:T)})$ that is strictly truthful for any possible underlying distribution $\mu\in M$. 
\end{definition}
If we assume that the participants use consistent strategies, the the definition of elicitability only requires strict truthfulness under consistent strategies.
\begin{definition}[Elicitability under consistent strategies]
\label{def:elicitable_multi}
A multi-task peer prediction problem $\langle \f, M\rangle$ with $\mathbf{f} = \{f_{i,\mu} \}_{(i,\mu)}$
is elicitable under consistent strategies if there exists a payment rule $\p(\tilde{\r}^{(1:T)})$ that is strictly truthful under consistent strategies for any possible underlying distribution $\mu\in M$.
\end{definition}

In this work, we mainly focus on multi-task peer prediction. Single-task peer prediction can be seen as a special case with $T=1$.

\section{Preliminary}\label{sec:pre}


In this section, we review a few important results from previous works and add some minor findings. First, we give the definition of \emph{power diagrams} and restate the characterization of elicitable problems in the single-task setting. Next, we discuss the correlated agreement mechanism from~\citep{shnayder2016informed} and the mechanism that elicits participants' prediction about the state from~\citep{kong2018water}.

\subsection{Characterization for single-task elicitability} \label{sec:single_elt}

\citet{frongillo2017geometric}  characterized the elicitability of the single-task problem using a geometric approach from the literature on property elicitation~\citep{lambert2008eliciting,lambert2009eliciting}. The basic idea is that the agents' possible posterior beliefs need to fall into the correct regions. The regions are described by \emph{power diagrams}. 

\begin{definition}[\citep{lambert2009eliciting,frongillo2017geometric}] \label{def:power_diag}
	A \emph{power diagram} in dimension $m$ with $K$ cells is a partitioning of $\Delta_m$ into $K$ sets called \emph{cells}, defined by a collection of $K$ $m$-dimensional points $\{\mathbf{v}^k\in \mathbb{R}^m: k\in [K]\}$ called \emph{sites} with associated \emph{weights} $\{w^k\in \mathbb{R}:k\in [K]\}$, given by
	$$
	\text{cell}(\mathbf{v}^k) = \left\{\mathbf{u} \in \mathbb{R}^m: \{k \} = \arg \min_{x \in [K]} \langle \mathbf{u}, \mathbf{v}^x \rangle - w^x \right\}.
	$$
	Here $\langle \mathbf{u}, \mathbf{v}^x \rangle$ represents the inner product of the two vectors.
	We call $\langle \mathbf{u}, \mathbf{v}^x \rangle - w^x$ the \emph{power distance} from $\mathbf{u}$ to site $\mathbf{v}^x$; thus, for every point $\mathbf{u}$ in cell$(\mathbf{v}^k)$, it holds that $\mathbf{v}^k$ is closer to $\mathbf{u}$ in power distance than any other site $\mathbf{v}^x$.
\end{definition}
Here we use a definition of power diagrams that is slightly different from the original definition in \citep{frongillo2017geometric}, which used $\Vert \mathbf{u} - \mathbf{v}^x \Vert^2 - w^x$ as the power distance. The two definitions are equivalent 
and can be transformed into one another by changing the value of $\mathbf{w}^x$. For our multi-task problem, we find it more convenient to use $\langle \mathbf{u}, \mathbf{v}^x \rangle - w^x$ as the power distance.

To give the characterization in \citet{frongillo2017geometric}, we define the following.
Let $Q_i(\r_i)$ be the set of agent $i$'s possible posterior belief of $\r_{-i}$ when his truthful report is $\r_i$.
\begin{align}
    Q_i(\r_i) = \{ \mu(\mathbf{r}_{-i}|s_i): \mu \in M, \ s_i \text{ satisfies } f_{i,\mu}(s_i) = \r_i \}. 
\end{align}
\begin{theorem}\label{thm:single_chr}
	A single-task elicitability problem $\langle \mathbf{f}, M\rangle$  is elicitable if and only if for each agent $i$, there exists a power diagram in dimension $|\mathcal{R}_{-i}|$ with $|\mathcal{R}_i|$ cells defined by $\{\mathbf{v}^{\r_i}\in \mathbb{R}^{|\mathcal{R}_{-i}|}: \r_i\in \mathcal{R}_i\}$ and $\{w^{\r_i}\in \mathbb{R}:\r_i \in \mathcal{R}_i\}$, such that each $Q_i(\r_i)$ falls into a distinct cell,
	$$
	Q_i(\r_i) \subseteq \text{cell}(\mathbf{v}^{\r_i}), \text{ for all } \r_i \in \mathcal{R}_i.  
	$$
\end{theorem}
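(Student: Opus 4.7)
The plan is to convert strict truthfulness into a statement about the argmin/argmax of an affine function on the simplex $\Delta_{|\mathcal{R}_{-i}|}$ and then read off the power-diagram structure directly. Fix agent $i$ and a candidate payment rule $p_i(\r_i,\r_{-i})$. For each $\r_i\in\mathcal{R}_i$, view the slice $p_i(\r_i,\cdot)$ as a vector $\mathbf{v}^{\r_i}\in\mathbb{R}^{|\mathcal{R}_{-i}|}$ indexed by $\r_{-i}$. Writing $\mathbf{u}=\mu(\r_{-i}\mid s_i)\in\Delta_{|\mathcal{R}_{-i}|}$, the expected payment from reporting $\r_i'$ at posterior $\mathbf{u}$ is exactly the inner product $\langle\mathbf{u},\mathbf{v}^{\r_i'}\rangle$. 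Truth-telling is a strict BNE iff, for every $\mathbf{u}\in Q_i(\r_i)$, the truthful label $\r_i$ is the \emph{unique} maximizer of $\r_i'\mapsto\langle\mathbf{u},\mathbf{v}^{\r_i'}\rangle$ over $\mathcal{R}_i$.

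For the ``only if'' direction, given a strictly truthful $p_i$ I would set sites $\tilde{\mathbf{v}}^{\r_i}:=-\mathbf{v}^{\r_i}$ and weights $w^{\r_i}:=0$. Then unique maximization of $\langle\mathbf{u},\mathbf{v}^{\r_i'}\rangle$ over $\r_i'$ is the same as unique minimization of $\langle\mathbf{u},\tilde{\mathbf{v}}^{\r_i'}\rangle-w^{\r_i'}$, which by Definition \ref{def:power_diag} is exactly the statement $\mathbf{u}\in\text{cell}(\tilde{\mathbf{v}}^{\r_i})$. Hence $Q_i(\r_i)\subseteq\text{cell}(\tilde{\mathbf{v}}^{\r_i})$ for every $\r_i$, as required. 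Note that distinct truthful reports are automatically placed in distinct cells because the power-diagram cells are disjoint by their uniqueness-of-argmin definition.

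For the ``if'' direction, given a power diagram with sites $\{\mathbf{v}^{\r_i}\}$ and weights $\{w^{\r_i}\}$ that isolate the $Q_i(\r_i)$'s, I would define the payment
\[
p_i(\r_i,\r_{-i})\;:=\;-\mathbf{v}^{\r_i}_{\r_{-i}}\;+\;w^{\r_i}.
\]
Because $\mathbf{u}\in\Delta_{|\mathcal{R}_{-i}|}$ sums to one, the $\r_{-i}$-constant term contributes exactly $w^{\r_i}$, so the expected payment for reporting $\r_i'$ at posterior $\mathbf{u}$ equals $w^{\r_i'}-\langle\mathbf{u},\mathbf{v}^{\r_i'}\rangle$. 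Its unique maximizer over $\r_i'$ is the unique minimizer of $\langle\mathbf{u},\mathbf{v}^{\r_i'}\rangle-w^{\r_i'}$, which by Definition \ref{def:power_diag} is the label of the cell containing $\mathbf{u}$. The assumption $Q_i(\r_i)\subseteq\text{cell}(\mathbf{v}^{\r_i})$ therefore makes truthful reporting the strictly best response for every $s_i$ and every $\mu\in M$.

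The argument is essentially a bookkeeping translation, so the only subtlety I expect is the sign/normalization convention: the paper's power-diagram definition uses $\arg\min$ of $\langle\mathbf{u},\mathbf{v}^x\rangle-w^x$, while expected utility naturally gives an $\arg\max$, and the constant $w^{\r_i}$ is realized only because $\mathbf{u}$ lies on the simplex. Both points are handled by the sign flip and the simplex-constant absorption above, so no genuinely hard step remains—this is why Theorem \ref{thm:single_chr} is presented as a preliminary and is attributed to \citet{frongillo2017geometric}, modulo the change from quadratic to linear power distance.
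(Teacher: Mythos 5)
Your proof is correct and takes essentially the same route as the paper: Theorem~\ref{thm:single_chr} is stated there as a cited result of \citet{frongillo2017geometric}, and the paper's own proof of its multi-task generalization (Theorem~\ref{thm:mult_suff}) uses exactly your translation---reading the payment slices $p_i(\r_i,\cdot)$ as negated sites with zero weights for necessity, and recovering a payment of the form $-\mathbf{v}^{\r_i}_{\r_{-i}}+w^{\r_i}$ (with the weight realized through the simplex constraint) for sufficiency---specialized to $T=1$. No gap remains; the sign-flip and weight-absorption subtleties you flag are handled just as in the paper.
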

The theorem gives the necessary and sufficient condition for a single-task problem to be elicitable. This result is given by~\citet{frongillo2017geometric} in the setting where the mechanism is minimal, i.e., the agents are asked to directly reveal their signals $\r_i = s_i$ and the support of the signals is finite. When the reports are arbitrary functions, it could be difficult to analyze the space $\Delta_{|\mathcal{R}_{-i}|}$, e.g. when the support of the reports is continuous. 

We find the following proposition to be useful in our exposition, and include a proof in the appendix. \fang{if we do not have supplementary material should we say full version instead?} Note that the result can be implied by Theorem~\ref{thm:single_chr} when report space is discrete, and we show it still holds when the support of $\r_{-i}$ is continuous.
\begin{proposition}[Robust stochastic relevancy]\label{thm:necessary_1}
A single-task problem $\langle \mathbf{f}, M\rangle$ is elicitable only if for any $i$ and $\r_i \neq \r_i' \in \mathcal{R}_i$, $Q_i(\r_i) \cap Q_i(\r_i') = \emptyset$.
\end{proposition}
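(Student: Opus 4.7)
The plan is to argue by contradiction and reduce the statement to the familiar fact that two reports with identical posteriors over peer reports cannot simultaneously be the unique maximizer of the same expected score. Suppose an elicitable mechanism $\p$ exists, yet there are $\r_i \neq \r_i'$ in $\mathcal{R}_i$ and a common belief $q \in Q_i(\r_i) \cap Q_i(\r_i')$.

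Unpacking $Q_i$, I would extract two witnesses: $q \in Q_i(\r_i)$ supplies some $\mu \in M$ and $s \in \mathcal{S}_i$ with $f_{i,\mu}(s) = \r_i$ and $\mu(\cdot \mid s) = q$ as a distribution on $\mathcal{R}_{-i}$; likewise $q \in Q_i(\r_i')$ supplies $\mu' \in M$ and $s' \in \mathcal{S}_i$ with $f_{i,\mu'}(s') = \r_i'$ and $\mu'(\cdot \mid s') = q$. The critical point is that, although the global distributions $\mu$ and $\mu'$ may differ, both induce the same conditional law $q$ on the random vector $\R_{-i}$ of peer reports given the corresponding signal.

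Next I would introduce the auxiliary function
\[
V(\tilde{\r}_i) \;:=\; \E_{\r_{-i}\sim q}\bigl[p_i(\tilde{\r}_i, \r_{-i})\bigr],
\]
which depends only on $q$ and the candidate report $\tilde{\r}_i$. Applying Definition~\ref{def:truthful_multi} (at $T=1$) to the pair $(\mu, s)$ yields $V(\r_i) > V(\tilde{\r}_i)$ for every $\tilde{\r}_i \neq \r_i$, in particular $V(\r_i) > V(\r_i')$. Applying it to $(\mu', s')$ analogously yields $V(\r_i') > V(\r_i)$. These two strict inequalities contradict each other, and the contradiction forces $Q_i(\r_i) \cap Q_i(\r_i') = \emptyset$.

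The only subtle step, and the reason the proposition is worth stating separately from Theorem~\ref{thm:single_chr}, is to make sense of $V$ when $\mathcal{R}_{-i}$ has continuous support. I would handle this by observing that the strict-truthfulness inequality in Definition~\ref{def:truthful_multi} already presupposes the existence of the relevant expectations under $\mu(\cdot\mid s)$ and $\mu'(\cdot\mid s')$; since both of these laws coincide with $q$, the integral defining $V$ is automatically well-defined on both sides of the argument. No appeal to power-diagram machinery or to finiteness of $\mathcal{R}_{-i}$ is needed, so the same proof covers the continuous case that Theorem~\ref{thm:single_chr} does not directly address.
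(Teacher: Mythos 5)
Your argument is correct and is essentially identical to the paper's own proof: both extract two distributions $\mu,\mu'\in M$ and signals $s,s'$ inducing the same posterior $q$ over $\r_{-i}$ but different truthful reports, then apply strict truthfulness twice to obtain the contradictory inequalities $\E_{\r_{-i}\sim q}[p_i(\r_i,\r_{-i})]>\E_{\r_{-i}\sim q}[p_i(\r_i',\r_{-i})]$ and its reverse. Your remark that no finiteness of $\mathcal{R}_{-i}$ is needed also matches the paper's stated motivation for proving the proposition directly rather than via Theorem~\ref{thm:single_chr}.
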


If $f$ is identity, i.e. $f(s_i) = s_i$, and $M = \{\mu\}$, the above condition implies stochastic relevancy.  Thus, we call the above characterization \emph{robust stochastic relevancy}. 

Finally, if the designer's knowledge about the information structure is accurate enough, it is possible to design a strictly truthful mechanism. We defer the details to Appendix~\ref{app:pre_ball}.

\subsection{Correlated agreement mechanism} \label{sec:prelim_ca}
\citet{shnayder2016informed} proposed the \emph{correlated agreement mechanism} for multi-task peer prediction. They considered the design of \emph{minimal mechanisms}, that is, mechanisms that ask the agents to directly report their signals, so we have 
$$\r_i = s_i, \ \forall i.$$ 
The correlated agreement mechanism only needs two participants $n=2$ and two tasks $T=2$. The mechanism requires the designer to know the correlation structure of signals, but not the full signal distribution. More specifically, define the \emph{Delta matrix} $\Delta$ to be a $|\mathcal{S}_1|\times |\mathcal{S}_2|$ matrix with entry in row $s_1$ and column $s_2$ equal to
$$
\Delta[s_1, s_2] = \mu(s_1, s_2) - \mu(s_1)\mu(s_2),
$$
where $\mu(s_1, s_2)$ is the joint distribution of the two participants' signals, and $\mu(s_1), \mu(s_2)$ are the marginal distributions of $s_1$ and $s_2$ respectively. The Delta matrix describes the correlation between different realized signal values. If an entry $\Delta[s_1, s_2]>0$, then we have $\mu(s_1|s_2)> \mu(s_1)$ and $\mu(s_2|s_1)>\mu(s_2)$, which means that seeing $s_2$ will increase participant $2$'s belief about seeing $s_1$, and seeing $s_1$ will increase participant $1$'s belief about seeing $s_2$, so the signal realizations $s_1$ and $s_2$ are positively correlated.  To ensure strict truthfulness,\fang{add strict truthfulness here}  the CA mechanism requires the designer to know the sign of each entry of the Delta matrix, denoted by Sign$(\Delta)$, which means that the designer needs to know for each pair of signal realizations whether they are positively correlated or negatively correlated. The payment of the CA mechanism is designed as follows.
\begin{definition}[Correlated agreement mechanism~\citep{shnayder2016informed}]
	The correlated agreement mechanism asks the two participants to report their signals for two tasks. The payment to participant $i\in\{1,2\}$ for task $t\in\{1,2\}$ is
	$$
	p_i^{(t)}(\s^{(1:T)}) = \text{Sign}(\Delta[s_i^{(t)},s_{-i}^{(t)}]) - \text{Sign}(\Delta[s_i^{(t)}, s_{-i}^{(-t)}]),
	$$
	where $\text{Sign}(\Delta[x,y])$ represents the sign of the entry in row $x$ and column $y$ of matrix $\Delta$. 
\end{definition}
The above definition is slightly different from the original definition in~\citep{shnayder2016informed}, but they are equivalent in the sense of elicitability.
The CA mechanism will be strictly truthful if the following condition holds.
\begin{theorem}[\citet{shnayder2016informed}] \label{thm:ca_suff}
	If the matrix Sign$(\Delta)$ does not have two identical rows and it does not have two identical columns, the correlated agreement mechanism is strictly truthful.
\end{theorem}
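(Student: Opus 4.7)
The plan is to unfold the expected payment under any reporting strategy, show truth-telling is a weak maximizer via an entry-wise inequality, and then lift weak to strict using the distinctness hypotheses. Fix agent $i$ and assume the other agent reports truthfully, with agent $i$ playing some (possibly randomized) strategy $\sigma_i$. Since the two tasks are i.i.d., in $p_i^{(t)}$ the ``matched'' term $\text{Sign}(\Delta[\tilde r_i^{(t)}, s_{-i}^{(t)}])$ draws its inputs from the task-$t$ joint while the ``bonus'' term $\text{Sign}(\Delta[\tilde r_i^{(t)}, s_{-i}^{(-t)}])$ draws from the product of marginals. Subtracting and using $\Delta[s_i,s_{-i}] = \mu(s_i,s_{-i})-\mu(s_i)\mu(s_{-i})$ collapses the expected per-task payment to
\[
\mathbb{E}[p_i^{(t)}] \;=\; \sum_{s_i,s_{-i}}\Delta[s_i,s_{-i}]\,\mathbb{E}_{r\sim\sigma_i(s_i)}\bigl[\text{Sign}(\Delta[r,s_{-i}])\bigr] \;=\; \sum_{s_i}\mathbb{E}_{r\sim\sigma_i(s_i)}\bigl\langle \Delta[s_i,\cdot],\;\text{Sign}(\Delta[r,\cdot])\bigr\rangle,
\]
and truth-telling $\sigma_i(s_i)=s_i$ attains the value $\sum_{s_i}\|\Delta[s_i,\cdot]\|_1$.

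The entry-wise bound $x\cdot\text{Sign}(y)\leq|x|$ immediately gives $\langle\Delta[s_i,\cdot],\text{Sign}(\Delta[r,\cdot])\rangle\leq\|\Delta[s_i,\cdot]\|_1$ for every $r$, so truth-telling is a (weak) best response; this takes care of weak truthfulness in both Definitions~\ref{def:truthful_multi} and~\ref{def:truthful_multi_consist}. For strict best response, if $\sigma_i$ is non-truthful then some signal $s_i$ with $\mu(s_i)>0$ puts positive weight on some $r\neq s_i$, and because the rows of $\text{Sign}(\Delta)$ are pairwise distinct, rows $s_i$ and $r$ disagree at some coordinate $s_{-i}^\star$. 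Whenever both entries at $s_{-i}^\star$ are nonzero and opposite, the entry-wise bound becomes strict at that coordinate, producing a strict row-wise loss in the identity above. Summing over rows on which $\sigma_i$ deviates then yields $\mathbb{E}_\mu[p_i(\mathbf R_i^{(1:T)},\mathbf R_{-i}^{(1:T)})] > \mathbb{E}_\mu[p_i(\sigma_i(\mathbf S_i^{(1:T)}),\mathbf R_{-i}^{(1:T)})]$. Transposing $\Delta$ and invoking the ``no two identical columns'' hypothesis repeats the argument for agent $-i$.

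The main obstacle is precisely the subcase of the previous step in which rows $s_i$ and $r$ disagree only at coordinates where $\Delta[s_i,\cdot]$ vanishes, because distinctness of sign-rows alone does not prevent one row's support from being contained in another's; such an ``extension'' pattern produces a non-truthful relabeling that is observationally equivalent to truth-telling. Closing this gap is where the row and column distinctness hypotheses on $\text{Sign}(\Delta)$ must be used jointly, to rule out the extension configurations that would otherwise break strict BNE.
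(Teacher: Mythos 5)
Your derivation of the per-task identity $\E[p_i^{(t)}]=\sum_{s_i}\langle\Delta[s_i,\cdot],\E_{r\sim\sigma_i(s_i)}[\mathrm{Sign}(\Delta[r,\cdot])]\rangle$ and the resulting weak optimality of truth-telling is correct, and it is the standard route (the paper itself does not prove this theorem --- it imports it from \citet{shnayder2016informed} --- so this is the argument one would expect). The problem is the strictness step, and you have named the gap yourself but not closed it: you only get a strict per-coordinate loss at a column $s_{-i}^{\star}$ where $\Delta[s_i,s_{-i}^{\star}]\neq 0$ and the signs disagree, whereas ``rows of $\mathrm{Sign}(\Delta)$ are pairwise distinct'' only guarantees disagreement at \emph{some} column, possibly one where $\Delta[s_i,\cdot]$ vanishes. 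Your hope that the row and column hypotheses ``used jointly'' rule out this extension pattern is not realizable: the two hypotheses together still do not exclude it. Concretely, take three signals per agent, uniform marginals, and $\mu(s_1,s_2)=\tfrac19+\Delta[s_1,s_2]$ with
\begin{equation*}
\Delta=\begin{bmatrix}0.01 & -0.01 & 0\\ 0.005 & -0.015 & 0.01\\ -0.015 & 0.025 & -0.01\end{bmatrix},\qquad
\mathrm{Sign}(\Delta)=\begin{bmatrix}+ & - & 0\\ + & - & +\\ - & + & -\end{bmatrix}.
\end{equation*}
All rows and all columns of $\mathrm{Sign}(\Delta)$ are pairwise distinct, yet the deviation in which agent $1$ reports $2$ upon seeing signal $1$ (and is otherwise truthful) changes the expected payment by $\sum_{s_2}\Delta[1,s_2]\bigl(\mathrm{Sign}(\Delta[1,s_2])-\mathrm{Sign}(\Delta[2,s_2])\bigr)=0$, because rows $1$ and $2$ agree in sign on the support of row $1$. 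So truth-telling is not a strict best response, and the same computation goes through if $\mathrm{Sign}$ is replaced by the $0/1$ indicator of positivity.

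What this tells you is that the missing step is not a routine detail: with zero entries in $\Delta$ permitted, the hypotheses as literally stated in the theorem do not imply strictness, so no completion of your argument can avoid strengthening them. The condition your own equality analysis isolates is the right one: for every ordered pair $s_i\neq r$, the sign pattern of row $r$ must disagree with that of row $s_i$ somewhere \emph{on the support of row $s_i$} (and analogously for columns, for the other agent). Distinctness of rows and columns of $\mathrm{Sign}(\Delta)$ is equivalent to this exactly when $\Delta$ has no zero entries, which is the regime in which your proof closes and in which the cited result should be read; to be faithful to \citet{shnayder2016informed} one either adds that nondegeneracy assumption or works with their exact definitions. So: the first two thirds of your proposal are sound and standard, but the strictness claim as you left it is a genuine gap, and the final sentence asserting that the two distinctness hypotheses jointly rescue it is false as stated.
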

\citet{shnayder2016informed} also discussed other incentive properties. In this work, we only consider the strict truthfulness defined in Definition~\ref{def:truthful_multi_consist}, which is equivalent to the definition of strict properness (Definition 2.6) in their work.

\subsection{Elicit predictions} \label{sec:prelim_posterior}
\citet{kong2018water} proposed a mechanism that elicits the participants' posterior about the state $\mu(\omega|s_i)$ in both single-task and multiple-task settings, when the prior $\mu(\omega)$ is known to the designer and the participants' signals are independent conditioning on the state $\omega$, that is,
$$
\mu(\omega, s_1, \dots, s_n) = \mu(\omega)\mu(s_1|\omega)\cdots \mu(s_n|\omega), \ \forall \omega, s_1, \dots, s_n.
$$ 
 \citet{chen2020truthful} further give a sufficient condition for the mechanisms to be strictly truthful. For a distribution $\mu$, define  $P^\mu_i$ to be a $|\mathcal{S}_{-i}|\times |\Omega|$ matrix with entry in row $\s_{-i}$ and column $\omega$ equal to  $\mu(\s_{-i}|\omega)$. Then we have the follows.

\begin{lemma}[\citep{kong2018water,chen2020truthful}]
Consider a single-task/multi-task problem $\langle \f, M \rangle$ with $\r_i = \mu(\omega|s_i)$. Suppose the prior $\mu(\omega)$ is known to the designer and the participants' signals are independent conditioning on the state $\omega$. If we further have rank$(P^\mu_i)=|\Omega|$ for all $\mu \in M$ and $i\in [n]$, then $\r_i = \mu(\omega|s_i)$ is elicitable by the following payment rule for every single task
$$
p_i(\r_i, \r_{-i}) = \log \left( \sum_{\omega\in \Omega} \frac{\r_i(\omega)g(\r_{-i}, \omega)}{\mu(\omega)} \right),
$$
with
$$
g(\r_{-i}, \omega) = \frac{1}{A(\r_{-i})} \cdot \frac{\prod_{j\neq i} \r_j(\omega)}{\mu(\omega)^{n-1}},
$$
where $A(\r_{-i})$ is a normalization term so that $\sum_\omega g(\r_{-i},\omega) = 1$.
\end{lemma}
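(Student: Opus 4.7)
The plan is to interpret the payment as a log scoring rule for $\mu(\omega\mid s_i)$ in which the unobservable state is replaced by the observable peer reports through the substitution encoded in $g$. First I would evaluate $g(\r_{-i},\omega)$ on truthful peer reports $\r_j = \mu(\omega\mid s_j)$: applying Bayes' rule $\r_j(\omega) = \mu(\omega)\mu(s_j\mid\omega)/\mu(s_j)$ together with conditional independence $\mu(\s_{-i}\mid\omega) = \prod_{j\neq i}\mu(s_j\mid\omega)$, the factors of $\mu(\omega)$ in $\prod_{j\neq i}\r_j(\omega)$ interact with those in the denominator of $g$ so that, once the remaining factor of $\mu(\omega)$ from the payment and the normaliser $A(\r_{-i})$ are absorbed, $g(\r_{-i},\omega)/\mu(\omega)$ coincides with $\mu(\s_{-i}\mid\omega)/\mu(\s_{-i})$---equivalently, $g(\r_{-i},\omega)$ acts as the posterior $\mu(\omega\mid\s_{-i})$ implied by the other agents' reports.

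With this identification, the payment for a single task reduces, up to an additive term independent of $\tilde{\r}_i$, to $\log \tilde{\nu}_i(\s_{-i})$, where $\tilde{\nu}_i(\s_{-i}) := \sum_\omega \tilde{\r}_i(\omega)\,\mu(\s_{-i}\mid\omega)$ is the predictive distribution over peer signals that the report $\tilde{\r}_i$ induces; note $\sum_{\s_{-i}}\tilde{\nu}_i(\s_{-i}) = 1$, so $\tilde{\nu}_i$ is a bona fide probability measure on $\mathcal{S}_{-i}$. Taking expectation conditional on $s_i$ and using conditional independence to obtain $\mu(\s_{-i}\mid s_i) = \sum_\omega \mu(\omega\mid s_i)\mu(\s_{-i}\mid\omega)$, the expected payment equals, modulo additive constants, $-\mathrm{KL}\!\bigl(\mu(\cdot\mid s_i)\,\|\,\tilde{\nu}_i\bigr) - H\!\bigl(\mu(\cdot\mid s_i)\bigr)$, which by Gibbs' inequality is maximised exactly when $\tilde{\nu}_i(\s_{-i}) = \mu(\s_{-i}\mid s_i)$ for every $\s_{-i}$.

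The condition $\tilde{\nu}_i = \mu(\cdot\mid s_i)$ is the linear system $P^\mu_i\bigl(\tilde{\r}_i - \mu(\cdot\mid s_i)\bigr) = 0$, and the hypothesis $\mathrm{rank}(P^\mu_i) = |\Omega|$ says this linear map has trivial kernel, forcing $\tilde{\r}_i = \mu(\cdot\mid s_i)$ and establishing strict truthfulness in the single-task case. The multi-task case then follows immediately: the mechanism is a scoring mechanism (Definition~\ref{def:scoring}), so its total payment decomposes additively across tasks, agent $i$'s report for task $t$ enters only the $t$-th summand, and per-task strict optimality of truth-telling aggregates to strict truthfulness of the full multi-task mechanism. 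The main obstacle I foresee lies in the first step: the bookkeeping around the powers of $\mu(\omega)$ in $g$ and $A(\r_{-i})$, together with the extra $1/\mu(\omega)$ in the payment, must be executed exactly, so that the argument of the logarithm genuinely reduces to $\log \tilde{\nu}_i(\s_{-i})$ plus a term independent of $\tilde{\r}_i$---only then does Gibbs' inequality deliver strict truthfulness in the clean form stated.
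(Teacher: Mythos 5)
Your overall route is the standard proof of this result (which the paper itself does not re-prove; the lemma is imported from \citet{kong2018water} and \citet{chen2020truthful}): identify $g(\r_{-i},\cdot)$ under truthful peer reports with the aggregate posterior $\mu(\cdot\mid \s_{-i})$, so that the payment becomes $\log\bigl(\sum_\omega \tilde{\r}_i(\omega)\mu(\s_{-i}\mid\omega)\bigr)$ plus a report-independent term, then invoke Gibbs' inequality to pin down the induced predictive distribution $\nu_{\tilde\r_i}=\P^\mu_i\tilde\r_i$ over $\s_{-i}$, and use $\mathrm{rank}(\P^\mu_i)=|\Omega|$ to conclude $\tilde\r_i=\mu(\cdot\mid s_i)$; the multi-task case indeed follows from the per-task additive structure. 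The Gibbs step, the rank/injectivity step, and the aggregation across tasks are all correct as you wrote them.

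The step that fails is exactly the bookkeeping you flagged, and it fails for the payment \emph{as printed in the lemma}. With truthful peers, $\prod_{j\neq i}\r_j(\omega)=\mu(\omega)^{\,n-1}\prod_{j\neq i}\mu(s_j\mid\omega)\big/\prod_{j\neq i}\mu(s_j)$, so dividing by $\mu(\omega)^{\,n-1}$ and normalizing over $\omega$ makes $g(\r_{-i},\omega)=\mu(\s_{-i}\mid\omega)\big/\sum_{\omega'}\mu(\s_{-i}\mid\omega')$, i.e.\ the \emph{normalized likelihood}, not the posterior $\mu(\omega\mid\s_{-i})\propto\mu(\omega)\mu(\s_{-i}\mid\omega)$. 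Hence $g(\r_{-i},\omega)/\mu(\omega)$ does \emph{not} equal $\mu(\s_{-i}\mid\omega)/\mu(\s_{-i})$, and the argument of the logarithm is $\sum_\omega \tilde{\r}_i(\omega)\mu(\s_{-i}\mid\omega)/\mu(\omega)$ up to a report-independent factor, which is not your $\tilde\nu_i(\s_{-i})$ (it does not even sum to one over $\s_{-i}$), so Gibbs' inequality does not apply. This is not a repairable presentation issue in your write-up alone: for the formula as printed, truthfulness genuinely fails once the prior is non-uniform. For instance, with $n=2$, $\Omega=\{0,1\}$, $\mu(\omega{=}1)=0.8$, $\mu(s_1{=}1\mid\omega{=}1)=0.7$, $\mu(s_1{=}1\mid\omega{=}0)=0.3$, $\mu(s_2{=}1\mid\omega{=}1)=0.9$, $\mu(s_2{=}1\mid\omega{=}0)=0.2$, an agent with $s_1=1$ (true posterior $\approx(0.10,0.90)$) gets a strictly higher expected payment from reporting $(1,0)$ than from reporting truthfully. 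Your cancellation is exactly right for the mechanism of the cited works, in which the peers' reports are aggregated into the posterior, i.e.\ $g(\r_{-i},\omega)\propto\prod_{j\neq i}\r_j(\omega)\big/\mu(\omega)^{\,n-2}$ (equivalently, keep the printed $g$ but drop the $1/\mu(\omega)$ inside the payment); the exponent $n-1$ together with the extra $1/\mu(\omega)$ divides out the prior twice. So either work with the corrected form of $g$ and your proof goes through verbatim, or, if you insist on the printed formula, the first step (and the lemma itself as literally stated) cannot be established.
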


\section{Multiple-task Elicitability} \label{sec:mult}

As introduced in Section~\ref{sec:single_elt}, if the designer only collects reports for one task, \citet{frongillo2017geometric} showed that a single-task peer prediction problem is elicitable if and only if each agent's posteriors can be fitted into a power diagram. But when the designer has multiple i.i.d. tasks, it is possible for the designer to exploit the similarity between the tasks and elicit information that is not elicitable in the single-task framework. For example, \citet{shnayder2016informed} showed that the Dasgupta-Ghosh mechanism \citep{dasgupta2013crowdsourced} can elicit $r_i=s_i$ when $n=2$ and the signals are categorical, which means that when an agent sees a signal, all other signals become less likely than their prior probability, i.e.,
$$
\mu(s_2 = y|s_1 = x) < \mu(s_2 = y), \forall x\neq y.
$$
 But the categorical condition clearly does not guarantee the \emph{robust stochastic relevance} (Corollary~\ref{thm:necessary_1}) without the knowledge about the marginal distribution, which is the necessary condition for a problem to be elicitable in the single-task framework. Therefore we need stronger conditions for the elicitability of multi-task problems.\fang{do we have proof or it's trivial?}

In this section, we first give a necessary and sufficient condition for a multi-task problem to be elicitable if the designer only uses \emph{scoring mechanisms}. This characterization holds when $\mathcal{R}_i$'s are finite sets, i.e., there are finitely many possible values of a participant's report. We show how to use our characterization in the setting of the CA mechanisms (Section~\ref{sec:prelim_ca}). For the general case when $\mathcal{R}_i$'s can be infinitely large, we provide necessary conditions for a multi-task problem to be elicitable. One of the necessary conditions will be the key tool that we use to obtain the results  in Section~\ref{sec:linear}. Second, for general mechanisms, we give a necessary condition for  a multi-task problem to be elicitable, assuming that the participants use consistent strategies. 

\subsection{Scoring mechanisms}

The challenge of studying the elicitability of multi-task problems is largely due to the complexity of the payment rule $\p(\r^{(1:T)})$. The payment rule $\p(\r^{(1:T)})$ can potentially be an extremely complicated function, which may not even be efficiently computable as $\r^{(1:T)}$ has exponentially many possible values. But in practice, such payments are unlikely to be appealing because of the implementation difficulty as well as the lack of transparency. So in this section, we restrict our attention to a smaller class of mechanisms: the scoring mechanisms (Definition~\ref{def:scoring}). The scoring mechanisms pay each of a participant's reports separately. The payment to a report is decided by comparing it with other participants' reports across all the tasks. To our knowledge, all of the existing strictly truthful mechanisms belong to scoring mechanisms. 

Before giving our main results, we first show that when we focus on the elicitability by scoring mechanisms, it does not matter whether the participants' strategies will be consistent or not.
\begin{theorem}\label{thm:equiv}
	When the designer only uses scoring mechanisms, a multi-task problem $\langle \f, M \rangle$ is elicitable if and only if it is elicitable under consistent strategies.
\end{theorem}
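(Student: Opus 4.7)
The ``only if'' direction is immediate: consistent strategies are a subset of general strategies, so if some scoring mechanism makes truth a strict BNE against every inconsistent deviation, it certainly does so against the consistent ones. The interesting direction is the converse. The plan is not to show that the \emph{same} mechanism works, but to take a scoring mechanism $p$ that is strictly truthful under consistent strategies and \emph{symmetrize} it across tasks to produce a new scoring mechanism $\hat{p}$ that is strictly truthful under all strategies. Concretely, I would set
$$\hat{p}_i(\tilde{\r}^{(1:T)}) = \frac{1}{T!}\sum_{\pi\in S_T} p_i(\tilde{\r}^{(\pi(1:T))}),$$
where $\tilde{\r}^{(\pi(1:T))} = (\tilde{\r}^{(\pi(1))},\dots,\tilde{\r}^{(\pi(T))})$ relabels all participants' reports by the same task permutation $\pi$. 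Substituting the scoring decomposition of $p$ and re-indexing the inner sum by $s=\pi(t)$ exhibits $\hat{p}$ itself as a scoring mechanism with
$\hat{p}_i^{(s)}(\tilde{\r}_i^{(s)},\tilde{\r}_{-i}^{(1:T)})=\tfrac{1}{T!}\sum_\pi p_i^{(\pi^{-1}(s))}(\tilde{\r}_i^{(s)},\tilde{\r}_{-i}^{(\pi(1:T))})$.

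The core step is a computation that turns an inconsistent deviation under $\hat{p}$ into a consistent one under $p$. For a fixed participant $i$ (others truthful), define the per-task expected payment
$V_i^{(t)}(\sigma) = \E[p_i^{(t)}(\sigma(S_i^{(t)}),\R_{-i}^{(1:T)})]$.
Because $p$ is a scoring mechanism and strategies are task-wise, the expected payment of $p$ under any strategy profile $(\sigma^{(1)},\dots,\sigma^{(T)})$ is $\sum_t V_i^{(t)}(\sigma^{(t)})$. Since tasks are i.i.d. and the permuted variables $S^{(\pi(1))},\dots,S^{(\pi(T))}$ have the same joint distribution as $S^{(1)},\dots,S^{(T)}$, the expected payment of $p_i(\tilde{\r}^{(\pi(1:T))})$ under profile $(\sigma^{(1)},\dots,\sigma^{(T)})$ equals $\sum_t V_i^{(t)}(\sigma^{(\pi(t))})$. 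Averaging over $\pi$ and using that each $V_i^{(t)}$ is linear in the distribution-valued strategy $\sigma$, I get
$$\E[\hat{p}_i\mid (\sigma^{(1)},\dots,\sigma^{(T)})] = \sum_t V_i^{(t)}(\bar\sigma), \qquad \bar\sigma = \tfrac{1}{T}\sum_{t'}\sigma^{(t')}.$$
Thus, under $\hat{p}$, any (possibly inconsistent) deviation has the same expected payment as the consistent deviation $\bar\sigma$ under $p$.

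The conclusion then follows from two observations. First, strict truthfulness of $p$ under consistent strategies gives $\sum_t V_i^{(t)}(\bar\sigma) < \sum_t V_i^{(t)}(f_{i,\mu})$ whenever $\bar\sigma\neq f_{i,\mu}$. Second, because truth-telling $f_{i,\mu}$ is a \emph{deterministic} strategy (each signal maps to a point mass), a uniform mixture of strategies equals $f_{i,\mu}$ only if every $\sigma^{(t)}$ is already $f_{i,\mu}$. Hence any genuine deviation---even a single task with $\sigma^{(t)}(s_i)\neq f_{i,\mu}(s_i)$---forces $\bar\sigma\neq f_{i,\mu}$ and therefore a strict drop in expected payment under $\hat{p}$. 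This yields strict truthfulness of $\hat{p}$ against all inconsistent strategies, completing the ``if'' direction. The main obstacle is the second paragraph: getting the permutation--expectation rearrangement right and justifying the linearity of $V_i^{(t)}$ cleanly; once that identity is established, the rest is bookkeeping plus the determinism observation.
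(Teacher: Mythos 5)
Your proposal is correct and follows essentially the same route as the paper's proof of Proposition~\ref{prop:consistent}: symmetrize the scoring mechanism over task permutations, show that any (possibly inconsistent) unilateral deviation under the symmetrized payment has the same expected payoff as the averaged consistent strategy $\bar\sigma=\frac{1}{T}\sum_t\sigma^{(t)}$ under the original payment, and conclude strictness because the average of strategies containing a non-truthful component is itself non-truthful. The permutation--expectation rearrangement and the linearity of the per-task expected payment in the (distribution-valued) strategy are exactly the ingredients the paper uses, so no gap remains.
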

The proof of Theorem~\ref{thm:equiv} can be found in Appendix~\ref{app:mult}. In the rest of this section, we just assume that the participants can use non-consistent  strategies. The characterization of elicitability under consistent strategies is just the same.

\subsubsection{Characterization of elicitable multi-task problems}
We now give the characterization of elicitable multi-task problems when all $\mathcal{R}_i$'s are finite. 
Our first observation is that if a multi-task problem $\langle \f, M\rangle$ is elicitable by a scoring mechanism, then for each participant $i$, his posteriors should fall into correct cells of a power diagram \emph{for any given marginal distribution of other participants' truthful reports}. 
To show this formally, we introduce some notations.
Consider an agent $i$ and a given marginal distribution of other agents' truthful reports for a single task $\mu(\r_{-i})$.  Define $M_{\mu(\r_{-i})}$ to be the set of all distributions  that has marginal distribution of $\r_{-i}$ equal to $\mu(\r_{-i})$,
$$
M_{\mu(\r_{-i})} = \{ \mu' \in M: \mu'(\r_{-i}) = \mu(\r_{-i})\}
$$
 Also define $Q_i(\r_i, \mu(\r_{-i}))$ to be the set of participant $i$'s possible posteriors about $\r_{-i}$ when participant $i$'s truthful report is $\r_i$ and the marginal distribution of $\r_{-i}$ is  $\mu(\r_{-i})$,
$$
Q_i(\r_i, \mu(\r_{-i}))=\{ \mu'(\r_{-i}|s_i): \mu'\in M_{\mu(\r_{-i})},\ s_i \text{ satisfies } f_{i,\mu}(s_i) = \r_i\}.
$$ 
Then a necessary condition for a multi-task problem $\langle \f, M\rangle$ to be elicitable is that for any given $\mu(\r_{-i})$, there exists a power diagram that divides $Q_i(\r_i, \mu(\r_{-i}))$ for different $\r_i$ into different cells. We call this the power diagram constraint for given marginal distributions.
\begin{definition} \label{def:mult_nec}
	A multi-task problem $\langle \f, M\rangle$ with finite-size $\mathcal{R}_i$'s  satisfies the \emph{power diagram constraint for given marginal distributions} if for all $i$ and $\mu(\r_{-i})$, $\{ Q_i(\r_i, \mu(\r_{-i}))\}_{\r_i \in \mathcal{R}_i}$ can be fitted into a power diagram, which means that there exists a power diagram in dimension $|\mathcal{R}_{-i}|$ with $|\mathcal{R}_i|$ cells defined by $\{\mathbf{v}^{\r_i}\in \mathbb{R}^{|\mathcal{R}_{-i}|}: \r_i\in \mathcal{R}_i\}$ with associated weights $\{w^{\r_i}\in \mathbb{R}:\r_i\in \mathcal{R}_i\}$, such that each $Q_i(\r_i, \mu(\r_{-i}))$ falls into a distinct cell,
	$$
	Q_i(\r_i, \mu(\r_{-i})) \subseteq \text{cell}(\mathbf{v}^{\r_i}), \text{ for all } \r_i \in \mathcal{R}_i.  
	$$
\end{definition}
Note that checking this condition does not require the designer to know the actual marginal distribution $\mu(\r_{-i})$. The condition means that for any given marginal distribution $\mu(\r_{-i})$, if the designer restrict the possible underlying distributions to the ones that has marginal distribution of $\r_{-i}$ equal to the given $\mu(\r_{-i})$, it should be possible to fit $\{ Q_i(\r_i, \mu(\r_{-i}))\}_{\r_i \in \mathcal{R}_i}$  into a power diagram.

Below we provide an example to illustrate the sets $Q_i(\r_i, \mu(\r_{-i}))$ and how they can be fitted into power diagrams.  
\begin{example} \label{exam:simplex}
Consider the following problem instance $\langle \f, M\rangle$.  Suppose there are two agents $n = 2$, two tasks $T = 2$ and the report and signal spaces are $\mathcal{S}_1 = \mathcal{S}_2 = \mathcal{R}_1 = \mathcal{R}_2 = \{0,1,2\}$. The designer asks the participants to directly report their signals $r_i = f_{i,\mu}(s_i) = s_i$. The set of possible distributions $M$ is the set of all distributions that have the sign of the Delta matrix Sign($\Delta$) (defined in Section~\ref{sec:prelim_ca}) equal to $$\begin{bmatrix}1&-1&-1\\-1&1&-1\\-1&-1&1\end{bmatrix}.$$ 
This means that for a participant $i$, seeing a signal realization $s_i^*\in \{0,1,2\}$ will increase the probability that the other participant also observes the same signal realization $s_i^*$ but decrease the probability that the other participant observes a different signal realization.  
Given such $M$ and $\mathcal{R}_i = \{0,1,2\}$, we use simplex plot on $\Delta(\{0,1,2\})$ to illustrate the sets $Q_i(r_i, \mu(r_{-i}))$. In Figure~\ref{fig:simplex}, each point on the simplex plot represents a distribution on $\{0,1,2\}$. We choose two marginal distributions, $\mu(r_{-i}) = (1/3,1/3,1/3)$ on the left and $\mu(r_{-i}) = (0.2,0.3,0.5)$ on the right. The colored areas are the set $\{Q_i(r_i, \mu(r_{-i}))\}_{r_i\in \mathcal{R}_i}$, and the dashed lines are the boundary of the cells of a power diagram 
	with certain sites and weights. The exact definition of the sites and the weights will be given in \eqref{eq:ca_sites} later in Section~\ref{sec:main_suff} when we discuss the  application of our results.  
	
\end{example}
\begin{figure}
\centering
\includegraphics[width = .32\textwidth]{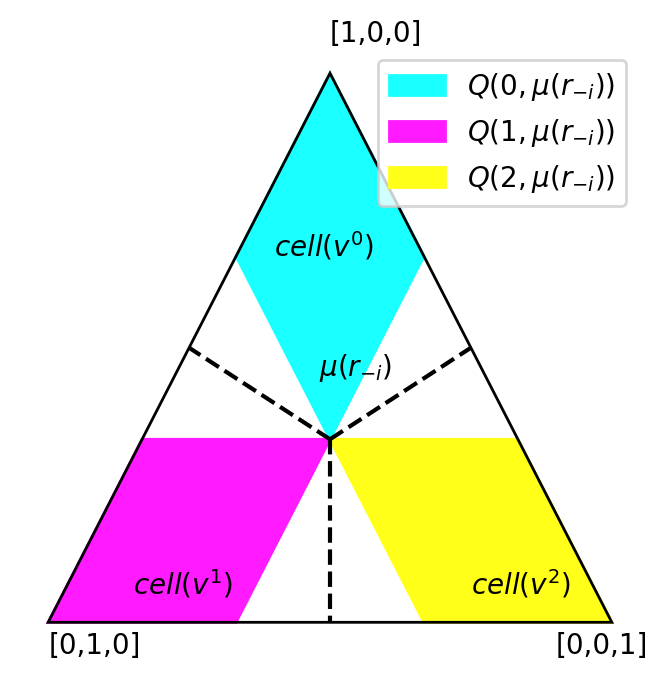}
\qquad
\includegraphics[width = .32\textwidth]{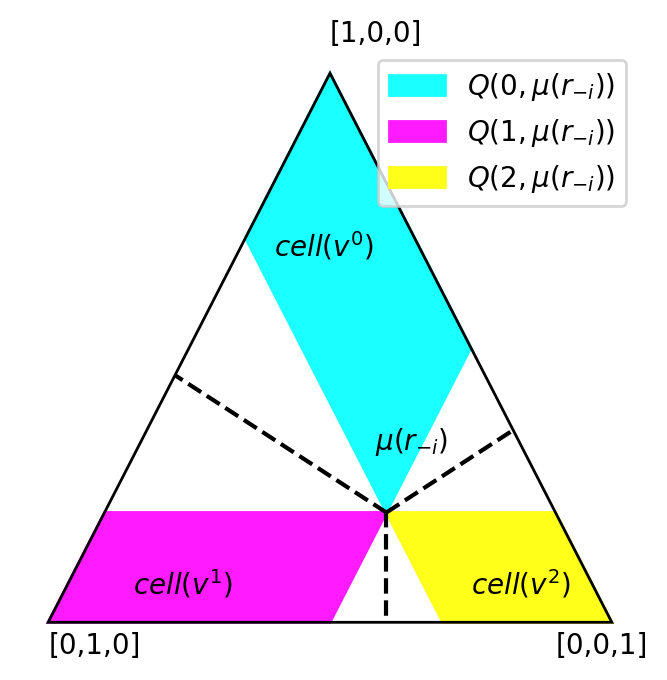}
\caption{With various marginal distributions $\mu(r_{-i})$ and $r_i = 0,1,2$, the simplex plots illustrate the sets $\{Q_i(r_i, \mu(r_{-i}))\}_{r_i\in \mathcal{R}_i}$ (colored areas) defined in Example~\ref{exam:simplex} and the cells of power diagrams cell$(\mathbf{v}^{r_i})$ (with dashed boundary)  that separate them.  In our example, the point at the intersection of the dashed lines is the marginal distributions $\mu(r_{-i})$. \SR{May need to change the notation}}
\label{fig:simplex}
\end{figure}

Now assume that $\langle \f, M\rangle$ satisfies the power diagram constraint for given \fang{add any}\SR{this is the term that's defined in the previous definition}\fang{I see} marginal distributions. For any report $\r_i \in \mathcal{R}_i$, denote the site $\mathbf{v}^{\r_i}$ for $\mu(\r_{-i})$ by $\mathbf{v}^{\r_i}(\mu(\r_{-i}))$ and denote the associated weight by $w^{\r_i}(\mu(\r_{-i}))$. Here we abuse the notation and consider $w^{r_i}:\Delta(\mathcal{R}_{-i}) \to \mathbb{R}$ and $\mathbf{v}^{\r_i}:\Delta(\mathcal{R}_{-i}) \to \mathbb{R}^{|\mathcal{R}_{-i}|}$ as functions of the marginal distribution $\mu(\r_{-i})$.
Our second observation is that the sites of the power diagrams $\mathbf{v}^{\r_i}(\mu(\r_{-i}))$ and the associated weights $w^{\r_i}(\mu(\r_{-i}))$ need to be affine functions of the marginal distribution of other participants' reports for $T-1$ tasks $\mu(\r_{-i}^{(1:T-1)})$.\footnote{Because the tasks are i.i.d., the distribution for $T-1$ tasks $\mu(\r_{-i}^{(1:T-1)})$ can be generated by the distribution for a single task $\mu(\r_{-i})$.} To be more specific, let $\mathbf{u}$ be the length-$|\mathcal{R}_{-i}|$ vector that represents the distribution $\mu(\r_{-i})$. Then $\mu(\r_{-i}^{(1:T-1)})$ can be represented by the $(T-1)$-th tensor power of the vector, $\mathbf{u}^{\otimes (T-1)}$. Our observation is that both $\mathbf{v}^{\r_i}(\mathbf{u})$ and $w^{\r_i}(\mathbf{u})$ need to be affine functions of $\mathbf{u}^{\otimes (T-1)}$.
Moreover, when such power diagrams and affine functions exist, we can find a mechanism that elicits $\r$. 

\begin{theorem} \label{thm:mult_suff}
	A multi-task problem $\langle \f, M\rangle$ with finite-size $\mathcal{R}_i$'s is elicitable by scoring mechanisms if and only if 
	\begin{enumerate}
	\item 	It satisfies the power diagram constraint for given marginal distributions. Let $\mathbf{u}$ be a vector that represents a marginal distribution $\mu(\r_{-i})$ and denote by $\mathbf{v}^{\r_i}(\mathbf{u})$ and $w^{\r_i}(\mathbf{u})$ the sites and the weights of the power diagram for the marginal distribution $\mathbf{u}$.
	\item Furthermore, 
	for every $i\in[n]$ and every $\r_i \in \mathcal{R}_i$, there exist a  matrix $\mathbf{D}_{\r_i}$ with $|\mathcal{R}_{-i}|$ rows and $|\mathcal{R}_{-i}|^{T-1}$ columns and a vector $\mathbf{e}_{\r_i}\in \mathbb{R}^{|\mathcal{R}_{-i}|}$ with
	$$
	\mathbf{v}^{\r_i}(\mathbf{u}) = \mathbf{D}_{\r_i} \cdot \mathbf{u}^{\otimes (T-1)} + \mathbf{e}_{\r_i}, \quad \text{for all possible } \mathbf{u},
	$$
	and there exists a vector $\mathbf{h}_{\r_i}$ with length $|\mathcal{R}_{-i}|^{T-1}$ such that 
	$$
	w^{\r_i}(\mathbf{u}) = \mathbf{h}_{\r_i}^\top \cdot \mathbf{u}^{\otimes (T-1)}, \quad \text{for all possible } \mathbf{u}.
	$$
	\end{enumerate}
	Moreover, if such power diagrams and affine functions exist, we can find a mechanism that is strictly truthful for any $\mu\in M$ with payments defined by the entries of  $\, \mathbf{D}, \mathbf{e}$ and $\mathbf{h}$ as
	\begin{align} \label{eqn:def_pi}
p_i^{(t)}(\r_i,  \r_{-i}^{(t)}, \r_{-i}^{(-t)}) = -\mathbf{D}_{\r_i}[\r_{-i}^{(t)}, \r_{-i}^{(-t)}] \ - \mathbf{e}_{\r_i}[\r_{-i}^{(t)}] +\mathbf{h}_{\r_i}[\r_{-i}^{(-t)}], \quad \forall \r_i \in \mathcal{R}_i,\ \r_{-i}^{(t)},\ \r_{-i}^{(-t)}.
	\end{align}
\end{theorem}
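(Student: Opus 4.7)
My plan is to prove Theorem~\ref{thm:mult_suff} by a single bilinear-form computation that converts a scoring mechanism into a power diagram and vice versa. For any scoring mechanism with payments $p_i^{(t)}(\r_i, \r_{-i}^{(t)}, \r_{-i}^{(-t)})$, I will compute the expected task-$t$ payment to participant $i$ conditional on signal $s_i^{(t)}$ under truthful reporting by the others. Letting $\mathbf{q} = \mu(\r_{-i}|s_i^{(t)})$ and $\mathbf{u} = \mu(\r_{-i})$, the i.i.d.\ structure of tasks gives $\R_{-i}^{(t)} \sim \mathbf{q}$ independently of $\R_{-i}^{(-t)} \sim \mathbf{u}^{\otimes(T-1)}$, so the expected task-$t$ payment is bilinear in $\mathbf{q}$ and $\mathbf{u}^{\otimes(T-1)}$; I will rewrite it canonically as $-\langle \mathbf{q}, \mathbf{v}^{\r_i}(\mathbf{u})\rangle + w^{\r_i}(\mathbf{u})$, at which point strict truthfulness means $\r_i = f_{i,\mu}(s_i^{(t)})$ uniquely minimizes the power distance $\langle \mathbf{q}, \mathbf{v}^x(\mathbf{u})\rangle - w^x(\mathbf{u})$, i.e. $\mathbf{q} \in \text{cell}(\mathbf{v}^{\r_i}(\mathbf{u}))$.

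For the sufficiency direction, I would start from the hypothesized sites $\mathbf{v}^{\r_i}(\mathbf{u}) = \mathbf{D}_{\r_i}\mathbf{u}^{\otimes(T-1)} + \mathbf{e}_{\r_i}$ and weights $w^{\r_i}(\mathbf{u}) = \langle \mathbf{h}_{\r_i}, \mathbf{u}^{\otimes(T-1)}\rangle$ and define the payments by (\ref{eqn:def_pi}). Plugging into the bilinear expectation, the $\mathbf{D}$-term yields $-\langle \mathbf{q}, \mathbf{D}_{\r_i}\mathbf{u}^{\otimes(T-1)}\rangle$; the $\mathbf{e}$-term yields $-\langle \mathbf{q}, \mathbf{e}_{\r_i}\rangle$ because $\mathbf{u}^{\otimes(T-1)}$ sums to one; and the $\mathbf{h}$-term yields $\langle \mathbf{h}_{\r_i}, \mathbf{u}^{\otimes(T-1)}\rangle$ because $\mathbf{q}$ sums to one. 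These combine to $-\langle \mathbf{q}, \mathbf{v}^{\r_i}(\mathbf{u})\rangle + w^{\r_i}(\mathbf{u})$, so the power-diagram property yields strict per-task optimality; by additivity of scoring mechanisms, any deviation that differs from truth on some signal for some task strictly lowers that task's expected payment while leaving the others unchanged, hence lowers the total, giving strict BNE.

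For the necessity direction, I would read off the structure from any strictly truthful scoring mechanism. Setting $\mathbf{D}_{\r_i}[\r_{-i}^{(t)}, \r_{-i}^{(-t)}] = -p_i^{(t)}(\r_i, \r_{-i}^{(t)}, \r_{-i}^{(-t)})$, $\mathbf{e}_{\r_i} = \mathbf{0}$, and $\mathbf{h}_{\r_i} = \mathbf{0}$ supplies condition 2 trivially with this canonical witness, so it remains to verify condition 1: for every fixed $\mu(\r_{-i})$ and every $\mathbf{q} \in Q_i(\r_i, \mu(\r_{-i}))$, the value $\r_i$ must strictly minimize $\langle \mathbf{q}, \mathbf{v}^x(\mathbf{u})\rangle$ over $x$. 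Using Theorem~\ref{thm:equiv} to restrict attention to consistent strategies and the additivity of scoring payments to localize to a single task, this follows because the strictly truthful payment must be uniquely best at every $\mathbf{q}$ realized by some $\mu' \in M_{\mu(\r_{-i})}$ and some $s_i$ with $f_{i,\mu'}(s_i) = \r_i$, yielding $Q_i(\r_i, \mu(\r_{-i})) \subseteq \text{cell}(\mathbf{v}^{\r_i}(\mathbf{u}))$.

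The hardest step will be this last one: lifting strict BNE of the multi-task mechanism (a global statement involving arbitrary strategies across all tasks) to the per-posterior strict inequalities needed to conclude $Q_i(\r_i, \mu(\r_{-i})) \subseteq \text{cell}(\mathbf{v}^{\r_i}(\mathbf{u}))$ at every $\mathbf{q}$. The anticipated argument combines Theorem~\ref{thm:equiv}, the additive decomposition of scoring mechanisms, and a single-signal deviation trick: a weakly-but-not-strictly optimal $\mathbf{q}$ at some $\mu' \in M_{\mu(\r_{-i})}$ would be lifted to a consistent deviation that changes the truthful report only at the signal realizing $\mathbf{q}$, which would not strictly decrease the mechanism's expected payment under $\mu'$, contradicting strict BNE on $\mu'$. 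Once this is in place, both directions close cleanly from the canonical bilinear form, and finiteness of $\mathcal{R}_i$ is used only to ensure the argmin defining each cell is well-defined.
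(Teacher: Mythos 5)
Your proposal is correct and follows essentially the same route as the paper's proof: the bilinear/factorized conditional expectation is exactly the paper's $\langle \bm{\mu}(\r_{-i}|s_i), \bm{\alpha}_y\rangle$ computation, your canonical witness $\mathbf{D}_{\r_i}=-p_i^{(t)}$, $\mathbf{e}_{\r_i}=\mathbf{h}_{\r_i}=\mathbf{0}$ matches the paper's choice $\mathbf{v}^y=-\bm{\alpha}_y$, $w^y=0$, and the sufficiency direction is the same plug-in of \eqref{eqn:def_pi}. The single-signal deviation argument you flag as the hardest step is precisely the (implicit) justification the paper uses to pass from strict BNE to per-signal, per-task strict optimality, so nothing genuinely new or missing separates the two.
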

We want to point out that the parameters $\mathbf{e}, \mathbf{h}$ in the theorem can actually be merged into matrix $\mathbf{D}$ so that $\mathbf{D}_{\r_i} = \mathbf{e}_{\r_i}\cdot \mathbf{1}^\top+\mathbf{1}\cdot \mathbf{h}_{\r_i}^\top$. We use this form because in some applications, it is more convenient to separate $\mathbf{e}$ and $\mathbf{h}$ and have a payment in form~\eqref{eqn:def_pi}, for example when we apply the theorem to the setting of the correlated agreement mechanism~\citep{shnayder2016informed}.

\begin{proof}[Proof of Theorem~\ref{thm:mult_suff}]
We consider the general setting when the agents can use non-consistent strategy over different tasks. 

We first prove the necessity of the condition. 
Suppose that the designer has a strictly truthful scoring mechanisms for a multi-task problem $\langle \f, M\rangle$, so that the payment has the form
\begin{align}
		p_i(\tilde{\r}_i^{(1:T)},\ \tilde{\r}_{-i}^{(1:T)}) = \sum_{t=1}^T p_i^{(t)}(\tilde{\r}_i^{(t)}, \tilde{\r}_{-i}).
\end{align}	
 When participant $i$ decides the strategy for task $t$, his report for this task $\tilde{\r}_i^{(t)}$ only affects the payment for this task $p_i^{(t)}(\tilde{\r}_i^{(t)}, \tilde{\r}_{-i})$. Therefore his best strategy for task $t$ 
 is the one that maximizes the expected payment for task $t$ assuming that the other agents truthfully report, that is,
$
\E_\mu[p_i^{(t)}(\tilde{\R}_i^{(t)}, \R_{-i})].
$
This means that for each signal realization $s_i^{(t)}$,  participant $i$ should choose the report $\tilde{\r}_i^{(t)}$ that will maximize the conditional expectation $\E_\mu[p_i^{(t)}(\tilde{\R}_i^{(t)}, \R_{-i})|s_i^{(t)}]$. Since the tasks are independent, observing $s_i^{(t)}$ will only change agent $i$'s belief about other people's reports for this task $\r_{-i}^{(t)}$, but not the reports for other tasks $\r_{-i}^{(-t)}$. So we can factor out $\r_{-i}^{(t)}$ in the conditional expectation as follows
\begin{align}
\E_\mu\big[p_i^{(t)}\big(\tilde{\R}_i^{(t)},\, \R_{-i}^{(t)},\, \R_{-i}^{(-t)}\big)|s_i^{(t)}\big] & = \sum_{\r_{-i}^{(t)}, \r_{-i}^{(-t)}} \mu(\r_{-i}^{(t)}, \r_{-i}^{(-t)}|s_i^{(t)})\cdot p_i^{(t)}\big(\tilde{\r}_i^{(t)},\, \r_{-i}^{(t)},\, \r_{-i}^{(-t)}\big) \notag \\
& = \sum_{\r_{-i}^{(t)}, \r_{-i}^{(-t)}} \mu(\r_{-i}^{(t)}|s_i^{(t)})\cdot \mu(\r_{-i}^{(-t)})\cdot p_i^{(t)}\big(\tilde{\r}_i^{(t)},\, \r_{-i}^{(t)},\, \r_{-i}^{(-t)}\big)\notag \\
& = \sum_{\r_{-i}^{(t)}} \mu(\r_{-i}^{(t)}|s_i^{(t)})\cdot \sum_{\r_{-i}^{(-t)}} \mu(\r_{-i}^{(-t)})\cdot p_i^{(t)}\big(\tilde{\r}_i^{(t)},\, \r_{-i}^{(t)},\, \r_{-i}^{(-t)}\big). \label{eqn:main_pf_eq1}
\end{align}
Now consider a given marginal distribution of other participants' report for one task, $\mu(\r_{-i})$. Define $\alpha_{y}( \r_{-i}^{(t)})$ to be the value of the second sum in~\eqref{eqn:main_pf_eq1} when participant $i$ reports $y$, that is, $\tilde{\r}_i^{(t)} = y$, and other participants' truthful report for task $t$ is $\r_{-i}^{(t)}$,
$$
\alpha_{y}( \r_{-i}^{(t)}) = \sum_{\r_{-i}^{(-t)}} \mu(\r_{-i}^{(-t)})\cdot p_i^{(t)}\big(y,\, \r_{-i}^{(t)},\, \r_{-i}^{(-t)}\big),
$$
where the distribution $\mu(\r_{-i}^{(-t)})$ is generated according to the given marginal distribution $\mu(\r_{-i})$.
Then 
participant $i$'s expected payoff (for task $t$) conditioning on $s_i^{(t)}$ when reporting $y$ is the inner product of the vectors that represent posterior belief $\mu(\r_{-i}^{(t)}|s_i^{(t)})$ and $\alpha_{y}( \r_{-i}^{(t)})$, 
$$
\E_\mu\big[p_i^{(t)}\big(y,\, \R_{-i}^{(t)},\, \R_{-i}^{(-t)}\big)|s_i^{(t)}\big] =\sum_{\r_{-i}^{(t)}} \mu(\r_{-i}^{(t)}|s_i^{(t)})\cdot \alpha_{y}( \r_{-i}^{(t)}) = \langle \bm{\mu}(\r_{-i}^{(t)}|s_i^{(t)}), \bm{\alpha}_{y}\rangle.
$$
Here we abuse the notation to use $\bm{\mu}(\r_{-i}^{(t)}|s_i^{(t)})$ to represent a vector.
As a result, by the definition of strict truthfulness, when agent $i$'s truthful report is $\r_i$, his expected payoff (for task $t$) should be uniquely maximized when reporting $\r_i$, which means 
\begin{align}\label{eqn:mult_power}
\{\r_i\} = \arg\max_{y\in \mathcal{R}_i} \langle \bm{\mu}(\r_{-i}|s_i), \bm{\alpha}_y\rangle.
\end{align}
We construct a power diagram by setting $\mathbf{v}^y = -\bm{\alpha}_y$ and $w^y = 0$. 
Then \eqref{eqn:mult_power} is equivalent to 
$$
\mu(\r_{-i}|s_i) \in \text{cell}(\mathbf{v}^{\r_i}).
$$
Therefore if $\langle \f, M \rangle$ is elicitable by scoring mechanisms, $\{ Q_i(\r_i, {\mu(\r_{-i})})\}_{r_i \in \mathcal{R}_i}$ can be fitted into a power diagram.
Furthermore, let matrix $\mathbf{D}_y$ be a $|\mathcal{R}_{-i}|\times |\mathcal{R}_{-i}|^{T-1}$ matrix with $p_i^{(t)}(y, \r_{-i}^{(t)}, \r_{-i}^{(-t)})$ in row $\r_{-i}^{(t)}$ and column $\r_{-i}^{(-t)}$. Then by our construction of $\mathbf{v}^y$, we have 
\begin{align*}
	\mathbf{v}^y(\mu(\r_{-i})) = \alpha_{y} = \mathbf{D}_y \cdot \mu(\r_{-i}^{(-t)}).
\end{align*}
Therefore it is necessary that there exists $\mathbf{v}^y(\mu(\r_{-i}))$ which is an affine function of $\mu(\r_{-i}^{(-t)})$. 

Next, we prove the sufficiency of the condition.
If we have power diagrams defined by $\{\mathbf{v}^{\r_i}(\mu(\r_{-i}))\in \mathbb{R}^{|\mathcal{R}_{-i}|}: \r_i\in \mathcal{R}_i\}$ and $w(\r_i)$ such that for any given $\mu(\r_{-i})$, $Q_i(\r_i, \mu(\r_{-i}))$ falls into a distinct cell 
	\begin{align} \label{eqn:mult_cond}
	Q_i(\r_i, \mu(\r_{-i})) \subseteq \text{cell}(\mathbf{v}^{\r_i}(\mu(\r_{-i}))), \text{ for all } \r_i \in \mathcal{R}_i.  
	\end{align}
	and for each $\r_i$, there exist a  matrix $\mathbf{D}_{\r_i}$ with $|\mathcal{R}_{-i}|$ rows and $|\mathcal{R}_{-i}|^{T-1}$ columns and a vector $\mathbf{e}_{\r_i}\in \mathbb{R}^{|\mathcal{R}_{-i}|}$ such that 
	$$
	\mathbf{v}^{\r_i}(\mu(\r_{-i})) = \mathbf{D}_{\r_i} \cdot \mu(\r_{-i}^{(-t)}) + \mathbf{e}_{\r_i},
	$$
	and there exists a vector $\mathbf{h}$ with length $|\mathcal{R}_{-i}|^{T-1}$ such that 
	$$
	w^{\r_i}(\mu(\r_{-i})) = \mathbf{h}_{\r_i}^\top \cdot \mu(\r_{-i}^{(-t)}).
	$$
Consider the payment rule~\eqref{eqn:def_pi} given by the theorem
\begin{align*} 
p_i^{(t)}(\r_i,  \r_{-i}^{(t)}, \r_{-i}^{(-t)}) = -\mathbf{D}_{\r_i}[\r_{-i}^{(t)}, \r_{-i}^{(-t)}] \ - \mathbf{e}_{\r_i}[\r_{-i}^{(t)}] +\mathbf{h}_{\r_i}[\r_{-i}^{(-t)}].
\end{align*}
Then for any underlying distribution $\mu$, suppose participant $i$ observes $s_i^{(t)}$ for task $t$, if participant $i$ reports $y \in \mathcal{R}_i$, as shown in~\eqref{eqn:main_pf_eq1}, his expected payment for task $t$ is
\begin{align*}
\E_\mu\big[p_i^{(t)}\big(y,\, \R_{-i}^{(t)},\, \R_{-i}^{(-t)}\big)|s_i^{(t)}\big] 
& = \sum_{\r_{-i}^{(t)}} \mu(\r_{-i}^{(t)}|s_i^{(t)})\cdot \sum_{\r_{-i}^{(-t)}} \mu(\r_{-i}^{(-t)})\cdot p_i^{(t)}\big(y,\, \r_{-i}^{(t)},\, \r_{-i}^{(-t)}\big)\\
& = -\langle \bm{\mu}(\r_{-i}^{(t)}|s_i^{(t)}),\, \mathbf{D}_{y} \cdot \bm{\mu}(\r_{-i}^{(-t)}) + \mathbf{e}_{y} \rangle + w^{y}(\mu(\r_{-i}))\\
& = -\langle \bm{\mu}(\r_{-i}^{(t)}|s_i^{(t)}), \,	\mathbf{v}^{y}(\mu(\r_{-i}))\rangle + w^{y}(\mu(\r_{-i})).
\end{align*}
Consequently, the payment rule will be strictly truthful because by the condition~\eqref{eqn:mult_cond} we have 
$$
\mu(\r_{-i}|s_i) \in Q_i(\r_i, {\mu(\r_{-i})}) \subseteq \text{cell}(\mathbf{v}^{\r_i}(\mu(\r_{-i}))),
$$
which by the definition of $\text{cell}(\mathbf{v}^{\r_i}(\mu(\r_{-i})))$ (Definition~\ref{def:power_diag}) means that
\begin{align*}
\{\r_i\} & = \arg\min_{y\in \mathcal{R}_i} \langle \mathbf{v}^y(\mu(\r_{-i})),\, \bm{\mu}(\r_{-i}|s_i)\rangle - w^{y}(\mu(\r_{-i}))\\
& = \arg \min_{y\in \mathcal{R}_i} -E_\mu\big[p_i^{(t)}\big(y,\, \R_{-i}^{(t)},\, \R_{-i}^{(-t)}\big)|s_i^{(t)}\big] 
 \\
& = \arg \max_{y\in \mathcal{R}_i}\ \E_\mu\big[p_i^{(t)}\big(y,\, \R_{-i}^{(t)},\, \R_{-i}^{(-t)}\big)|s_i^{(t)}\big].
\end{align*}
\end{proof}

Theorem~\ref{thm:mult_suff} applies to the general scoring mechanisms whose  $p_i^{(t)}(\tilde{\r}_i^{(t)}, \tilde{\r}_{-i}^{(1:T)})$ can be an arbitrary function of $\tilde{\r}_i^{(t)}$ and $\tilde{\r}_{-i}^{(1:T)}$. Since $\tilde{\r}_{-i}^{(1:T)}$ has exponentially many possible values, the payment $p_i^{(t)}(\tilde{\r}_i^{(t)}, \tilde{\r}_{-i}^{(1:T)})$ of an arbitrary scoring mechanism can possibly be hard to compute. In practice, for computational reasons, the designer may only want to decide payments based on a function of $\tilde{\r}_{-i}^{(1:T)}$, for example, a sufficient statistic of $\tilde{\r}_{-i}^{(1:T)}$. In this case, we can have a simplified version of Theorem~\ref{thm:mult_suff} with smaller $\mathbf{D}_{\r_i}$ and $\mathbf{h}_{\r_i}$, which will have polynomial sizes if the function of $\tilde{\r}_{-i}^{(1:T)}$ has polynomially many possible values. We defer the details to Appendix~\ref{app:main_simplified}.

Our characterization can be used in two ways. 
\subsubsection{The sufficiency of the condition} \label{sec:main_suff}
 Given a problem instance $\langle \f, M\rangle$, if we are able to find power diagrams as described in Theorem~\ref{thm:mult_suff}, then the payment rule~\eqref{eqn:def_pi} will guarantee strict truthfulness. 
To give an example, we apply Theorem~\ref{thm:mult_suff} to the setting of the correlated agreement mechanism (Theorem~\ref{thm:ca_suff}). We show how to find power diagrams as described in Theorem~\ref{thm:mult_suff} when the designer knows the sign of the Delta matrix Sign$(\Delta)$ and the matrix Sign$(\Delta)$ does not have two identical rows or two identical columns. We will find that the payment rule~\eqref{eqn:def_pi} induced by the power diagrams is just the payment rule of the CA mechanism.

Consider the setting for the CA mechanism in which $n=2, T=2$ and $r_i = s_i$ for all $i$.
If the designer knows Sign$(\Delta)$ and Sign$(\Delta)$ does not contain two identical rows or columns, define Sign$(\Delta[s_1,:])\in\{-1,+1\}^{|\mathcal{S}_2|}$ to be the sign of the $s_1$-th row of $\Delta$. Our goal is to find power diagrams that will separate a participant $i$'s possible posteriors of the other participant's report $\mu(s_2|s_1)$ for different signal realizations $s_1$, for any given marginal distribution $\mu(s_2)$. It may not be straightforward if we directly try to separate possible $\mu(s_2|s_1)$ for different $s_1$, but we can easily find power diagrams that separate $\mu(s_2|s_1)-\mu(s_2)$ for different $s_1$. By the definition of the Delta matrix $\Delta$, $\mu(s_2|s_1)-\mu(s_2)$ is just the $s_1$-th row of $\Delta$. If the designer knows the sign of $\Delta$, we can just define the sites of a power diagram to be the rows of $-$Sign$(\Delta)$, that is, define 
$$
\tilde{\mathbf{v}}^{s_1} = -\text{Sign}(\Delta[s_1, :]),\quad \tilde{w}^{s_1} = 0, \quad \forall s_1\in \mathcal{S}_1.
$$
Then for a signal realization $s_1 \in \mathcal{S}_1$ and for any possible $\mu(s_2|s_1)-\mu(s_2)$,  the site $\tilde{\mathbf{v}}^{s_1}$ will be the closest to $\mu(s_2|s_1)-\mu(s_2)$ among all the sites in terms of power distance, that is,
\begin{align}\label{eqn:ca_power}
	\{s_1\} &= \arg \min_{y\in \mathcal{S}_1}\ \langle -\text{Sign}(\Delta[y,:]), \bm{\mu}(s_2|s_1)-\bm{\mu}(s_2) \rangle \notag\\
	& = \arg \min_{y\in \mathcal{S}_1}\ \langle -\text{Sign}(\Delta[y,:]), \bm{\mu}(s_2|s_1) \rangle + \langle \text{Sign}(\Delta[y,:]), \bm{\mu}(s_2)\rangle.
\end{align}
Here we abuse the notation of $\bm{\mu}(s_2|s_1)$ and $\bm{\mu}(s_2)$ to denote the vectors that represent the distribution.
This can be converted into power diagrams that will separate possible $\mu(s_2|s_1)$ for different $s_1$. From~\eqref{eqn:ca_power}, we can define
\begin{equation}\label{eq:ca_sites}
\mathbf{v}^{s_1}(\mu(\s_2)) = -\text{Sign}(\Delta[s_1, :])\text{, and  }w^{s_1}(\mu(\s_2)) =- \langle \text{Sign}(\Delta[s_1, :]), \bm{\mu}(s_2)\rangle.
\end{equation}
Fortunately, both $\mathbf{v}^{s_1}(\mu(\s_2))$ and $w^{s_1}(\mu(\s_2))$ are affine in $\mu(s_2)$, with parameters
\begin{align*}
\mathbf{D}_{s_1} = \mathbf{0},\quad \mathbf{e}_{s_1} = -\text{Sign}(\Delta[s_1, :]), \quad \mathbf{h}_{s_1} = -\text{Sign}(\Delta[s_1, :]).
\end{align*}
Then the payment defined by Theorem~\ref{thm:mult_suff} $$
p_i^{(t)}(r_i,  \r_{-i}^{(t)}, \r_{-i}^{(-t)}) = -\mathbf{D}_{r_i}[\r_{-i}^{(t)}, \r_{-i}^{(-t)}] \ - \mathbf{e}_{r_i}[\r_{-i}^{(t)}] +\mathbf{h}_{r_i}[\r_{-i}^{(-t)}]$$
gives 
$$
p_1^{(t)}(\s) = \text{Sign}(\Delta[s_1^{(t)}, s_2^{(t)}]) - \text{Sign}(\Delta[s_1^{(t)}s_2^{(-t)}]),
$$
which is just the payment rule of the CA mechanism.



\subsubsection{The necessity of the condition} If a problem instance $\langle \f, M\rangle$ does not satisfy the condition in our theorem, then one should not hope for a strictly truthful scoring mechanism. The designer should seek  additional knowledge about the distribution or elicit different information. Testing whether there exist power diagrams as described in Theorem~\ref{thm:mult_suff} may not be easy. Below we provide two simpler necessary conditions for the existence of strictly truthful scoring mechanisms. When the report space $\mathcal{R}_i$ is discrete, these two necessary conditions can be implied by Theorem~\ref{thm:mult_suff}. In addition, they will also hold for continuous report space, for example, when $\mathcal{R}_i = \Delta \Omega$ is the space of all posteriors of the participant. 

First, observe that a cell of a power diagram must be convex. Therefore, if we fix a marginal distribution $\mu(\r_{-i})$, $Q_i(\r_i, \mu(\r_{-i}))$ for different $\r_i$ should fall into disjoint convex sets. More specifically, we should have the following. 
\begin{proposition}\label{coro:convex}
A multi-task problem $\langle \f, M\rangle$ is not elicitable if there exist $i\in[n]$ and a marginal distribution $\mu(\r_{-i})$ and $\r_i \neq \r_i' \in \mathcal{R}_i$, such that 
there exist $k,l \in \mathbb{Z}^+$ and $\mathbf{q}_1, \dots, \mathbf{q}_k \in  Q_i(\r_i, \mu(\r_{-i}))$ and $\beta_1, \dots, \beta_k\in [0,1]$ with $\beta_1 + \cdots +\beta_k = 1$ and $\mathbf{q}'_1, \dots, \mathbf{q}'_l \in  Q_i(\r_i', \mu(\r_{-i}))$ and $\beta_1', \dots, \beta_l'\in [0,1]$ with $\beta_1' + \cdots +\beta_l' = 1$ such that
$$
\beta_1 \cdot \mathbf{q}_1 + \cdots + \beta_k \cdot \mathbf{q}_k = \beta_1' \cdot \mathbf{q}_1' + \cdots + \beta_l' \cdot \mathbf{q}_l' .
$$
\end{proposition}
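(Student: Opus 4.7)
The plan is to derive a contradiction from strict truthfulness by exploiting the linearity of expected payment in the posterior. Suppose for contradiction that $\langle \f, M\rangle$ is elicitable by some scoring mechanism $\p$. Fix the agent $i$, the marginal $\mu(\r_{-i})$, and the reports $\r_i \neq \r_i'$ witnessing the convex-combination equality in the hypothesis. Mirroring the opening steps of the proof of Theorem~\ref{thm:mult_suff}, I would first show that, for any fixed task $t$, the expected payment to agent $i$ for reporting some value $y \in \mathcal{R}_i$ on task $t$, conditional on observing $s_i^{(t)}$, decomposes as
\begin{align*}
\E_\mu\bigl[p_i^{(t)}(y, \R_{-i}^{(t)}, \R_{-i}^{(-t)})\,\big|\,s_i^{(t)}\bigr] \;=\; \bigl\langle \bm{\mu}(\r_{-i}^{(t)}\mid s_i^{(t)}),\; \bm{\alpha}_y\bigr\rangle,
\end{align*}
where $\bm{\alpha}_y$ depends only on $y$ and on the marginal $\mu(\r_{-i}^{(-t)})$ (which is determined by the fixed $\mu(\r_{-i})$). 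The crucial structural fact is that this expected payment is a \emph{linear} functional $L_y(\cdot)$ of the posterior $\bm{\mu}(\r_{-i}^{(t)}\mid s_i^{(t)})$.

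Next I would use strict truthfulness in the restricted model class $M_{\mu(\r_{-i})}$. For every $\mathbf{q}_j \in Q_i(\r_i, \mu(\r_{-i}))$ there is some $\mu' \in M_{\mu(\r_{-i})}$ and $s_i$ with $f_{i,\mu'}(s_i) = \r_i$ and $\mu'(\r_{-i}\mid s_i) = \mathbf{q}_j$; since truth-telling must be a strict best response at this $(\mu', s_i)$, we get $L_{\r_i}(\mathbf{q}_j) > L_{\r_i'}(\mathbf{q}_j)$. Symmetrically, $L_{\r_i'}(\mathbf{q}_l') > L_{\r_i}(\mathbf{q}_l')$ for every $\mathbf{q}_l' \in Q_i(\r_i', \mu(\r_{-i}))$.

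Finally I combine these inequalities with the convex weights. Since each $L_y$ is linear, taking a $\beta$-convex combination over $j$ and a $\beta'$-convex combination over $l$ yields
\begin{align*}
L_{\r_i}\!\Bigl(\sum_j \beta_j\, \mathbf{q}_j\Bigr) \;>\; L_{\r_i'}\!\Bigl(\sum_j \beta_j\, \mathbf{q}_j\Bigr), \qquad L_{\r_i'}\!\Bigl(\sum_l \beta_l'\, \mathbf{q}_l'\Bigr) \;>\; L_{\r_i}\!\Bigl(\sum_l \beta_l'\, \mathbf{q}_l'\Bigr).
\end{align*}
But the hypothesis says the two convex combinations agree as vectors, so evaluating both $L_{\r_i}$ and $L_{\r_i'}$ on the common point produces $L_{\r_i}(\mathbf{p}) > L_{\r_i'}(\mathbf{p})$ and $L_{\r_i'}(\mathbf{p}) > L_{\r_i}(\mathbf{p})$ simultaneously, a contradiction. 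This establishes non-elicitability.

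I do not anticipate a serious obstacle: the entire argument rests on (i) the linear-functional representation of expected payment coming from the scoring structure, which is already established en route to Theorem~\ref{thm:mult_suff}, and (ii) the trivial fact that a strict linear inequality survives convex combination. The only subtlety worth flagging is that the argument never requires $\mathcal{R}_i$ to be finite—the $\bm{\alpha}_y$ representation and the pointwise strict inequalities make sense even when $\mathcal{R}_i$ is infinite (e.g., a simplex of posteriors), which is why this proposition complements Theorem~\ref{thm:mult_suff} rather than being subsumed by it.
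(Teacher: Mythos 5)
Your proof is correct and follows essentially the same route the paper takes: the scoring structure makes the conditional expected payment a linear functional of the posterior over $\r_{-i}$ (with coefficients depending only on the report and the fixed marginal $\mu(\r_{-i})$), so the strict truth-telling inequalities are preserved under convex combination, which is exactly the convexity-of-cells argument underlying Proposition~\ref{coro:convex} and extends to continuous report spaces as you note. The only implicit step, shared with the paper's own treatment, is that each posterior in $Q_i(\r_i,\mu(\r_{-i}))$ arises from a signal with positive probability so that the per-task strict inequality actually follows from strict truthfulness.
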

The second necessary condition is even simpler. It is similar to the robust stochastic relevance condition~\ref{thm:necessary_1} that we proposed for the single-task peer prediction. The difference in the multi-task setting is that we need to fix a marginal distribution $\mu(\r_{-i})$.
\begin{proposition}[Robust stochastic relevance with given marginal distributions] \label{coro:marginal}
	For a multi-task problem $\langle \f, M\rangle$, if there exist $i\in[n]$ and a marginal distribution $\mu(\r_{-i})$ such that
	$$
	Q_i(\r_i, \mu(\r_{-i})) \cap Q_i(\r_i', \mu(\r_{-i})) \neq \emptyset
	$$
	for $\r_i\neq \r_i' \in \mathcal{R}_i$, then $\langle \f, M\rangle$ is not elicitable.
\end{proposition}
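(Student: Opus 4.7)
The plan is to derive a contradiction with strict truthfulness by exhibiting two underlying distributions in $M$ that produce identical expected per-task payoffs for agent $i$ but demand different truthful reports from him. Concretely, pick any common posterior $q \in Q_i(\r_i, \mu(\r_{-i})) \cap Q_i(\r_i', \mu(\r_{-i}))$. Unpacking the definitions of $Q_i$ and $M_{\mu(\r_{-i})}$, this witness supplies two distributions $\mu_1, \mu_2 \in M$, both with marginal $\mu(\r_{-i})$ on $\r_{-i}$, and two signals $s_i, s_i'$ satisfying $f_{i,\mu_1}(s_i) = \r_i$, $f_{i,\mu_2}(s_i') = \r_i'$, and $\mu_1(\r_{-i}\mid s_i) = \mu_2(\r_{-i}\mid s_i') = q$.

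Next, I assume for contradiction that some scoring mechanism with per-task payments $p_i^{(t)}$ is strictly truthful for every $\mu \in M$. Borrowing the calculation from the opening of the proof of Theorem~\ref{thm:mult_suff}, the i.i.d.\ task structure lets me factor agent $i$'s conditional expected payment for task $t$ when reporting $y$ given signal $s_i^{(t)}$ as
\[
\sum_{\r_{-i}^{(t)}} \mu(\r_{-i}^{(t)}\mid s_i^{(t)}) \sum_{\r_{-i}^{(-t)}} \mu(\r_{-i}^{(-t)}) \cdot p_i^{(t)}(y, \r_{-i}^{(t)}, \r_{-i}^{(-t)}),
\]
which depends on the underlying distribution only through the single-task posterior $\mu(\r_{-i}\mid s_i^{(t)})$ and the single-task marginal $\mu(\r_{-i})$.

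Substituting $(\mu_1, s_i)$ and $(\mu_2, s_i')$ into this expression, both the posterior and the marginal coincide by construction, so the per-task expected payoff is the \emph{same} function of $y$ in the two scenarios. Because payments are scoring, agent $i$'s best response for task $t$ is computed independently of other tasks by maximizing exactly this function. Strict truthfulness under $\mu_1$ forces the maximizer to be $\r_i$, while strict truthfulness under $\mu_2$ forces it to be $\r_i' \neq \r_i$, which is the desired contradiction.

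The argument is essentially a direct reuse of the payoff decomposition already established for Theorem~\ref{thm:mult_suff}, so I do not anticipate a serious technical obstacle. The only subtlety worth checking is that Definition~\ref{def:truthful_multi} permits non-consistent strategies across tasks; but in any scoring mechanism best responses decouple task-by-task, so the task-$t$ identity above is all that is needed. Alternatively, one can first invoke Theorem~\ref{thm:equiv} to reduce elicitability by scoring mechanisms to elicitability under consistent strategies, after which the same identity closes the argument.
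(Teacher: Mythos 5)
Your proof is correct and is essentially the paper's intended justification: the paper leaves Proposition~\ref{coro:marginal} implicit (calling it apparent from Theorem~\ref{thm:mult_suff}), and your argument simply replays the necessity half of that theorem's proof—the per-task payoff decomposition for scoring mechanisms, which depends on $\mu$ only through the posterior $\mu(\r_{-i}\mid s_i)$ and the marginal $\mu(\r_{-i})$—combined with the two-distribution contradiction that the paper itself uses for the single-task analogue (Proposition~\ref{thm:necessary_1}). A minor plus of writing it out this way, as you do, is that it does not pass through the power-diagram characterization and so covers continuous report spaces directly, which is exactly the regime in which the paper later invokes this proposition.
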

Although Proposition~\ref{coro:marginal} seems apparent based on Theorem~\ref{thm:mult_suff}, it can be useful when we consider a complicated report space, e.g., when we consider eliciting linear properties of the participants' posteriors (Section~\ref{sec:linear}). It is the main tool that we use to prove our main results in Section~\ref{sec:linear}.

\subsection{General mechanisms}
Although all the existing strictly truthful mechanisms that we know belong to scoring mechanisms, recently there have been attempts to design more complicated mechanisms with good properties, e.g., the mechanism proposed by \citet{kong2020dominantly} is not a scoring mechanism. In this section, we look at  general multi-task mechanisms and give a necessary condition for a multi-task problem to be elicitable, assuming that the participants use consistent strategies, which is a commonly used assumption by the multi-task peer prediction literature. 

The necessary condition basically says that, given a joint distribution of the participants' reports without naming a participant $i$'s report, the designer should at least be able to label participant $i$'s report based on the distribution. For example, consider two participants who are asked to report a high signal or a low signal, i.e., $\mathcal{R}_1 = \mathcal{R}_2 = \{\text{high}, \text{low} \}$. Then given a possible joint distribution without the labels of the first participant's report, e.g. the distribution represented by the table in the left. The designer should be able to tell which of $r_1$ and $r_1'$ is the high report.
\begin{table}[!h]
\centering
\begin{tabular}{|c|c|c|}
	\hline 
	& high & low\\
	\hline 
	$r_1$ & $0.8$ & $0.2$ \\
	\hline 
	$r_1'$ & $0.2$ & $0.8$\\
	\hline
\end{tabular}
$\quad \Longrightarrow \quad $
\begin{tabular}{|c|c|c|}
	\hline 
	& high & low\\
	\hline 
	high & $0.8$ & $0.2$ \\
	\hline 
	low & $0.2$ & $0.8$\\
	\hline
\end{tabular}
\end{table}

Formally, we have the following theorem.
\begin{theorem}
A multi-task problem $\langle \f, M\rangle$ is elicitable under consistent strategies only if for any distribution $\mu\in M$, any agent $i$ and any permutation $\pi$ of $\mathcal{R}_i$, there exists no $\tilde{\mu}\in M$ such that 
$$
\mu(\pi(r_i), \r_{-i}) = \tilde{\mu}(r_i, \r_{-i}).
$$ 	
Or equivalently, for any $i$, if we use a $|\mathcal{R}_i| \times |\mathcal{R}_{-i}|$ matrix $A_\mu$ with $\mu(r_i, \r_{-i})$ in row $r_i$ and column $\r_{-i}$ to represent any $\mu\in M$, then for any $\mu\in M$ and  any $|\mathcal{R}_i|\times |\mathcal{R}_i|$ permutation matrix $P$, there exists no $\tilde{\mu} \in M$ such that 
$$
P\cdot A_\mu = A_{\tilde{\mu}}.
$$  
\end{theorem}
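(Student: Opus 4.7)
The plan is to argue by contradiction. Suppose $\langle \f, M\rangle$ is elicitable under consistent strategies via some payment rule $\p$, and yet there exist $\mu,\tilde\mu\in M$, an agent $i$, and a non-identity permutation $\pi$ of $\mathcal{R}_i$ with $\mu(\pi(r_i), \r_{-i}) = \tilde\mu(r_i, \r_{-i})$ for every $(r_i,\r_{-i})$. I will exhibit two scenarios whose induced joint distributions of reports are identical, in one of which agent $i$ reports truthfully and in the other of which agent $i$ uses a legitimate (consistent) deviation, and then chain the two resulting strict inequalities to a contradiction.

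First, consider the consistent strategy $\sigma_i(s_i) = \pi\bigl(f_{i,\tilde\mu}(s_i)\bigr)$ applied to every task under the distribution $\tilde\mu$. Because the hypothesized equality forces the effective ranges of $f_{i,\mu}$ and $f_{i,\tilde\mu}$ to be related by $\pi$, a non-identity $\pi$ genuinely changes $f_{i,\tilde\mu}$ on some signal realization, so $\sigma_i$ is a bona fide non-truthful strategy in the sense of Definition~\ref{def:truthful_multi_consist}. A direct change-of-variable calculation on a single task yields $\Pr_{\tilde\mu}(\sigma_i(S_i)=y,\ \R_{-i}=\r_{-i}) = \tilde\mu(\pi^{-1}(y),\r_{-i}) = \mu(y,\r_{-i})$, and by the i.i.d.\ structure of tasks this lifts to the $T$-task product, so the joint distribution of $(\sigma_i^{(1:T)}(\S_i^{(1:T)}),\R_{-i}^{(1:T)})$ under $\tilde\mu$ coincides with that of $(\R_i^{(1:T)},\R_{-i}^{(1:T)})$ under $\mu$. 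Therefore
\[
\E_{\tilde\mu}\!\left[p_i\!\left(\sigma_i^{(1:T)}(\S_i^{(1:T)}),\,\R_{-i}^{(1:T)}\right)\right]
=\E_\mu\!\left[p_i\!\left(\R_i^{(1:T)},\,\R_{-i}^{(1:T)}\right)\right],
\]
and strict truthfulness under $\tilde\mu$ gives $\E_{\tilde\mu}[p_i(\R_i^{(1:T)},\R_{-i}^{(1:T)})] > \E_\mu[p_i(\R_i^{(1:T)},\R_{-i}^{(1:T)})]$.

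Next, by the symmetric construction, apply the consistent strategy $\sigma'_i(s_i)=\pi^{-1}(f_{i,\mu}(s_i))$ under $\mu$; an analogous computation (using the hypothesis in the other direction) shows the induced joint distribution of reports equals the truthful one under $\tilde\mu$. Strict truthfulness under $\mu$ then produces $\E_\mu[p_i(\R_i^{(1:T)},\R_{-i}^{(1:T)})] > \E_{\tilde\mu}[p_i(\R_i^{(1:T)},\R_{-i}^{(1:T)})]$. Chaining the two strict inequalities is a contradiction, proving the theorem. The matrix reformulation is an immediate restatement, since $\mu(\pi(r_i),\r_{-i})=\tilde\mu(r_i,\r_{-i})$ is exactly the statement that $A_{\tilde\mu}$ is obtained from $A_\mu$ by permuting rows according to $\pi$, i.e.\ $A_{\tilde\mu}=P\cdot A_\mu$ for the permutation matrix $P$ of $\pi$.

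The main obstacle, and the only step requiring care, is verifying the two ingredients that make the scenarios indistinguishable: (i) that $\sigma_i=\pi\circ f_{i,\tilde\mu}$ genuinely differs from $f_{i,\tilde\mu}$ on the support of $\tilde\mu$, which is forced by the hypothesized equality since any $r_i$ outside the effective range of $f_{i,\tilde\mu}$ has $\tilde\mu(r_i,\cdot)=0$ and hence $\mu(\pi(r_i),\cdot)=0$; and (ii) that the per-task equality of joint report distributions tensorizes to the $T$-task level, which follows because tasks are i.i.d.\ and the deviation $\sigma_i$ is applied identically across tasks (precisely the consistent-strategy assumption). Everything else is bookkeeping.
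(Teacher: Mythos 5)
Your proposal is correct and follows essentially the same route as the paper: you use the hypothesized identity $\mu(\pi(r_i),\r_{-i})=\tilde{\mu}(r_i,\r_{-i})$ to equate the expected payment of the permutation (resp.\ inverse-permutation) deviation under one distribution with the truthful expected payment under the other, and then combine this with strict truthfulness to reach a contradiction. The paper states this as two expectation identities plus a violated strict inequality rather than your chain $x>y>x$, but the argument is the same; your remarks (i) and (ii) merely make explicit bookkeeping that the paper leaves implicit.
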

\begin{proof}
	We prove that if there exist $\mu, \tilde{\mu}\in M$ and permutation $\pi$ that have 
	$
	\mu(\pi(r_i), \r_{-i}) = \tilde{\mu}(r_i, \r_{-i}),
	$
	there exists no payment rule $p_i(r_i^{(1:T)}, \r_{-i}^{(1:T)})$ that guarantees strict truthfulness under consistent strategies for both $\mu$ and $\tilde{\mu}$. 
	
	Since $\mu(\pi(r_i), \r_{-i}) = \tilde{\mu}(r_i, \r_{-i}),$ we have 
	\begin{align*}
	\E_{\tilde{\mu}}[p_i(\pi(r_i^{(1:T)}), \r_{-i}^{(1:T)})] & = \sum_{\r^{(1:T)}} \tilde{\mu}(r_i^{(1:T)}, \r_{-i}^{(1:T)}) \cdot p_i(\pi(r_i^{(1:T)}), \r_{-i}^{(1:T)}) \\
	& = \sum_{\r^{(1:T)}} \mu(\pi(r_i^{(1:T)}), \r_{-i}^{(1:T)}) \cdot p_i(\pi(r_i^{(1:T)}), \r_{-i}^{(1:T)}) \\
	& = \sum_{\r^{(1:T)}} \mu(r_i^{(1:T)}, \r_{-i}^{(1:T)}) \cdot p_i(r_i^{(1:T)}, \r_{-i}^{(1:T)}) \\
	&= \E_{\mu}[p_i(r_i^{(1:T)}, \r_{-i}^{(1:T)})],
	\end{align*}
	which means that the expected payment of permuting the reporting when the underlying distribution is $\tilde{\mu}$ is equal to the expected payment of truthfully reporting when the distribution is $\mu$. At the same time, it should also hold that
	\begin{align*}
		\E_{\tilde{\mu}}[p_i(r_i^{(1:T)}, \r_{-i}^{(1:T)})]  & = \sum_{\r^{(1:T)}} \tilde{\mu}(r_i^{(1:T)}, \r_{-i}^{(1:T)}) \cdot p_i(r_i^{(1:T)}, \r_{-i}^{(1:T)}) \\
		& = \sum_{\r^{(1:T)}} \mu(\pi(r_i^{(1:T)}), \r_{-i}^{(1:T)}) \cdot p_i(r_i^{(1:T)}, \r_{-i}^{(1:T)}) \\
		& = \sum_{\r^{(1:T)}} \mu(r_i^{(1:T)}, \r_{-i}^{(1:T)}) \cdot p_i(\pi^{-1}(r_i^{(1:T)}), \r_{-i}^{(1:T)}) \\
		& = \E_{\mu}[p_i(\pi^{-1}(r_i^{(1:T)}), \r_{-i}^{(1:T)})],
	\end{align*}
 	which means that the expected payment of truthfully reporting when the underlying distribution is $\tilde{\mu}$ is equal to the expected payment of inversely permuting the strategy when the distribution is $\mu$. Then it is easy to see that $p_i(\cdot)$ cannot be  strictly truthful for both $\mu$ and $\tilde{\mu}$.
	If $p_i(r_i^{(1:T)}, \r_{-i}^{(1:T)})$ is strictly truthful for $\tilde{\mu}$, then by definition, permuting the truthful reports should lead to strictly lower expected payment,
	$$
	\E_{\tilde{\mu}}[p_i(r_i^{(1:T)}, \r_{-i}^{(1:T)})]>\E_{\tilde{\mu}}[p_i(\pi(r_i^{(1:T)}), \r_{-i}^{(1:T)})],
	$$
	which will violate the strict truthfulness for $\mu$
	$$
	\E_{\mu}[p_i(\pi^{-1}(r_i^{(1:T)}), \r_{-i}^{(1:T)})] > \E_{\mu}[p_i(r_i^{(1:T)}, \r_{-i}^{(1:T)})]
	$$
	according to the two equalities above. 
\end{proof}
One may wonder whether the permutation can be replaced by any reporting strategy, i.e., whether the permutation matrix can be replaced by any Markov matrix. The answer is negative. For the proof to hold, we need the matrix $P$ to be invertible, and the only invertible Markov matrices are permutation matrices. 

The previous works that design strictly truthful multi-task peer prediction mechanisms make assumptions that automatically satisfy this condition. 
To our knowledge, this necessary condition did not appear in any previous work, but similar techniques have been used in a different setting. \citet{shnayder2016informed} used a similar approach to prove that the CA mechanism is maximally strong truthful among a broader class of mechanisms (Theorem 5.9).

\section{Linear Properties} \label{sec:linear}
In this section, we consider the reports that are linear in the participant's posterior $\mu(\omega|s_i)$. 
More specifically, the report function $\r_i$ is a length-$L$ vector with  $$\r_i = \G \cdot \bm{\mu}(\omega|s_i)$$ where $\G$ is a $L\times|\Omega|$ matrix that represents the linear transformation from the posterior $\mu(\omega|s_i)$ to $\r_i$. Or equivalently, each entry of $\r_i$ is the expectation of a random variable defined on $\Omega$. Common examples include
\begin{itemize}
\item the posterior itself $\ \r_i = \mu(\omega|s_i)$;
\item the moments of the state $\omega$ when the state is a real number $\E[\omega|s_i], \E[\omega^2|s_i], \E[\omega^3|s_i] \dots$.
\end{itemize}
The elicitability of such linear properties has been studied by \citet{abernethy2012characterization} when the designer can observe $\omega$ and design a payment based on both the report and $\omega$. They showed that linear properties are always elicitable in that case. But in peer prediction problems, the designer cannot observe $\omega$ but only have the participants' reports that are correlated with $\omega$. 

Our results show that it may not always be possible to elicit reports that are linear in the participants' posterior. We first consider two participants with signals independent conditioning on $\omega$. In this case, if the designer has a single-task problem or the designer only considers scoring mechanisms for a multi-task problem, then the designer basically can only elicit linear properties that are equivalent to $\mu(\omega|s_i)$, assuming that the designer is uncertain about the underlying distribution. We then look into the case when the reports $r_i$ is just the participants' posteriors $\mu(\omega|s_i)$. We give a necessary condition for $\mu(\omega|s_i)$ to be elicitable. This condition implies that the mechanisms proposed by \citet{kong2018water} are already the best we can hope for in their setting, in the sense that their mechanisms can work for any problem instance that can possibly be elicitable.

\subsection{Impossibility result for two agents}
Suppose we have two participants whose signals are conditionally independent, that is, the joint distribution $\mu(\omega, s_1,  s_2) = \mu(\omega) \mu(s_1|\omega) \mu(s_2|\omega)$ for all $\omega \in \Omega, s_1\in \mathcal{S}_1, s_2\in \mathcal{S}_2$. We assume that the designer is uncertain about the conditional distributions of the participants' signals $\mu(s_1|\omega), \mu(s_2|\omega)$.
We formally define the uncertainty as follows.
\begin{definition}
We say that a designer with an elicitability problem $\langle \f, M\rangle$ is minimally uncertain about the conditional distributions of the participants' signals if there exist a prior $\mu(\omega)$ and a set of possible conditional distributions $\mu(s_i|\omega)$ for each participant $i$, denoted by $M_i \subseteq \mathbb{R}^{|\mathcal{S}_i|\times|\Omega|}$, such that for any $\mu'(s_1|\omega)\in M_1, \dots, \mu'(s_n|\omega)\in M_n$, the joint distribution generated by them and the prior $\mu(\omega)$ 
$$
\mu'(\omega, s_1, \dots, s_n) = \mu(\omega)\mu'(s_1|\omega)\cdots \mu'(s_n|\omega), \ \forall \omega, s_1, \dots, s_n
$$
is a possible joint distribution $\mu'(\omega, s_1, \dots, s_n)\in M$, and each $M_i$ is open relative to the space of valid conditional distributions $(\Delta\mathcal{S}_i)^{|\Omega|}$.
\end{definition}
The main restriction here is that $M_i$ needs to be an open set relative to the space of valid conditional distributions. This means that for each participant $i$, there should exist a possible conditional distribution $\mu(s_i|\omega)$ such that the nearby conditional distributions are also possible.
Then if the designer never asks anyone for trivial reports that is a constant function of the participant's posterior, the only elicitable linear properties are the ones that are equivalent to the posteriors $\mu(\omega|s_i)$, assuming that the designer only wants scoring mechanisms for multi-task problems. More specifically, we have the following theorem.
\begin{theorem}\label{thm:linear_mult}
For a single-task/multiple-task elicitability problem $\langle \f, M\rangle$ with $n=2$ with conditional independent signals, if the designer is minimally uncertain about the conditional distributions of the participants' signals,
and for any participant $i$, the truthful report $\r_i = \G \cdot \mu(\omega|s_i)$ is a non-constant linear function of the posterior $\mu(\omega|s_i)$, then $\r$ is elicitable/elicitable by scoring mechanisms only if there is a one-to-one mapping from $\mu(\omega|s_i)$ to $\r_i$ for all $i$, i.e., 
\begin{align}\label{eqn:full_rank}
    \text{matrix } \left[\begin{array}{c} \G \\ \bm{1}^\top\end{array}\right] \text{ has rank } |\Omega|,
\end{align}
where $\bm{1}$ is the all-ones vector with length $|\Omega|$.
\end{theorem}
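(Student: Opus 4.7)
The plan is to prove the contrapositive: assuming the stacked matrix $\left[\begin{array}{c}\G\\\bm{1}^\top\end{array}\right]$ has rank strictly smaller than $|\Omega|$, I will show $\r$ is not elicitable (in the single-task case) and not elicitable by any scoring mechanism (in the multi-task case). The main tool is the necessary condition given by Proposition~\ref{coro:marginal} (robust stochastic relevancy for a given marginal of the other agents' reports), together with its single-task counterpart Proposition~\ref{thm:necessary_1}. Both of these reduce the task to exhibiting, for some realizable marginal $\mu(\r_{-i})$, a single conditional distribution lying in $Q_i(\r_A, \mu(\r_{-i})) \cap Q_i(\r_B, \mu(\r_{-i}))$ for some $\r_A \neq \r_B \in \mathcal{R}_i$.

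\textbf{Key observation.} Under conditional independence, an agent's posterior over the other agent's report is a linear image of her posterior over $\omega$: $\bm\mu(\r_{-i}\mid s_i) = M_\mu\, \bm\mu(\omega\mid s_i)$, where the entries of $M_\mu$ are the conditional probabilities $\mu(\r_{-i}\mid\omega)$. Pick a nonzero $\mathbf{d}\in \mathbb{R}^{|\Omega|}$ in the kernel of the stacked matrix, so $\G\mathbf{d} = \bm{0}$ and $\bm{1}^\top \mathbf{d} = 0$. Then any two posteriors differing by a small multiple of $\mathbf{d}$ yield identical reports but generally different images under $M_\mu$.

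\textbf{Construction of overlap.} Since $\r$ is non-constant in the posterior, fix $\mathbf{p}_A^*, \mathbf{p}_B^*$ in the interior of $\Delta\Omega$ with distinct reports $\r_A = \G \mathbf{p}_A^* \neq \G \mathbf{p}_B^* = \r_B$. Using the minimal-uncertainty assumption, build two distributions $\mu^A, \mu^B \in M$ sharing the same $\mu(s_j\mid\omega)\in M_j$---so that their marginals $\mu^A(\r_j) = \mu^B(\r_j)$ automatically coincide and the same matrix $M$ governs both---while differing in their choice of $\mu(s_i\mid\omega) \in M_i$. Using the openness of $M_i$, choose these so that some signal under $\mu^A$ has posterior $\mathbf{p}^A = \mathbf{p}_A^* + \alpha \mathbf{d}$ and some signal under $\mu^B$ has posterior $\mathbf{p}^B = \mathbf{p}_B^* + \beta \mathbf{d}$ for scalars $\alpha,\beta$ to be determined. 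The reports are still $\r_A$ and $\r_B$ because $\G \mathbf{d} = \bm{0}$, and the equation $M \mathbf{p}^A = M \mathbf{p}^B$ reduces to a single linear constraint in $(\alpha,\beta)$. This constraint is solvable provided the baseline $\mu(s_j\mid\omega)$ is chosen inside $M_j$ so that $M\mathbf{d}\neq \bm{0}$; such a choice exists because this is an open condition and $M_j$ is an open subset of valid conditionals. The common value $q := M\mathbf{p}^A = M\mathbf{p}^B$ then lies in $Q_i(\r_A, \mu(\r_j)) \cap Q_i(\r_B, \mu(\r_j))$.

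\textbf{Main obstacle.} The hard step is the construction: simultaneously arranging admissible conditionals in $M_i$ and $M_j$, a shared marginal of $\r_{-i}$, and two posteriors with different reports but identical images under $M$. Openness of $M_i$ and $M_j$ provides exactly the slack needed, and the direction $\mathbf{d}\in\ker\left[\begin{array}{c}\G\\\bm{1}^\top\end{array}\right]$ is precisely the tool that allows the posterior to move without disturbing the report. Once the overlap is produced, Proposition~\ref{coro:marginal} (or Proposition~\ref{thm:necessary_1} when $T=1$) immediately rules out elicitability, completing the contrapositive.
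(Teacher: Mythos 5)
There is a genuine gap, and it is in the step you label the ``main obstacle.'' You insist that $\mu^A$ and $\mu^B$ share the same conditional $\mu(s_j\mid\omega)$, so that one fixed matrix $M$ (with entries $\mu(\r_j\mid\omega)$) governs both scenarios, and you then need two posteriors with \emph{different} reports to have the \emph{same} image under $M$. That forces $\ker M$ to contain a vector $\mathbf{v}=\mathbf{p}_A^*-\mathbf{p}_B^*+(\beta-\alpha)\mathbf{d}$ with $\G\mathbf{v}\neq 0$. Nothing in the minimal-uncertainty assumption provides such a degenerate conditional: $M_j$ is merely an open neighborhood of some arbitrary center, and if that center induces a report-conditional matrix of full column rank (e.g.\ $\Omega=\{1,2,3\}$, $\G=(1,0,0)$, agent $j$'s signal conditional of rank $3$ with all posteriors having distinct first coordinates), then every conditional in a small ball also does, distinct posteriors over $\omega$ always yield distinct posteriors over $\r_j$, and no overlap of the form $Q_i(\r_A,\mu(\r_j))\cap Q_i(\r_B,\mu(\r_j))\neq\emptyset$ can be produced with agent $j$'s conditional held fixed. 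Relatedly, your claim that $M\mathbf{p}^A=M\mathbf{p}^B$ ``reduces to a single linear constraint in $(\alpha,\beta)$'' solvable when $M\mathbf{d}\neq 0$ is incorrect: it is the vector equation $M(\mathbf{p}_A^*-\mathbf{p}_B^*)=(\beta-\alpha)M\mathbf{d}$ with one scalar unknown, solvable only if $M(\mathbf{p}_A^*-\mathbf{p}_B^*)$ happens to be parallel to $M\mathbf{d}$, which is generically false. (A smaller issue: you also assume arbitrary interior points $\mathbf{p}_A^*,\mathbf{p}_B^*$ are realizable as posteriors under some conditional in $M_i$; openness of $M_i$ only gives posteriors near those induced by its center.)

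The paper's proof avoids all of this by perturbing \emph{both} sides in a coordinated way, and this is where the hypothesis $\mathrm{rank}\bigl[\begin{smallmatrix}\G\\ \bm{1}^\top\end{smallmatrix}\bigr]<|\Omega|$ is actually consumed. Agent $i$'s posterior at one signal is shifted by $\delta\bm{\beta}$ with $\G\bm{\beta}\neq 0$, $\bm{1}^\top\bm{\beta}=0$ (so the report changes), and agent $j$'s likelihood is simultaneously shifted by a rank-one term $\mathbf{k}\bm{\alpha}^\top$, where $\bm{\alpha}$ satisfies $\bigl[\begin{smallmatrix}\G\\ \bm{1}^\top\end{smallmatrix}\bigr]\mathrm{diag}(\bm{\mu}(\omega))\bm{\alpha}=0$ --- a direction that exists precisely because of the rank deficiency. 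This keeps agent $j$'s reports and the marginal of $\r_j$ unchanged while cancelling the effect of $\delta\bm{\beta}$ on agent $i$'s posterior about $\r_j$, after which Proposition~\ref{coro:marginal} (or Proposition~\ref{thm:necessary_1}) applies. In your proposal the kernel direction is used only on agent $i$'s posterior (where it cannot change the report and hence cannot create the needed conflict), so the rank-deficiency hypothesis never does the work it must do; to repair the argument you would have to let the two compared distributions differ in agent $j$'s conditional, essentially reproducing the paper's construction.
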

Here condition~\eqref{eqn:full_rank} is equivalent to that there exists a one-to-one mapping from $\mu(\omega|s_i)$ to $\r_i$ because if the matrix in~\eqref{eqn:full_rank} has rank $|\Omega|$, then it has linearly independent columns, which means that it has a left inverse $H$ such that $H\cdot\left[\begin{array}{c} \G \\ \bm{1}^\top\end{array}\right] = I.$
Then we can recover $\mu(\omega|s_i)$ from $\r_i$ as $H \cdot \left[\begin{array}{c} \r_i \\ 1\end{array}\right]= H\cdot\left[\begin{array}{c} \G \\ \bm{1}^\top\end{array}\right]\cdot \mu(\omega|s_i) = \mu(\omega|s_i)$.

According to Theorem~\ref{thm:linear_mult},\, a designer who is interested in some information that is linear in the participants' posteriors should just try to elicit their posteriors $\mu(\omega|s_i)$, if the conditions in the theorem are satisfied. The proof of the theorem is based on the necessary condition in Corollary~\ref{coro:marginal}. The proof of Theorem~\ref{thm:linear_mult} is quite involved. The high level idea is that if the columns of the matrix $ \left[\begin{array}{c} \G \\ \bm{1}^\top\end{array}\right]$ are linearly dependent, we can find two possible distribution $\mu, \mu'\in M$ so that there exist two signal realizations $s_i, s_i'$ of a participant $i$  that will lead to the same posterior in $\mu$ and $\mu'$
\begin{align*}
	\mu(\r_{-i}|s_i) = \mu'(\r_{-i}|s_i'),
\end{align*}
but the truthful reports are different
\begin{align*}
	\mathbf{G} \cdot \mu(\omega|s_i) \neq \mathbf{G} \cdot \mu'(\omega|s_i').
\end{align*}
So it violates Proposition~\ref{coro:marginal}. We defer the full proof to Appendix~\ref{app:linear_neg}.

\subsection{Prediction on the State}
In this section, we study the report function that is just the participants' posteriors about $\omega$ after observing their signals $\r_i  = \mu(\omega|s_i)$. We provide a necessary condition for $\mu(\omega|s_i)$ to be elicitable when the participants' signals are conditionally independent and the designer is uncertain about the conditional distributions of the participants' signals. 

For each participant $i$, our necessary condition will look at  the conditional distribution of the other participants' signals $\mu(\s_{-i}|\omega)$.
For every $\mu \in M$, define  $\P^\mu_i$ to be a $|\mathcal{S}_{-i}|\times |\Omega|$ matrix with entry in row $\s_{-i}$ and column $\omega$ equal to  $\mu(\s_{-i}|\omega)$. 
\begin{theorem} \label{thm:posterior_nec}
For a single-task/multiple-task elicitability problem $\langle \f, M\rangle$ with $\r_i = \mu(\omega|s_i)$ with conditional independent signals, if the designer is minimally uncertain about the conditional distributions of the participants' signals, then the condition 
$$
\text{rank}(\P^\mu_i) = |\Omega|, \ \forall i, \forall \mu
$$
is necessary for $\langle \f, M \rangle$ to be elicitable/elicitable by scoring mechanisms.
\end{theorem}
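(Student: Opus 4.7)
The plan is to prove the contrapositive via robust stochastic relevance: if there exist $\mu^* \in M$ and $i^*$ with $\text{rank}(\P^{\mu^*}_{i^*}) < |\Omega|$, I will construct $\mu, \mu' \in M$ together with a single signal realization $s_{i^*}^*$ for which the truthful reports differ, $q = \mu(\omega|s_{i^*}^*) \neq \mu'(\omega|s_{i^*}^*) = q'$, while $\mu(\r_{-i^*}) = \mu'(\r_{-i^*})$ and $\mu(\r_{-i^*}|s_{i^*}^*) = \mu'(\r_{-i^*}|s_{i^*}^*)$. This common conditional posterior then belongs to both $Q_{i^*}(q,\mu(\r_{-i^*}))$ and $Q_{i^*}(q',\mu(\r_{-i^*}))$, which contradicts Proposition~\ref{thm:necessary_1} in the single-task case and Proposition~\ref{coro:marginal} in the multi-task scoring case, ruling out elicitability.

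The first step extracts a suitable kernel vector. Because every column of $\P^{\mu^*}_{i^*}$ is a probability distribution over $\s_{-i^*}$, we have $\mathbf{1}^\top \P^{\mu^*}_{i^*} = \mathbf{1}^\top$, so any $\mathbf{v} \in \ker(\P^{\mu^*}_{i^*})$ automatically satisfies $\mathbf{1}^\top\mathbf{v}=0$; the rank deficiency guarantees a nonzero such $\mathbf{v}$. The second step locates an interior posterior $q^* \in \mathrm{int}(\Delta\Omega)$ attained at some signal realization $s_{i^*}^*$ under some $\mu^\dagger \in M$, so that $q := q^* + \epsilon\mathbf{v}$ and $q' := q^* - \epsilon\mathbf{v}$ are distinct elements of $\Delta\Omega$ for sufficiently small $\epsilon>0$. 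The minimal uncertainty assumption lets me perturb only the coordinate $\mu^*(s_{i^*}|\omega)$, leaving $\mu^*(s_j|\omega)$ fixed for $j\ne i^*$, into the interior of $(\Delta\mathcal{S}_{i^*})^{|\Omega|}$; crucially, $\P^{\mu^\dagger}_{i^*}$ depends only on the unaltered coordinates, so the rank deficiency is preserved and the same $\mathbf{v}$ remains in its kernel.

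The third step realizes $q$ and $q'$ as posteriors at $s_{i^*}^*$ in two distributions $\mu,\mu'\in M$ that agree on $\mu(s_j|\omega)$ for every $j\neq i^*$. I invert Bayes' rule to set $\mu(s_{i^*}^*|\omega) = q(\omega)\,\mu^\dagger(s_{i^*}^*)/\mu^\dagger(\omega)$ and rescale the remaining conditionals $\mu(s_{i^*}|\omega)$ for $s_{i^*}\neq s_{i^*}^*$ proportionally to $\mu^\dagger(s_{i^*}|\omega)$ so that column sums remain one; for small $\epsilon$ the resulting conditional lies in a neighborhood of $\mu^\dagger(s_{i^*}|\omega)$, hence in $M_{i^*}$ by openness, giving $\mu\in M$. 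The symmetric construction with $q'$ yields $\mu'\in M$. Since $\mu$ and $\mu'$ share the likelihoods for $j\neq i^*$, the report functions $f_{j,\mu}$ and $f_{j,\mu'}$ coincide and so do the marginals $\mu(\r_{-i^*})=\mu'(\r_{-i^*})$; by conditional independence, the conditional distribution over $\s_{-i^*}$ given $s_{i^*}^*$ equals $\P^{\mu^\dagger}_{i^*}q$ under $\mu$ and $\P^{\mu^\dagger}_{i^*}q'$ under $\mu'$, which are identical by the kernel property, so the induced distributions over $\r_{-i^*}$ also agree, completing the violation.

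The main obstacle is the compatibility verification in the third step: ensuring that the explicit likelihood $\mu(s_{i^*}|\omega)$ prescribed by Bayes' rule actually sits inside the admissible set $M_{i^*}$. Openness of $M_{i^*}$ relative to $(\Delta\mathcal{S}_{i^*})^{|\Omega|}$ together with the freedom to shrink $\epsilon$ resolves this in principle, but one must quantitatively pin down an open neighborhood of $\mu^\dagger(s_{i^*}|\omega)$ contained in $M_{i^*}$ and then calibrate $\epsilon$ to stay inside it; the preparatory interiorization of $q^*$ in step two is precisely what prevents degenerate boundary posteriors from pushing $q$ or $q'$ outside $\Delta\Omega$.
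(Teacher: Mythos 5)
Your proposal is correct and takes essentially the same route as the paper's own proof: choose a nonzero vector in the kernel of $\P^{\mu}_{i}$, perturb agent $i$'s posterior at one signal realization along that direction by modifying only agent $i$'s own conditional distribution (which openness of $M_i$ permits) while keeping the prior and the other agents' likelihoods fixed, so that the truthful report changes but the belief about $\r_{-i}$ and its marginal do not, contradicting Proposition~\ref{thm:necessary_1} in the single-task case and Proposition~\ref{coro:marginal} in the multi-task scoring case. Your additional steps (noting $\mathbf{1}^\top\mathbf{v}=0$, the explicit Bayes inversion of the perturbed likelihood, the interiorization to avoid boundary posteriors, and the symmetric two-sided perturbation) merely spell out details the paper's argument leaves implicit.
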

The proof can be found in Appendix~\ref{app:pos_nec}.

This necessary condition basically means that to guarantee that truthfully reporting is a strictly optimal strategy for participant $i$ (at the equilibrium), the other participants' signals need to be sufficiently correlated with the state $\omega$. Since \citet{kong2018water} proposed a mechanism that elicits $\mu(\omega|s_i)$ when the designer knows the prior $\mu(\omega)$ (see details in Section~\ref{sec:prelim_posterior}), the condition in Theorem~\ref{thm:posterior_nec} is also sufficient when the prior $\mu(\omega)$ is known.
\begin{corollary}
For a single-task/multiple-task elicitability problem $\langle \f, M\rangle$ with $\r_i = \mu(\omega|s_i)$ with known prior $\mu(\omega)$ with conditionally independent signals, if the designer is minimally uncertain about the conditional distributions of the participants' signals, then the condition that $\text{rank}(\P^\mu_i) = |\Omega|$ for all $i$ and $\mu$ is both necessary and sufficient for $\r_i = \mu(\omega|s_i)$ to be elicitable/elicitable by scoring mechanisms.
\end{corollary}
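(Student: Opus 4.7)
The plan is to prove the corollary by separately establishing necessity and sufficiency, both of which follow from results already in hand. The main content of this corollary is the observation that once the prior $\mu(\omega)$ is known, the rank condition $\mathrm{rank}(\P^\mu_i) = |\Omega|$ is not merely necessary but also captures exactly the elicitability of the posterior. So the proof is essentially a packaging argument rather than a new derivation.

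For the necessity direction, the argument is immediate from Theorem~\ref{thm:posterior_nec}. That theorem already gives $\mathrm{rank}(\P^\mu_i) = |\Omega|$ as a necessary condition for $\r_i = \mu(\omega|s_i)$ to be elicitable (or elicitable by scoring mechanisms) under conditional independence and minimal uncertainty, and it does not use any knowledge of the prior. In particular, restricting the class $M$ by fixing a known prior $\mu(\omega)$ only makes $M$ smaller, so the necessity passes through without modification: if the designer can elicit $\mu(\omega|s_i)$ with a known prior, she could in particular elicit it on the (possibly larger) class without the prior constraint, and Theorem~\ref{thm:posterior_nec} applies.

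For the sufficiency direction, the plan is to invoke the mechanism from \citet{kong2018water} recalled in Section~\ref{sec:prelim_posterior}. That lemma states that when the prior $\mu(\omega)$ is known, signals are conditionally independent given $\omega$, and $\mathrm{rank}(\P^\mu_i) = |\Omega|$ holds for every $\mu \in M$ and every $i$, then the explicit scoring mechanism
\[
p_i(\r_i,\r_{-i}) = \log\!\left( \sum_{\omega\in \Omega} \frac{\r_i(\omega)\, g(\r_{-i},\omega)}{\mu(\omega)}\right)
\]
with $g$ defined there is strictly truthful for any $\mu \in M$ (applied per task in the multi-task case, which fits inside the scoring mechanism framework of Definition~\ref{def:scoring}). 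Thus the rank condition is sufficient, and because the mechanism is a scoring mechanism, both flavors (elicitable, and elicitable by scoring mechanisms) follow at once.

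Because both directions are direct citations of previously established results, there is essentially no obstacle to work around; the only thing to be careful about is to make sure the hypothesis lists match up: conditional independence, minimal uncertainty, and a fixed known prior on the sufficiency side (so that the Kong mechanism can actually be written down), and conditional independence plus minimal uncertainty on the necessity side (so that Theorem~\ref{thm:posterior_nec} applies in its stated form). A one-line remark stating that the two directions together give the ``if and only if'' completes the proof.
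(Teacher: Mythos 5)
Your proposal is correct and follows exactly the paper's (implicit) argument: necessity is a direct application of Theorem~\ref{thm:posterior_nec} to the known-prior class $M$ (which is minimally uncertain by hypothesis, since the definition of minimal uncertainty already fixes a single prior), and sufficiency is the strictly truthful per-task scoring mechanism of \citet{kong2018water} recalled in Section~\ref{sec:prelim_posterior}. One caveat: your side remark that elicitability with a known prior would imply elicitability ``on the (possibly larger) class without the prior constraint'' is backwards (elicitability only passes from larger to smaller $M$), but it is also unnecessary, since Theorem~\ref{thm:posterior_nec} applies to the known-prior $M$ directly.
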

The necessity of the condition  implies that  the mechanisms proposed by \citet{kong2018water} are able to solve all the problem instances that are elicitable (by scoring mechanisms in the multi-task setting) if we consider the problem of eliciting $\mu(\omega|s_i)$ when the prior $\mu(\omega)$ is known.

\begin{corollary}
	For a single-task/multiple-task elicitability problem $\langle \f, M\rangle$ with $\r_i = \mu(\omega|s_i)$ with known prior $\mu(\omega)$ with conditionally independent signals, if the designer is minimally uncertain about the conditional distributions of the participants' signals, then all the problem instances $\langle \mathbf{f}, M\rangle$ that are elicitable/elicitable by scoring mechanisms can be solved by the mechanism proposed by \citet{kong2018water}, i.e., their mechanism will be strictly truthful for all $\mu \in M$.
\end{corollary}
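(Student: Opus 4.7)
The plan is to obtain this corollary as an immediate consequence of Theorem~\ref{thm:posterior_nec} combined with the sufficiency lemma from Section~\ref{sec:prelim_posterior} (due to \citet{kong2018water} and \citet{chen2020truthful}). Concretely, suppose $\langle \mathbf{f}, M\rangle$ with $\r_i=\mu(\omega|s_i)$ is elicitable (respectively, elicitable by scoring mechanisms) under the hypotheses of the corollary: conditionally independent signals, known prior $\mu(\omega)$, and minimal uncertainty about the conditional distributions $\mu(s_i|\omega)$. I would first apply Theorem~\ref{thm:posterior_nec}, whose conclusion is precisely that $\mathrm{rank}(\P^\mu_i)=|\Omega|$ for every participant $i$ and every $\mu\in M$.

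Next, I would feed this rank condition, together with the known prior and conditional independence, directly into the lemma recalled in Section~\ref{sec:prelim_posterior}. That lemma exhibits the explicit per-task payment
$$
p_i(\r_i,\r_{-i}) \;=\; \log\!\left(\sum_{\omega\in\Omega}\frac{\r_i(\omega)\,g(\r_{-i},\omega)}{\mu(\omega)}\right), \qquad g(\r_{-i},\omega)=\frac{1}{A(\r_{-i})}\cdot\frac{\prod_{j\neq i}\r_j(\omega)}{\mu(\omega)^{n-1}},
$$
and guarantees strict truthfulness whenever $\mathrm{rank}(\P^\mu_i)=|\Omega|$. Crucially, this payment rule depends only on the known prior $\mu(\omega)$ and on the submitted reports, not on which $\mu\in M$ is the true joint distribution. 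Hence it is strictly truthful simultaneously for \emph{every} $\mu\in M$, which is exactly the definition of an elicitable problem being solved by the mechanism. For the multi-task version, the same per-task rule is summed over tasks, yielding a scoring mechanism in the sense of Definition~\ref{def:scoring}; since each summand is individually strictly truthful, so is the sum.

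There is essentially no obstacle here beyond bookkeeping: the proof is a one-line composition of a necessary condition (Theorem~\ref{thm:posterior_nec}) with a matching sufficient condition (the Kong--Schoenebeck lemma). The only mild care point is to check that the hypotheses line up verbatim---known prior, conditional independence, minimal uncertainty---and that the rank conclusion produced by Theorem~\ref{thm:posterior_nec} is stated for \emph{every} $\mu\in M$, which is precisely what the sufficiency lemma needs in order to yield a single mechanism that works uniformly over $M$.
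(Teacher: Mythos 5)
Your proposal is correct and matches the paper's own reasoning: the corollary is obtained exactly by composing the necessary rank condition from Theorem~\ref{thm:posterior_nec} with the sufficiency lemma of \citet{kong2018water} and \citet{chen2020truthful} recalled in Section~\ref{sec:prelim_posterior}, whose payment rule depends only on the known prior and the reports and is therefore strictly truthful uniformly over $M$. No gap; this is the same one-step argument the paper intends.
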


\section{Discussion}
We study the elicitability of multi-task peer prediction problems. Our main contribution includes (1) we characterize the elicitable multi-task peer prediction problems when the designer only uses scoring mechanisms, (2) we are the first to study the elicitability of properties that are linear in the participants' posteriors. 
We believe that the most intriguing future direction is to further simplify our characterization and find more applications of this result, by either considering more specific settings or adopting more advanced tools. For example, our characterization does not impose any restriction on the designer's knowledge about the distribution: the set $M$ can be an arbitrary set of possible distributions. An immediate question is: can we simplify the characterization if $M$ has a certain structure? Our negative result for linear properties only used a simplified version of the necessary condition. We believe that stronger results can be proved if we deploy more of the structure of power diagrams, e.g. the convexity of the cells. Finally, our result shows that it is possible to have a simpler characterization by restricting the class of mechanisms. It may be possible to simplify our result by considering other classes of mechanisms.

\section*{Acknowledgements}
	The authors would like to thank all the anonymous reviewers for their careful reading, valuable comments, and constructive remarks.
This work is supported by the National Science Foundation under Grant No. IIS 2007887.

\newpage
\bibliographystyle{abbrvnat}
\bibliography{ref}

\newpage
\appendix
\section{Table of notations}

\begin{tabular}{p{0.2\textwidth}p{0.35\textwidth}p{0.35\textwidth}}
\toprule
symbols & single tasks & multiple tasks\\
\midrule
  space of signal & $\bm{\mathcal{S}} := \times_i \mathcal{S}_i$, and $\mathbf{s}\in \bm{\mathcal{S}}$ & $\bm{\mathcal{S}}:=\times_i \mathcal{S}_i$, and $\mathbf{s}^{(1:T)}\in \bm{\mathcal{S}}^T$\\ \midrule
  common prior & $\mu$ a distribution on $\Omega\times \mathcal{S}_1\dots \times \mathcal{S}_n$ & $\mu^{\otimes T}$, each task is identically and independently sampled from $\mu$\\ \midrule
  Designer's knowledge & $M$ a subset of distributions that contains the underlying distribution $\mu\in M$& same\\ \midrule
    report space & $\bm{\mathcal{R}}:=\times_i\mathcal{R}_i$ 
    & \text{same} 
    \\ \midrule
  private signal  & $\mathbf{S}:=(S_i)_i$ the random variable of signals sampled from $\mu$ & $\mathbf{S}^{(1:T)}:=(\mathbf{S}^{(1:T)}_i)_i = (S_i^{(t)})_{i,t}$ a random variable sampled from $\mu^{\otimes T}$\\ \midrule
  truthful map & $\mathbf{f} = (f_{i,\mu})_{i,\mu}$ & same\\ \midrule
  truthful report &  $\mathbf{R} = (R_i)_i$, a random variable $(f_{i,\mu}(S_i))_{i}$ where $(S_i)_i\sim \mu$ & $\mathbf{R}^{(1:T)} = (R^{(1:T)}_i)_{i}= (R^{(t)}_i)_{i,t}$, a random variable\\ \midrule
  strategy &  $\bm{\sigma} = (\sigma_i)_i$, a collection of $n$ independent random mappings where $\sigma_i: \mathcal{S}_i\to \Delta(\mathcal{R}_i)$. &  $\bm{\sigma} = (\sigma_i^{(1:T)})_{i} = (\sigma_i^{(t)})_{i,t}$ a collection of $nT$ independent random mapping $\sigma_i^{(t)}:\mathcal{S}_i\to \Delta(\mathcal{R}_i)$\\ \midrule
  strategic report &  $\tilde{\mathbf{R}} = (\tilde{R_i})_i = (\sigma_i(S_i))_i$, random variables $(\sigma_{i}(S_i))_{i}$ depends on $(\sigma_i)_i$ and $(S_i)_i$  & $\tilde{\mathbf{R}}^{(1:T)} = (\tilde{R}^{(1:T)}_i)_{i} = (\tilde{R}^{(t)}_i)_{i,t} = (\sigma_i^{(t)}(S_i^{(t)}))_{i,t}$\\ \midrule
  payment & $\mathbf{p} = (p_i)_i$ where $p_i:\times_i \mathcal{R}_i\to \mathbb{R}$ & $\mathbf{p} = (p_i)_i$ where $p_i:\times_i \mathcal{R}_i^T\to \mathbb{R}$ \\
\bottomrule
\end{tabular}

\section{Missing Proofs in Section~\ref{sec:pre}} \label{app:pre_ball}

Given a distribution $\lambda$ on a finite set $\Omega\times \mathcal{S}_1\times\dots\times\mathcal{S}_n$ and $\epsilon>0$, let $B(\lambda, \epsilon) = \{\mu: \|\mu-\lambda\|_1\le \epsilon\}\subset \Delta(\Omega\times \mathcal{S}_1\times\dots\times\mathcal{S}_n)\}$ be the collection of distributions whose total variation distance from $\lambda$ is smaller than $\epsilon$.  Note that if $\epsilon = 0$, $B(\lambda, 0)$ is a singleton and we know exactly the true distribution.  There are several truthful mechanisms, e.g., \citet{MRZ05}.  The following theorem shows we can still have truthful mechanisms when $\epsilon>0$ is small enough.
\begin{theorem}\label{thm:ball}
If $f_i(s_i) = s_i$ for all $i$ and $\lambda\in \Delta(\Omega\times \mathcal{S}_1, \dots \mathcal{S}_n)$ has full support $\lambda>0$ and is stochastic relevant so that $\lambda(\mathbf{s}_{-i}|s_i)\neq \lambda(\mathbf{s}_{-i}|s_i')$ for all $i\in [n]$ and distinct $s_i$ and $s_i'$, there exists $\epsilon>0$ such that $\langle \f, B(\lambda, \epsilon)\rangle$ is elicitable.
\end{theorem}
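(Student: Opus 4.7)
The plan is to invoke the characterization in Theorem~\ref{thm:single_chr}: it suffices to exhibit, for every agent $i$, a power diagram whose cells contain the sets $Q_i(s_i)$ in the way that definition requires. I will build the diagram at the ``center'' $\lambda$ and then perturb to handle a neighborhood.

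Consider first the degenerate case $\epsilon = 0$, so that $M = \{\lambda\}$ and $Q_i(s_i) = \{\lambda(\mathbf{s}_{-i}\mid s_i)\}$ is a single point. By stochastic relevance, the points $\{\lambda(\mathbf{s}_{-i}\mid s_i) : s_i \in \mathcal{S}_i\}$ are pairwise distinct elements of $\Delta_{|\mathcal{S}_{-i}|}$. Any finite set of pairwise distinct points in a simplex can be fitted into a power diagram: concretely, the Voronoi diagram with those points as sites is a power diagram (choose $\mathbf{v}^{s_i} = -\lambda(\cdot\mid s_i)$ and $w^{s_i} = -\tfrac{1}{2}\|\lambda(\cdot\mid s_i)\|^{2}$, so that $\langle u,\mathbf{v}^{s_i}\rangle - w^{s_i} = \tfrac{1}{2}\|u - \lambda(\cdot\mid s_i)\|^2 - \tfrac{1}{2}\|u\|^2$, and minimizing over $s_i$ returns the nearest site). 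By construction each $\lambda(\mathbf{s}_{-i}\mid s_i)$ is the unique nearest site to itself and therefore lies in $\text{cell}(\mathbf{v}^{s_i})$.

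To extend to $\epsilon > 0$, I use continuity. Since $\lambda$ has full support, $m := \min_{i,s_i}\lambda(s_i) > 0$; for any $\mu \in B(\lambda,\epsilon)$ with $\epsilon < m$ we have $\mu(s_i) > 0$, so the posterior $\mu(\mathbf{s}_{-i}\mid s_i) = \mu(s_i,\mathbf{s}_{-i})/\mu(s_i)$ is well-defined and depends continuously on $\mu$. Membership of a point $u$ in $\text{cell}(\mathbf{v}^{s_i})$ is governed by the strict inequalities
\[
\langle u,\mathbf{v}^{s_i}\rangle - w^{s_i} \;<\; \langle u,\mathbf{v}^{s_i'}\rangle - w^{s_i'}\qquad \text{for all } s_i' \neq s_i,
\]
which are open conditions in $u$. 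At $u = \lambda(\mathbf{s}_{-i}\mid s_i)$ these inequalities all hold, so they persist in an open neighborhood of $\lambda(\mathbf{s}_{-i}\mid s_i)$. Because the collection of pairs $(i,s_i)$ is finite, I can take $\epsilon$ to be the minimum of finitely many admissible tolerances (and also smaller than $m$), which yields a single $\epsilon > 0$ such that for every $\mu \in B(\lambda,\epsilon)$ and every $(i,s_i)$, $\mu(\mathbf{s}_{-i}\mid s_i) \in \text{cell}(\mathbf{v}^{s_i})$. Equivalently $Q_i(s_i) \subseteq \text{cell}(\mathbf{v}^{s_i})$, and Theorem~\ref{thm:single_chr} then produces a mechanism that is strictly truthful for every $\mu \in B(\lambda,\epsilon)$.

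The argument is routine once the framework is in place; the only point requiring any care is to keep $\mu(s_i)$ uniformly bounded away from $0$ so the posteriors are well-defined and continuous in $\mu$, and to choose a uniform $\epsilon$ across all agents and signals. Both are handled by the full-support assumption on $\lambda$ combined with the finiteness of $\Omega$ and the $\mathcal{S}_i$.
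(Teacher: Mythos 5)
Your proof is correct and follows essentially the same route as the paper's: fix one power diagram that strictly separates the posteriors at the center $\lambda$, then use the full-support assumption to show that the strict cell inequalities persist for every $\mu$ in a sufficiently small $\ell_1$-ball, so the same diagram (and hence Theorem~\ref{thm:single_chr}) applies to all of $B(\lambda,\epsilon)$. The only differences are cosmetic: the paper obtains the diagram from the elicitability of $\{\lambda\}$ (via \citet{MRZ05} and the necessity direction of Theorem~\ref{thm:single_chr}) and makes the perturbation step quantitative, bounding $\|\mu(\mathbf{s}_{-i}\mid s_i)-\lambda(\mathbf{s}_{-i}\mid s_i)\|_1$ by a constant times $\|\mu-\lambda\|_1$ and applying H\"older's inequality, whereas you construct the Voronoi-type diagram explicitly and argue qualitatively by continuity, openness of the cells, and finiteness of the pairs $(i,s_i)$.
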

The idea is very similar to the maximal robust mechanisms in \citet{frongillo2017geometric}.  However, instead of the joint distribution being close to a center $\lambda$, their result requires all conditional distributions are close to a center.  The proof is straightforward and it is in the appendix.

\begin{proof}[Proof of Corollary~\ref{thm:necessary_1}]
If $Q_i(r_i)\cap Q_i(r_i') \neq \emptyset$, then there exist $\mu_1$ and $\mu_2$ with
$$
\mu_1(\r_{-i}|s_i) = \mu_2(\r_{-i}|s_i'), \quad f(s_i) = r_i, \quad f(s_i') = r_i'.
$$
Then if there exists a truthful payment rule $\p(\r)$, then by the definition of elicitability Definition~\ref{def:elicitable} it should satisfy that
$$
\E_{\r_{-i} \sim \mu_1(\r_{-i} | s_i)} [p_i(r_i, \r_{-i})] > \E_{\r_{-i} \sim \mu_1(\r_{-i} | s_i)} [p_i(r_i', \r_{-i})],
$$
$$
\E_{\r_{-i} \sim \mu_2(\r_{-i} | s_i')} [p_i(r_i', \r_{-i})] > \E_{\r_{-i} \sim \mu_2(\r_{-i} | s_i')} [p_i(r_i, \r_{-i})].
$$
which is impossible if $\mu_1(\r_{-i}|s_i) = \mu_2(\r_{-i}|s_i')$.
\end{proof}

\begin{proof}[Proof of Theorem~\ref{thm:ball}]
Because $\lambda$ is stochastic relevant, $\langle \f, \{\lambda\}\rangle$ is elicitable.~\cite{MRZ05}

By Theorem~\ref{thm:single_chr}, for agent $i$, there exists a power diagram with sites $\mathbf{v}^r\in \mathbb{R}^{|\mathcal{R}_{-i}|}$ and weights $w^r\in \mathbb{R}$ for all $r\in \mathcal{R}_i$ such that for all $r_i\in \mathcal{R}_i$, 
$r_i = \arg\min_s\{\langle \lambda(\mathbf{r}_{-i}|r_i), \mathbf{v}^s\rangle-w^s\}$, and for any distinct pair $r_i, r_i'$
$$\langle \lambda(\mathbf{r}_{-i}|r_i), \mathbf{v}^{r_i}\rangle-w^{r_i}>\langle \lambda(\mathbf{r}_{-i}|r_i), \mathbf{v}^{r_i'}\rangle-w^{r_i'}$$

Now we want to prove the same power diagram works for $\langle \f, B(\lambda, \epsilon)\rangle$ when $\epsilon$ is small enough: For all $r_i\in \mathcal{R}_i$ and $\mu\in B(\lambda, \epsilon)$, $cell(\mathbf{v}^{r_i})$ contains $Q_i(r_i)$.

First we can bound the distance between conditional distributions by the distance between their joint distributions.  For any $\mu>0$ and $r_i\in \mathcal{R}_i$
\begin{align*}
    \|\mu(\mathbf{r}_{-i}|r_i)-\lambda(\mathbf{r}_{-i}|r_i)\|_1 =& \sum_{\mathbf{r}_{-i}} \left|\frac{\mu(r_i, \mathbf{r}_{-i})}{\mu(r_i)}-\frac{\lambda(r_i, \mathbf{r}_{-i})}{\lambda(r_i)}\right|\\
    \le& \sum_{\mathbf{r}_{-i}} \left|\frac{\mu(r_i, \mathbf{r}_{-i})}{\mu(r_i)}-\frac{\mu(r_i, \mathbf{r}_{-i})}{\lambda(r_i)}\right|+\sum_{\mathbf{r}_{-i}}\left|\frac{\mu(r_i, \mathbf{r}_{-i})}{\lambda(r_i)}-\frac{\lambda(r_i, \mathbf{r}_{-i})}{\lambda(r_i)}\right|\\
    \le& |\mu(r_i)^{-1}-\lambda(r_i)^{-1}|\sum_{\mathbf{r}_{-i}} \left|\mu(r_i, \mathbf{r}_{-i})\right|+\lambda(r_i)^{-1}\sum_{\mathbf{r}_{-i}}\left|\mu(r_i, \mathbf{r}_{-i})-\lambda(r_i, \mathbf{r}_{-i})\right|\\
    \le& |\mu(r_i)^{-1}-\lambda(r_i)^{-1}|+\lambda(r_i)^{-1}\|\mu-\lambda\|_1
\end{align*}
Now it is sufficient to bound the first term by $\|\mu-\lambda\|_1$.  If $\|\mu-\lambda\|_1<\epsilon$, we can take $\epsilon$ smaller than $\min_{r_i}\lambda(r_i)/2$.  Thus, for all $r_i$, we have $|\mu(r_i)-\lambda(r_i)|<\lambda(r_i)/2$, and $|\mu(r_i)^{-1}-\lambda(r_i)^{-1}|= |\mu(r_i)-\lambda(r_i)|/(\mu(r_i)\lambda(r_i))\le \frac{2}{\lambda(r_i)^2}\|\mu-\lambda\|_1$.  Therefore, we have
\begin{equation}\label{eq:ball1}
    \|\mu(\mathbf{r}_{-i}|r_i)-\lambda(\mathbf{r}_{-i}|r_i)\|_1\le \left(\frac{2}{\lambda(r_i)^2}+\frac{1}{\lambda(r_i)}\right)\|\mu-\lambda\|_1\le  \frac{3}{\lambda(r_i)^2}\|\mu-\lambda\|_1
\end{equation}
On the other hand, since $\mathcal{R}_i$ is a finite set and $\lambda>0$, we can pick $\epsilon_0>0$ small enough such that for all $r_i\neq r_i'$
\begin{equation}\label{eq:ball2}
    3\epsilon_0+\langle \lambda(\mathbf{r}_{-i}|r_i), \mathbf{v}^{r_i}\rangle-w^{r_i}<\langle \lambda(\mathbf{r}_{-i}|r_i), \mathbf{v}^{r_i'}\rangle-w^{r_i'}.
\end{equation}
Now for all distinct $r_i, r_i'$ and $\mu\in B(\lambda, \min_{r_i}\frac{\lambda({r_i})^2}{ \|\mathbf{v}^{r_i}\|_\infty}\epsilon_0)$, we have
\begin{align*}
    &\left(\langle \lambda(\mathbf{r}_{-i}|r_i), \mathbf{v}^{r_i'}\rangle-w^{r_i'}\right)-\left(\langle \lambda(\mathbf{r}_{-i}|r_i), \mathbf{v}^{r_i}\rangle-w^{r_i}\right)\\
    =& \langle \lambda(\mathbf{r}_{-i}|r_i), \mathbf{v}^{r_i'}-\mathbf{v}^{r_i}\rangle-w^{r_i'}+w^{r_i}\\
    \ge& \langle \lambda(\mathbf{r}_{-i}|r_i)-\mu(\mathbf{r}_{-i}|r_i), \mathbf{v}^{r_i'}-\mathbf{v}^{r_i}\rangle+3\epsilon_0\tag{by Eqn.~\eqref{eq:ball2}}\\
    \ge& 3\epsilon_0-\|\lambda(\mathbf{r}_{-i}|r_i)-\mu(\mathbf{r}_{-i}|r_i)\|_1\|\mathbf{v}^{r_i}\|_\infty\tag{Holder's inequality}\\
    >& 3\epsilon_0-\frac{3\|\mathbf{v}^{r_i}\|_\infty}{\lambda(r_i)^2}\|\lambda-\mu\|_1\tag{by Eqn.~\eqref{eq:ball1}}\\
    \ge& 0\tag{$\|\mu-\lambda\|_1\le \frac{\lambda(r_i)^2}{ \|\mathbf{v}^{r_i}\|_\infty}\epsilon_0$}
\end{align*}
This completes our proof.
\end{proof}

\section{Missing proofs in Section~\ref{sec:mult}} \label{app:mult}

\subsection{Proof of Theorem~\ref{thm:equiv}}

First it is apparent that a mechanism that guarantees strict truthfulness also guarantees strict truthfulness under consistent strategies.
In the following proposition we show if there is a scoring mechanism so that a multi-task problem $\langle \f, M\rangle$ is elicitable under consistent strategies there is a scoring mechanism such that $\langle \f, M\rangle$ is elicitable under general strategies defined in Definition~\ref{def:strategy}.
\begin{proposition}\label{prop:consistent}
If there exists a scoring rule mechanism $\mathbf{p} = \{p_i^t:i\in [n], t\in [T]\}$ so that a multi-task problem $\langle \f, M\rangle$ is elicitable under consistent strategies,  there is a scoring mechanism such that $\langle \f, M\rangle$ is elicitable under general strategies.
\end{proposition}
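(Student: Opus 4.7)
The plan is to start from the hypothesized scoring mechanism $\p = \{p_i^{(t)}\}$ that is strictly truthful under consistent strategies and construct a new scoring mechanism $\bar{\p}$ whose truthfulness guarantee upgrades to general strategies. The key idea is to symmetrize $\p$ over all relabelings of the $T$ task indices. Because the tasks are i.i.d., this symmetrization will not change any expected payment evaluated at a consistent strategy, so $\bar{\p}$ will inherit strict truthfulness under consistent strategies. Moreover, the symmetry will force the per-task expected payment (as a function of the strategy used for that single task) to be the \emph{same} function for every task, which decouples the $T$ tasks in the agent's best-response problem and thereby yields strict truthfulness under arbitrary, not necessarily consistent, strategies.

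Concretely, for each permutation $\pi$ of $[T]$ I would define
$$(p^\pi)_i^{(t)}\bigl(\tilde{\r}_i^{(t)},\tilde{\r}_{-i}^{(1:T)}\bigr) \ :=\ p_i^{(\pi^{-1}(t))}\bigl(\tilde{\r}_i^{(t)},\tilde{\r}_{-i}^{(\pi(1))},\ldots,\tilde{\r}_{-i}^{(\pi(T))}\bigr),$$
which is still a scoring mechanism since its dependence has the required form, and then set $\bar{\p} := \frac{1}{T!}\sum_{\pi}\p^\pi$, which is therefore also a scoring mechanism. The main computation is to show that for any strategy $\sigma:\mathcal{S}_i\to\Delta(\mathcal{R}_i)$, any $t \in [T]$, and any $\mu\in M$,
$$V^{(t)}(\sigma) \ :=\ \E_\mu\bigl[\bar{p}_i^{(t)}\bigl(\sigma(S_i^{(t)}),\R_{-i}^{(1:T)}\bigr)\bigr] \ =\ \tfrac{1}{T}\,U_{\mathrm{cons}}(\sigma;\p),$$
where $U_{\mathrm{cons}}(\sigma;\p) := \sum_{s=1}^T \E_\mu[p_i^{(s)}(\sigma(S_i^{(s)}),\R_{-i}^{(1:T)})]$ is the total expected payment of agent $i$ under $\p$ when playing the consistent strategy $\sigma$ against truthful opponents. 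The proof of this identity is pure exchangeability: for each $\pi$, relabeling the i.i.d.\ tuple $(S^{(1)},\ldots,S^{(T)})$ via $\pi^{-1}$ turns $S_i^{(t)}$ into $S_i^{(\pi^{-1}(t))}$ and simultaneously turns each $\R_{-i}^{(\pi(u))}$ back into $\R_{-i}^{(u)}$; summing over $\pi$ and using the fact that each $s\in[T]$ is attained by $\pi^{-1}(t)$ in exactly $(T-1)!$ permutations collapses the sum to the right-hand side.

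Granting the identity, the conclusion is immediate. The same exchangeability argument, applied to the total payment instead of a single task, shows $U_{\mathrm{cons}}(\sigma;\p^\pi) = U_{\mathrm{cons}}(\sigma;\p)$ for every $\pi$, hence $U_{\mathrm{cons}}(\sigma;\bar{\p}) = U_{\mathrm{cons}}(\sigma;\p)$, so $\bar{\p}$ is still strictly truthful under consistent strategies and each $V^{(t)}$ is uniquely maximized at $\sigma = f_{i,\mu}$. For any (possibly inconsistent) strategy profile $\sigma_i^{(1:T)}$, the expected total payment of agent $i$ under $\bar{\p}$ against truthful opponents then decomposes as $\sum_{t=1}^T V^{(t)}(\sigma_i^{(t)})$, and every summand is strictly maximized by $\sigma_i^{(t)} = f_{i,\mu}$, so the sum is uniquely maximized precisely when every $\sigma_i^{(t)}$ is truth-telling, giving strict truthfulness of $\bar{\p}$ under general strategies. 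I expect the main obstacle to be the bookkeeping in the exchangeability step: one must track carefully how $\pi$ acts on the agent-$i$ signal slot versus the other agents' report slots, and it is easy to mix up $\pi$ with $\pi^{-1}$ and end up with the wrong identity. Once the relabeling is written out cleanly, no further idea is needed.
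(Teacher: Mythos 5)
Your proposal is correct and follows essentially the same route as the paper's proof: symmetrize the scoring mechanism over permutations of the task indices, use exchangeability of the i.i.d.\ tasks to relate expected payments under the symmetrized mechanism to consistent-strategy payments under the original one, and then invoke strict truthfulness under consistent strategies. The only (immaterial) difference is in the last step: you decouple the objective into per-task terms $V^{(t)}(\sigma_i^{(t)}) = \frac{1}{T}U_{\mathrm{cons}}(\sigma_i^{(t)};\p)$, each uniquely maximized by truth-telling, whereas the paper aggregates the general strategy into the averaged consistent strategy $\hat{\sigma}_i = \frac{1}{T}\sum_t \sigma_i^{(t)}$ and applies consistent-strategy truthfulness to it; by linearity of the expectation in the report distribution these two reductions are identical.
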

\begin{proof}[Proof of Proposition~\ref{prop:consistent}]
We construct such scoring mechanism against general strategies through symmetrization.  

Since $\mathbf{p}$ is a scoring mechanism, agent $i$'s payment $p_i$ can be decomposed as $p_i(\r_i^{(1:T)},\r_{-i}^{(1:T)}) = \sum_{t=1}^T p_i^{(t)}(r_i^{(t)}, \r_{-i}^{(1:T)})$ for all $\r\in \bm{\mathcal{R}}$ where for agent $i$, $p_i^{(t)}$ only depends on his report on task $i$.  We define a new payment $\hat{p}_i$ through symmetrization: Let ${\rm Sym}(T)$ be the collection of all permutations on set $[T]$.
$$\hat{p}_i(\mathbf{r}) := \frac{1}{T!}\sum_{\tau\in [T],\pi\in {\rm Sym}(T)}p_i^{(\tau)}\left(r_i^{(\pi(1))}, r_{-i}^{(\pi(1))}, r_{-i}^{(\pi(2))},\dots ,r_{-i}^{(\pi(T))}\right).$$

First it is easy to see if all agents use consistent strategies $\bm{\sigma}$, the expectation of $\hat{p}_i$ equals the expectation of $p_i$ 
\begin{equation}\label{eq:consistent3}
    \E_{\bm{\sigma}, \mathbf{S}}\left[\hat{p}_i(\tilde{\mathbf{R}}) \right]= \E_{\bm{\sigma}, \mathbf{S}}\left[p_i(\tilde{\mathbf{R}})\right].
\end{equation}
Formally, because when $\bm{\sigma}$ is consistent the distribution of $\tilde{\mathbf{R}}$ is exchangeable on tasks so that the distribution of $(\tilde{R}^{(1)}\dots \tilde{R}^{(T)})$ is identical to the distribution of $(\tilde{R}^{(\pi(1))}\dots \tilde{R}^{(\pi(T))})$ for any permutation $\pi\in {\rm Sym}(T)$, the expectation of $\hat{p}_i$ is the sum of $T!$ identical terms, and Eqn.~\eqref{eq:consistent3} holds. 

Then we show that given any agent $i$'s general strategy $\sigma_i = (\sigma_i^1,\dots,\sigma_i^T)$, if all other agents are truth telling, there exists a consistent 
$$\hat{\sigma}_i := \frac{1}{T}\sum_t \sigma_i^t$$
such that agent $i$'s expected payment of $\hat{p}_i$ under strategy $\sigma_i$ is equal to the expected payment of $p_i$ under $\hat{\sigma}_i$, \begin{equation}\label{eq:consistent2}
    \E_{\sigma_i, \mathbf{S}}\left[\hat{p}_i\left(\tilde{\mathbf{R}}^{(1:T)},\mathbf{R}_{-i}^{(1:T)}\right)\right] = \E_{\hat{\sigma}_i, \mathbf{S}}\left[p_i\left(\tilde{\mathbf{R}}^{(1:T)},\mathbf{R}_{-i}^{(1:T)}\right)\right]
\end{equation}
Note that if Eqn.~\eqref{eq:consistent2} holds, with Eqn.~\eqref{eq:consistent3} we completes the proof.  When every other agents are truth telling, the expected payment under any general nontruthful strategy $\sigma_i$ is strictly less than the expected payment of truth telling,
\begin{align*}
    &\E_{\sigma_i, \mathbf{S}}\left[\hat{p}_i\left(\tilde{\mathbf{R}}^{(1:T)},\mathbf{R}_{-i}^{(1:T)}\right)\right]\\
    =& \E_{\hat{\sigma}_i, \mathbf{S}}\left[p_i\left(\tilde{\mathbf{R}}^{(1:T)},\mathbf{R}_{-i}^{(1:T)}\right)\right]\tag{by \eqref{eq:consistent2}}\\
    <& \E_{\mathbf{S}}\left[p_i\left(\tilde{\mathbf{R}}\right)\right]\tag{$p_i$ is truthful under consistent strategies.}\\
    =& \E_{\mathbf{S}}\left[\hat{p}_i\left(\tilde{\mathbf{R}}\right)\right]\tag{by \eqref{eq:consistent3} and truth telling is an consistent strategy}
\end{align*}
The above inequality is strict, because if $\sigma_i$ is not truthful, the average $\hat{\sigma}_i$ is also not truthful.

Finally, let's prove Eqn.~\eqref{eq:consistent2}.  We set $\mathbf{R}_{-i}^{\otimes (T-1)}$ be a sequence of $T-1$ iid truthful report on a generic tasks, and  $(\mathbf{R}_{-i}^{(\pi(2))},\dots,\mathbf{R}_{-i}^{(\pi(T))})$ has the same distribution as $\mathbf{R}_{-i}^{\otimes (T-1)}$ for any permutation $\pi\in {\rm Sym}(T)$, because the distribution is exchangeable.  With this notion, due to the linearity of expectation, we have 
\begin{align*}
    \E_{\sigma_i, \mathbf{S}}[\hat{p}_i(\tilde{\mathbf{R}})]
    =& \frac{1}{T}\E\left[\sum_{\tau, \tau'}p_i^{(\tau)}\left(\sigma_i^{(\tau')}(S_i^{(\tau')}),\mathbf{R}_{-i}^{(\tau')},\mathbf{R}_{-i}^{\otimes (T-1)}\right)\right]\\
    =& \frac{1}{T}\E_{\sigma_i, \mathbf{S}}\left[\sum_{\tau, \tau'}p_i^{(\tau)}\left(\sigma_i^{\tau'}(S_i'),f_{-i,\mu}(S_{-i}'),\mathbf{R}_{-i}^{\otimes (T-1)}\right)\right]\tag{Let $(S_i', S_{-i}')$ be sampled from $\mu$.}\\
    =& \frac{1}{T}\E_{ \mathbf{S}}\left[\sum_{\tau, \tau'}\sum_{r_i\in \mathcal{R}_i}\Pr\left[\sigma_i^{(\tau')}(S_i') = r_i\right]p_i^{(\tau)}\left(r_i,f_{-i,\mu}(S_{-i}'),\mathbf{R}_{-i}^{\otimes (T-1)}\right)\right]\\
    =& \E_{ \mathbf{S}}\left[\sum_{\tau}\sum_{r_i\in \mathcal{R}_i}\left(\frac{1}{T}\sum_{\tau'}\Pr\left[\sigma_i^{(\tau')}(S_i') = r_i\right]\right)p_i^{(\tau)}\left(r_i,f_{-i,\mu}(S_{-i}'),\mathbf{R}_{-i}^{\otimes (T-1)}\right)\right]\\
    =& \E_{ \mathbf{S}}\left[\sum_{\tau}\sum_{r_i\in \mathcal{R}_i}\Pr\left[\hat{\sigma}_i(S_i') = r_i\right]p_i^{(\tau)}\left(r_i,f_{-i,\mu}(S_{-i}'),\mathbf{R}_{-i}^{\otimes (T-1)}\right)\right]\tag{$\hat{\sigma}_i = \frac{1}{T}\sum_t \sigma_i^t$}\\
    =& \E_{\hat{\sigma}_i,\mathbf{S}}\left[\sum_{\tau}p_i^{(\tau)}\left(\hat{\sigma}_i(S_i'),f_{-i,\mu}(S_{-i}'),\mathbf{R}_{-i}^{\otimes (T-1)}\right)\right] = \E_{\hat{\sigma}_i,\mathbf{S}}\left[p_i(\tilde{\mathbf{R}})\right]
\end{align*}
which completes the proof.
\end{proof}

\subsection{Characterization for polynomial-size scoring mechanisms}\label{app:main_simplified}

Suppose the designer only uses scoring mechanisms with 
$$p_i^{(t)}(\tilde{r}_i^{(t)}, \tilde{\r}_{-i}^{(1:T)}) = p_i^{(t)}(\tilde{r}_i^{(t)}, \tilde{r}_{-i}^{(t)}, Y(\tilde{\r}_{-i}^{(-t)})),$$
where $Y$ is an arbitrary function with range $\mathcal{Y}$. And $\mathcal{Y}$ has a polynomial size.
 We first define  the power diagram constraint for fixed marginal distribution of $\mu(Y(\r_{-i}^{(1:T-1)}))$. The definition is similar to Definition~\ref{def:mult_nec}. To simplify the notation, we write $Y({\r}_{-i}^{(1:T-1)})$ as $Y$.
 
Define $M(\mu(Y))$ to be the set of all distributions  that has marginal distribution of $Y$ equal to $\mu(Y)$,
$$
M(\mu(Y)) = \big\{ \mu' \in M: \mu'(Y) = \mu(Y) \big\}
$$
 Also define $Q_i(r_i, \mu(Y))$ to be the set of participant $i$'s possible posteriors about the others' reports when participant $i$'s truthful report is $r_i$ and the marginal distribution of $Y$ is  $\mu(Y)$,
$$
Q_i(r_i, \mu(Y))=\{ \mu(\r_{-i}|s_i): \mu\in M(\mu(Y)),\ s_i \text{ satisfies } f_{i,\mu}(s_i) = r_i\}.
$$ 
Then  the power diagram constraint for fixed marginal distribution of $\mu(Y)$ is defined as follows.
\begin{definition}
	A multi-task problem $\langle \f, M\rangle$ satisfies the \emph{power diagram constraint for fixed marginal distributions of $\mu(Y)$} if for all $i$ and $\mu(Y)$, $\{ Q_i(r_i, \mu(Y))\}_{r_i \in \mathcal{R}_i}$ can be fitted into a power diagram, which means that there exists a power diagram in dimension $|\mathcal{R}_{-i}|$ with $|\mathcal{R}_i|$ cells defined by $\{\mathbf{v}^{r_i}\in \mathbb{R}^{|\mathcal{R}_{-i}|}: r_i\in \mathcal{R}_i\}$ with associated weights $\{w^{r_i}\in \mathbb{R}:r_i\in \mathcal{R}_i\}$, such that each $Q_i(r_i, \mu(Y))$ falls into a distinct cell,
	$$
Q_i(r_i, \mu(Y)) \subseteq \text{cell}(\mathbf{v}^{r_i}), \text{ for all } r_i \in \mathcal{R}_i.  
	$$
\end{definition}
Then we have the following theorem.
\begin{theorem}
If the designer only uses scoring mechanisms with 
$$p_i^{(t)}(\tilde{r}_i^{(t)}, \tilde{\r}_{-i}^{(1:T)}) = p_i^{(t)}(\tilde{r}_i^{(t)}, \tilde{r}_{-i}^{(t)}, Y(\tilde{\r}_{-i}^{(-t)})),$$
where $Y$ is an arbitrary function with range $\mathcal{Y}$, then a multi-task problem is elicitable if and only if (1) it satisfies the power diagram constraint for fixed marginal distribution of $\mu(Y)$; (2) the site $\mathbf{v}^{r_i}$ for different $\mu(Y)$ is an affine function of $\mu(Y)$ and the weights $w^{r_i}$ for different $\mu(Y)$ is also an affine function of $\mu(Y)$. 
Formally, for every $i\in[n]$ and every $r_i \in \mathcal{R}_i$, there exist a  matrix $\mathbf{D}_{r_i}$ with $|\mathcal{R}_{-i}|$ rows and $|\mathcal{Y}|$ columns and a vector $\mathbf{e}_{r_i}\in \mathbb{R}^{|\mathcal{R}_{-i}|}$ with
	$$
	\mathbf{v}^{r_i}(\mu(Y)) = \mathbf{D}_{r_i} \cdot \mu(Y) + \mathbf{e}_{r_i},
	$$
	and there exists a vector $\mathbf{h}_{r_i}$ with length $|\mathcal{Y}|$ such that 
	$$
	w^{r_i}(\mu(Y)) = \mathbf{h}_{r_i}^\top \cdot \mu(Y).
	$$
	Moreover, if such power diagrams and affine functions exist, we can find a mechanism that is strictly truthful for any $\mu\in M$ with payments defined by the entries of  $\, \mathbf{D}, \mathbf{e}$ and $\mathbf{h}$ as
	\begin{align*} 
p_i^{(t)}(r_i,  \r_{-i}^{(t)}, Y(\tilde{\r}_{-i}^{(-t)})) = -\mathbf{D}_{r_i}[\r_{-i}^{(t)}, Y(\tilde{\r}_{-i}^{(-t)})] \ - \mathbf{e}_{r_i}[\r_{-i}^{(t)}] +\mathbf{h}_{r_i}[Y(\tilde{\r}_{-i}^{(-t)})], \quad \forall r_i \in \mathcal{R}_i,\ \r_{-i}^{(t)},\ \r_{-i}^{(-t)}.
\end{align*}
\end{theorem}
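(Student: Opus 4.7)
The plan is to mirror the proof of Theorem~\ref{thm:mult_suff} almost step-for-step, with the role of the raw joint report $\r_{-i}^{(-t)}$ (and the induced distribution $\mathbf{u}^{\otimes(T-1)}$) replaced by the summary statistic $Y(\r_{-i}^{(-t)})$ (and the induced distribution $\mu(Y) \in \Delta(\mathcal{Y})$). Because the payment only ever sees $\r_{-i}^{(-t)}$ through $Y$, the relevant affine dependence collapses from dimension $|\mathcal{R}_{-i}|^{T-1}$ to dimension $|\mathcal{Y}|$, and the same geometric argument still applies.

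For necessity, I would first fix an agent $i$, a task $t$, and an underlying $\mu \in M$, and compute the expected per-task payoff when $i$ observes $s_i^{(t)}$ and reports $y$. Since the $T$ tasks are i.i.d., $s_i^{(t)}$ is independent of $\R_{-i}^{(-t)}$ (hence of $Y(\R_{-i}^{(-t)})$), so
\begin{align*}
\E_\mu\bigl[p_i^{(t)}(y, \R_{-i}^{(t)}, Y(\R_{-i}^{(-t)})) \mid s_i^{(t)}\bigr]
 = \sum_{\r_{-i}^{(t)}} \mu(\r_{-i}^{(t)} \mid s_i^{(t)}) \sum_{z \in \mathcal{Y}} \mu(Y = z)\, p_i^{(t)}(y, \r_{-i}^{(t)}, z).
\end{align*}
Defining $\alpha_y(\r_{-i}^{(t)}) := \sum_{z} \mu(Y=z)\, p_i^{(t)}(y, \r_{-i}^{(t)}, z)$, the expected payoff is the inner product $\langle \bm{\mu}(\r_{-i}\mid s_i), \bm{\alpha}_y\rangle$. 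Strict truthfulness then forces $\r_i$ to be the unique minimizer of $\langle \bm{\mu}(\r_{-i}\mid s_i), -\bm{\alpha}_y\rangle$ across $y \in \mathcal{R}_i$, which exhibits a power diagram (with sites $\mathbf{v}^y = -\bm{\alpha}_y$ and weights $w^y = 0$) separating $\{Q_i(r_i,\mu(Y))\}_{r_i}$. By construction $\mathbf{v}^y$ is a linear function of the vector $\bm{\mu}(Y) \in \mathbb{R}^{|\mathcal{Y}|}$: taking $\mathbf{D}_y$ to be the $|\mathcal{R}_{-i}| \times |\mathcal{Y}|$ matrix with entries $p_i^{(t)}(y,\r_{-i}^{(t)}, z)$, we get $\mathbf{v}^y(\mu(Y)) = -\mathbf{D}_y \cdot \bm{\mu}(Y)$, establishing the required affine form (with $\mathbf{e}_y = 0$, $\mathbf{h}_y = 0$).

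For sufficiency, assume both conditions hold and define the payment by the formula in the theorem statement. A direct calculation, using linearity of expectation and the conditional independence of $\R_{-i}^{(-t)}$ from $s_i^{(t)}$ under the i.i.d.\ task assumption, gives
\begin{align*}
\E_\mu\bigl[p_i^{(t)}(y, \R_{-i}^{(t)}, Y(\R_{-i}^{(-t)})) \mid s_i^{(t)}\bigr]
 = -\langle \bm{\mu}(\r_{-i}\mid s_i),\, \mathbf{D}_y \cdot \bm{\mu}(Y) + \mathbf{e}_y\rangle + \mathbf{h}_y^\top \bm{\mu}(Y),
\end{align*}
which is exactly $-\langle \bm{\mu}(\r_{-i}\mid s_i), \mathbf{v}^y(\mu(Y))\rangle + w^y(\mu(Y))$. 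The power diagram hypothesis then says that the maximum over $y$ is uniquely attained at the truthful $\r_i$, so the scoring mechanism is strictly truthful for every $\mu \in M$. Summing over tasks completes the argument; consistency vs.\ non-consistency is again handled exactly as in Theorem~\ref{thm:equiv}, since $p_i^{(t)}$ still has scoring form.

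The only step requiring care, which I expect to be the main place to trip up, is verifying that the conditional independence used to split the expectation still holds after replacing $\r_{-i}^{(-t)}$ by $Y(\r_{-i}^{(-t)})$: since $Y$ is a deterministic function of $\r_{-i}^{(-t)}$ and the tasks are i.i.d., $Y(\R_{-i}^{(-t)})$ remains independent of $(s_i^{(t)}, \R_{-i}^{(t)})$, so the factorization goes through unchanged. Everything else is a cosmetic re-indexing of the proof of Theorem~\ref{thm:mult_suff}, with the tensor $\mathbf{u}^{\otimes(T-1)}$ replaced by the length-$|\mathcal{Y}|$ vector $\bm{\mu}(Y)$ and the matrices/vectors $\mathbf{D}_{r_i}, \mathbf{e}_{r_i}, \mathbf{h}_{r_i}$ shrunk accordingly.
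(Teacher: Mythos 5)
Your proposal is correct and follows essentially the same route as the paper, which proves this appendix theorem by repeating the proof of Theorem~\ref{thm:mult_suff} verbatim with the summary statistic $Y(\r_{-i}^{(-t)})$ and its distribution $\bm{\mu}(Y)\in\mathbb{R}^{|\mathcal{Y}|}$ in place of $\r_{-i}^{(-t)}$ and the tensor $\mathbf{u}^{\otimes(T-1)}$; your independence check for $Y(\R_{-i}^{(-t)})$ is exactly the point that makes the factorization carry over. The only cosmetic remark is the sign convention (your derived site is $-\mathbf{D}_y\cdot\bm{\mu}(Y)$ with $\mathbf{D}_y$ the payment matrix, which is the required affine form after renaming), and the per-task argument for strictness should be stated as in the paper: a deviation on some task affects only that task's score, so it strictly lowers the total expected payment.
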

The proof of the theorem is entirely similar to the proof of Theorem~\ref{thm:mult_suff}.

\section{Missing proofs in Section~\ref{sec:linear}}

\subsection{Proof of Theorem~\ref{thm:linear_mult}} \label{app:linear_neg}


Before starting the main proof, we give a useful lemma about distributions with conditional independent signals.
\begin{lemma} \label{lem:pos_cond}
	If a distribution $\mu(\omega, s_1, s_2)$ has conditional independent signals, i.e., $\mu(\omega, s_1, s_2) = \mu(\omega) \mu(s_1|\omega) \mu(s_2|\omega)$, then participant $1$'s posterior about $s_2$ satisfies
	$$
	\mu(s_2|s_1) = \sum_{\omega \in \Omega} \mu(s_2|\omega) \mu(\omega|s_1).
	$$
\end{lemma}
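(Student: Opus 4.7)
The plan is to prove the identity by a direct application of the law of total probability, marginalizing over $\omega$ in participant~$1$'s posterior, and then invoking conditional independence to drop the extra conditioning.

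First, I would write
\begin{equation*}
\mu(s_2 \mid s_1) \;=\; \sum_{\omega \in \Omega} \mu(s_2, \omega \mid s_1) \;=\; \sum_{\omega \in \Omega} \mu(s_2 \mid \omega, s_1)\, \mu(\omega \mid s_1),
\end{equation*}
which is just the chain rule applied to the conditional distribution given $s_1$. Next, I would use the assumption $\mu(\omega, s_1, s_2) = \mu(\omega)\mu(s_1\mid\omega)\mu(s_2\mid\omega)$, which says that $S_1$ and $S_2$ are independent conditional on $\omega$, to conclude $\mu(s_2 \mid \omega, s_1) = \mu(s_2 \mid \omega)$. Substituting this into the previous display yields the claim.

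The only subtle point to handle is what happens when $\mu(\omega \mid s_1) = 0$ for some $\omega$ (or $\mu(s_1) = 0$), in which case the conditional $\mu(s_2 \mid \omega, s_1)$ is undefined; but such terms contribute zero to the sum under the natural convention, so the identity still holds. No further machinery is needed, and there is no real obstacle—the result is a one-line consequence of Bayes' rule plus conditional independence.
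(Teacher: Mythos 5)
Your proof is correct and follows essentially the same route as the paper's: marginalize over $\omega$, apply the chain rule to get $\sum_{\omega}\mu(s_2\mid\omega,s_1)\mu(\omega\mid s_1)$, and use conditional independence to replace $\mu(s_2\mid\omega,s_1)$ by $\mu(s_2\mid\omega)$. Your extra remark about zero-probability terms is a harmless refinement the paper omits.
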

\begin{proof}
	It is because $$\mu(s_2|s_1) = \sum_\omega \mu(s_2, \omega|s_1) = \sum_\omega \mu(s_2|\omega, s_1)\mu(\omega|s_1) = \sum_\omega \mu(s_2|\omega) \mu(\omega|s_1).$$
	Here in the last equation, we've used the conditional independence: $\mu(s_2|\omega, s_1)= \frac{\mu(\omega, s_1, s_2)}{\mu(\omega, s_1)} = \frac{\mu(\omega)\mu(s_1|\omega) \mu(s_2|\omega)}{\mu(\omega)\mu(s_1|\omega)} =\mu(s_2|\omega).$
\end{proof}

Consider a multi-task elicitability problem  $\langle \f, M\rangle$ with two agents and with conditional independent signals with $\r_i = \G \cdot \bm{\mu}(\omega|s_i)$, and the designer is minimally uncertain about the conditional distributions of the participants' signals.
 We show that if rank$([\G^\top, \bm{1}])<|\Omega|$, there exists a pair of  distributions $\mu^*, \mu'\in M$ that will violate the condition in Proposition~\ref{coro:marginal}.

We first choose a distribution $\mu^*\in M$ as follows. 
Since the designer is minimally uncertain about the conditional distributions of the participants' signals, there exists a prior $\mu(\omega)$ and a set of possible conditional distributions for each participant $M_i \subseteq \mathbb{R}^{|\mathcal{S}_i|\times|\Omega|}$, such that the joint distributions generated by them are all possible, and each $M_i$ is open relative to the space of valid conditional distributions $(\Delta\mathcal{S}_i)^{|\Omega|} $.
Suppose that the first agent's set of possible likelihood functions $M_1 \subseteq \mathbb{R}^{|\mathcal{S}_1|\times |\Omega|}$ contains a ball with radius $\varepsilon$ centered at $\mu^*(s_1|\omega)$, and $M_2$ contains a ball centered at $\mu^*(s_2|\omega)$.\footnote{Here $M_i$ contains a ball means that there is a ball $\mathcal{B}\subseteq \mathbb{R}^{|\mathcal{S}_i|\times|\Omega|}$ such that the intersection of the ball and the space of valid conditional distributions is still in $M_i$, i.e.,  $\mathcal{B} \cap (\Delta\mathcal{S}_i)^{|\Omega|} \subseteq M_i$. } 
Let $\mu^*(\omega, s_1, s_2)$ be the joint distribution generated by $\mu(\omega)$, $\mu^*(s_1|\omega)$ and $\mu^*(s_2|\omega)$.
Then for a certain $s_1^*\in \mathcal{S}_1$,
 participant $1$'s possible posterior about $\omega$ after seeing $s_1^*$
 must contain a ball centered at 
$$
\mu^*(\omega|s_1^*) \propto \mu(\omega) \mu^*(s_1^*|\omega)
$$
with some radius $\varepsilon^*>0$. 

We then find another distribution $\mu'\in M$ such that 
\begin{enumerate}[label=(\alph*)]
\item the marginal distribution of $\r_2$ remains the same, $\mu'(\r_2) = \mu^*(\r_2)$;
\item participant $1$'s truthful report after seeing $s_1^*$ is different, $\G\cdot \bm{\mu}'(\omega|s_1^*) \neq \G\cdot \bm{\mu}^*(\omega|s_1^*)$;
\item but participant $1$'s posterior about $\r_2$ after seeing $s_1^*$ is unchanged, $\mu'(\r_2|s_1^*) = \mu^*(\r_2|s_1^*).$
\end{enumerate}
Then according to Proposition~\ref{coro:marginal}, $\r$ is not elicitable.

Since the report function $\r_i = \G \cdot \bm{\mu}(\omega|s_i)$ is nonconstant, there exists a length-$|\Omega|$ vector $\bm{\beta}$  with $\G\cdot \bm{\beta} \neq 0$ and $\bm{1}^\top\cdot \bm{\beta} = 0$. Since the possible posterior of the first agent after seeing $s_1^*$ contains a ball centered at $\mu^*(\omega|s_1^*)$, there exists a small enough $\delta$ such that $\mu'(\omega|s_1^*) = \mu^*(\omega|s_1^*) + \delta \cdot \bm{\beta}$ is a possible posterior of the first agent and condition (b) is satisfied, i.e.,
\begin{align} \label{eqn:linear_eq1}
\G\cdot \bm{\mu}^*(\omega|s_1^*) \neq \G\cdot \bm{\mu}'(\omega|s_1^*).
\end{align}
We will use this vector $\bm{\beta}$ to construct a likelihood function $\mu'(s_2|\omega)\in M_2$ so that (a) and (c) are also satisfied.

Let $\Q^*$ be an $|\mathcal{S}_2|\times |\Omega|$ matrix with $\mu^*(s_2|\omega)$ at row $s_2$ and column $\omega$.
If rank$([\G^\top, \bm{1}])<|\Omega|$, then there exists a length-$|\Omega|$ vector $\bm{\alpha}$ such that $\bm{\alpha} \neq 0$ 
\begin{align*}
    \left[\begin{array}{c} \G \\ \bm{1}^\top\end{array}\right] \text{diag}(\bm{\mu}(\omega))\cdot \bm{\alpha} = 0,
\end{align*}  
where $\text{diag}(\bm{\mu}(\omega))$ represents the $|\Omega|\times |\Omega|$ matrix with the prior $\mu(\omega)$ on the diagonal.
Define matrix $$\mathbf{Q}' = \Q^* + \mathbf{k} \cdot \bm{\alpha}^\top$$ with
$$
\mathbf{k} =  \frac{- \delta\Q^*  \bm{\beta}}{\bm{\alpha}^\top\cdot \bm{\mu}'(\omega|s_1^*)}.
$$
(Here it is WLOG to assume that the denominator is non-zero. The reason is as follows.
WLOG we can assume that $\bm{\alpha}^\top\cdot \bm{\mu}^*(\omega|s_1^*) \neq 0$ because if it is equal to zero, we can shift $\bm{\mu}^*(\omega|s_1^*)$ within the small ball so that it becomes non-zero. Then there must exist small enough $\delta$ such that the denominator
$\bm{\alpha}^\top\cdot \bm{\mu}'(\omega|s_1^*)$ is non-zero.)
It is easy to verify that the vector $\mathbf{k}$ we picked satisfies $\bm{1}^\top \cdot \mathbf{k} = 0$ because $\bm{1}^\top \Q^* = \bm{1}^\top$ and $\bm{1}^\top \bm{\beta} = 0$. Therefore by choosing small enough $\delta$, matrix $\Q' = \Q^* + \mathbf{k} \cdot \bm{\alpha}^\top$ will represent a valid likelihood function $\mu'(s_2|\omega)\in M_2$, because $M_2$ contains a ball with radius $\varepsilon$ centered at $\Q^*$. Let $\mu'(\omega, s_1, s_2)$ be the joint distribution generated by $\mu(\omega), \mu'(s_2|\omega)$ and any $\mu(s_1|\omega)\in M_1$ that will give the desired $\mu'(\omega|s_1^*) = \mu^*(\omega|s_1^*) + \delta \cdot \bm{\beta}$.

We first show that condition (c) is satisfied.
By our construction of $\mathbf{k}$, participant 1's posteriors about participant $2$'s signal $\mu'(s_2|s_1^*)$ will remain the same as $\mu^*(s_2|s_1^*)$. Because by Lemma~\ref{lem:pos_cond}, 
\begin{align*}
   \bm{\mu}'(s_2|s_1^*) = \Q'\cdot \bm{\mu}'(\omega|s_1^*) = & \ (\Q^* + \mathbf{k}\bm{\alpha}^\top) \bm{\mu}'(\omega|s_1^*)\\
    = &\ \Q^* \cdot \bm{\mu}'(\omega|s_1^*) + \mathbf{k}\big(\bm{\alpha}^\top \bm{\mu}'(\omega|s_1^*)\big)\\
    =& \ \Q^* \cdot (\bm{\mu}^*(\omega|s_1^*) +\delta \cdot \bm{\beta}) - \delta \Q^*  \bm{\beta}\\
    = & \ \Q^* \cdot \bm{\mu}^*(\omega|s_1^*)\\
    = & \ \bm{\mu}^*(s_2|s_1^*).
\end{align*}
Notice that this is not equivalent to (c) because we need the posterior about participant $2$'s report $\r_2$ to be unchanged. 
However, it suffices to prove that for all $s_2 \in \mathcal{S}_2$, the  posteriors $\mu^*(\omega|s_2)$ and $\mu'(\omega|s_2)$ lead to the same report $\G\cdot \bm{\mu}^*(\omega|s_2) = \G \cdot \bm{\mu}'(\omega|s_2)$.
By our construction of $\Q'$
$$
\Q' = \Q^* + \mathbf{k} \cdot \bm{\alpha}^\top
$$
and our selection of $\bm{\alpha}$ which guarantees
\begin{align*}
    \left[\begin{array}{c} \G \\ \bm{1}^\top\end{array}\right] \text{diag}(\bm{\mu}(\omega))\cdot \bm{\alpha} = 0,
\end{align*}  
it holds that
\begin{align*}
    \left[\begin{array}{c} \G \\ \bm{1}^\top\end{array}\right] \text{diag}(\bm{\mu}(\omega))\cdot (\Q')^\top = \left[\begin{array}{c} \G \\ \bm{1}^\top\end{array}\right] \text{diag}(\bm{\mu}(\omega))\cdot (\Q^*)^\top.
\end{align*}  
The $l$-th row of the equation imply that for any $s_2\in \mathcal{S}_2$, 
\begin{align*}
	\sum_\omega \G[l,\omega] (\mu(\omega) \mu'(s_2|\omega)) = \sum_\omega\G[l,\omega](\mu(\omega) \mu^*(s_2|\omega)), 
\end{align*}
where $\G[l,\omega]$ represents the element in row $l$ and column $\omega$ of $\G$.
This is equivalent to 
\begin{align}\label{eqn:equi_Q}
	\sum_\omega \G[l,\omega] \mu'(\omega| s_2) \mu'(s_2) = \sum_\omega\G[l,\omega] \mu^*(\omega| s_2) \mu^*(s_2), \ \forall s_2 \in \mathcal{S}_2.
\end{align}
The last row means that for all $s_2 \in \mathcal{S}_2$,
\begin{align*}
	\sum_\omega \mu(\omega)\cdot \mu'(s_2|\omega) = \sum_\omega \mu(\omega)\mu^*(s_2|\omega),
\end{align*}
which is equivalent to 
\begin{align}\label{eqn:equi_mu}
	\mu'(s_2) = \mu^*(s_2), \ \forall s_2 \in \mathcal{S}_2.
\end{align}
Combining \eqref{eqn:equi_Q} and \eqref{eqn:equi_mu}, we get 
\begin{align}\label{eqn:equi_pos}
	\sum_\omega \G[l,\omega] \mu'(\omega| s_2)  = \sum_\omega\G[l,\omega] \mu^*(\omega| s_2), \ \forall s_2 \in \mathcal{S}_2.
\end{align}
This means that $\mu^*(\omega|s_2)$ and $\mu'(\omega|s_2)$ lead to the same report $\G\cdot \bm{\mu}^*(\omega|s_2) = \G \cdot \bm{\mu}'(\omega|s_2)$ for any $s_2\in \mathcal{S}_2$, which completes our proof of condition (c).

Finally, it is easy to show that (a) will also be satisfied. Equations \eqref{eqn:equi_mu} and \eqref{eqn:equi_pos} together imply that the marginal distribution of $\r_2$ remains the same, i.e.,
$$
\mu'(\r_2) = \mu^*(\r_2).
$$

\subsection{Proof of Theorem~\ref{thm:posterior_nec}} \label{app:pos_nec}
First because of conditional independence we have $\mu(\s_{-i}|s_i) = \sum_{\omega\in\Omega} \mu(\omega|s_i) \mu(\s_{-i}|\omega)$, or equivalently,
$$
\mu(\s_{-i}|s_i) = \P_i^\mu \cdot \mu(\omega|s_i).
$$
If there exists $\mu$ and $i$ such that rank$(\P^\mu_i)<|\Omega|$, then the columns of $\P^\mu_i$ are linearly dependent. So there exists a non-zero vector $\mathbf{a}$ such that $$\P^\mu_i \cdot \mathbf{a}= 0.$$ Since the set of possible likelihood function $M_i$ contains a ball with radius $\varepsilon$,\footnote{Here $M_i$ contains a ball means that there is a ball $\mathcal{B}\subseteq \mathbb{R}^{|\mathcal{S}_i|\times|\Omega|}$ such that the intersection of the ball and the space of valid conditional distributions is still in $M_i$, i.e.,  $\mathcal{B} \cap (\Delta\mathcal{S}_i)^{|\Omega|} \subseteq M_i$. } there must exists a small enough $\delta$ such that $\bm{\mu}(\omega|s_i) + \delta \mathbf{a}$ is a possible prediction of the state. More specifically, there exists $\tilde{\mu}\in M$ with the same prior $\tilde{\mu}(\omega)= \mu(\omega)$ and 
$$
\tilde{\mu}(\omega|s_i) = \mu(\omega|s_i) + \delta \mathbf{a}. 
$$
In addition, by our definition of conditional independent knowledge, $\tilde{\mu}$ can further have the same likelihood function for other agents $\tilde{\mu}(\s_{-i}|\omega) = \mu(\s_{-i}|\omega)$, which leads to 
$$
\tilde{\mu}(\s_{-i}|s_i) = \P_i^\mu \cdot \tilde{\mu}(\omega|s_i) = \P_i^\mu \left( \mu(\omega|s_i) + \delta \alpha\right) =  \P_i^\mu \cdot \mu(\omega|s_i) = \mu(\s_{-i}|s_i).
$$ 
This is equivalent to 
$$
\tilde{\mu}(\r_{-i}|s_i) = \mu(\r_{-i}|s_i)
$$
because $\tilde{\mu}(\omega|s_j)$ remains the same for all $j\neq i$. So agent $i$'s posterior about others' reports is the same but the truthful report becomes different $\tilde{\mu}(\omega|s_i) \neq \mu(\omega|s_i)$. This violates the necessary condition for single-tasks problems, i.e., the robust stochastic relevance condition Proposition~\ref{thm:necessary_1}, which requires
$$
Q_i(\tilde{\mu}(\omega|s_i)) \cap Q_i(\mu(\omega|s_i)) = \emptyset.
$$
It also violates the necessary condition for multi-task problems Proposition~\ref{thm:posterior_nec} since we have $\mu(\r_{-i})= \tilde{\mu}(\r_{-i})$.

\end{document}